
\documentclass[journal]{IEEEtran}
\IEEEoverridecommandlockouts
\usepackage{amsmath,amsfonts,amssymb,euscript, graphicx, 
epsfig,
enumerate,float,afterpage, subfigure, ifthen}%

\newtheorem{thm}{Theorem}

\newtheorem{lem}{Lemma}
\newtheorem{defn}{Definition}
\newtheorem{exer}{Exercise}

\newcommand{\expect}[1]{\mathbb{E}\left\{#1\right\}}
\newcommand{\defequiv}{\mbox{\raisebox{-.3ex}{$\overset{\vartriangle}{=}$}}}

\newcommand{\bv}[1]{{\boldsymbol{#1} }}
\newcommand{\script}[1]{{{\cal{#1} }}}

\begin{document}

\title
  {Stability and Capacity Regions for Discrete Time Queueing Networks}
\author{Michael J. Neely
\thanks{Michael J. Neely is with the  Electrical Engineering department at the University
of Southern California, Los Angeles, CA. (web: http://www-rcf.usc.edu/$\sim$mjneely).} 
\thanks{This material is supported in part  by one or more of 
the following: the DARPA IT-MANET program
grant W911NF-07-0028, 
the NSF Career grant CCF-0747525, and continuing through participation in the 
Network Science Collaborative Technology Alliance sponsored
by the U.S. Army Research Laboratory.}}

\markboth{}{Neely}

\maketitle

\begin{abstract}   
We consider stability and network capacity in discrete time queueing
systems. Relationships between four common notions of stability are
described.  Specifically, we consider rate stability, mean rate stability, 
steady state stability, and strong stability. 
We then consider networks of queues with random events
and control actions that can be implemented over time to affect arrivals 
and service at the queues.  The control actions also generate a vector of additional
\emph{network attributes}.  We characterize the \emph{network capacity region}, 
being the closure of the set of all rate vectors that can be supported subject to network
stability and to additional time average attribute constraints.   
We show that (under mild technical assumptions) the capacity region is the
same under all four stability definitions.  Our capacity achievability proof 
uses the drift-plus-penalty method of Lyapunov optimization, and provides
full details for the case when network states obey a \emph{decaying memory property}, 
which holds for finite state ergodic systems and more general systems. 
\end{abstract} 

\begin{keywords} Queueing analysis, 
opportunistic scheduling, flow control, wireless networks
\end{keywords} 

\section{Introduction} 

This paper considers stability and network capacity in discrete time 
queueing systems.   These issues arise in the analysis and control 
of modern data networks, including wireless and ad-hoc mobile networks, 
and are also important in many other application areas.  We have found
that many researchers have questions about the relationships between 
the different types of stability that can be used for network analysis. 
This paper is written to address those questions by  providing  details on
stability that are mentioned in other papers but are not proven due to lack of 
space. 

We consider the four most common types of stability from the literature:  rate 
stability, mean rate stability, steady state stability, and strong  stability.  
We first show that, under mild technical assumptions, strong stability 
implies the other three, and hence can be viewed as the strongest definition
among the four.  Conversely, we show  that 
mean rate stability is the weakest definition, in that (under mild technical assumptions)
it is implied by the other three.   We also briefly describe additional stability definitions, 
such as existence of a steady state workload distribution as in 
 \cite{baccelli-book}\cite{queue-mazumdar}\cite{bertsekas-data-nets} (often analyzed
 with Markov chain theory and Lyapunov drift theory  
 \cite{asmussen-prob}\cite{foss-stability}\cite{kumar-meyn-stability}\cite{now} and/or
 fluid models \cite{dai-fluid}\cite{dai-balaji-lyap}), and
 discuss their relationships to the main four.

We then consider control for a 
general stochastic multi-queue network.  The network operates in discrete time with 
timeslots $t \in\{0, 1, 2, \ldots\}$.  Control actions are made on 
each slot in reaction to random network events, such as random traffic arrivals
or channel conditions.  The control actions and network events affect arrivals and service
in the queues, and also generate a vector of \emph{network attributes}, being additional
penalties or rewards associated with the network (such as power expenditures, packet drops, etc.). 
We assume the system satisfies the mild technical assumptions 
needed for the above stability results.  We further assume that network events
have a \emph{decaying memory property}, 
a property typically exhibited by finite state 
ergodic systems as well as more general
systems. 
As in \cite{now}, 
we define the \emph{network capacity region} $\Lambda$
as the closure of the set of all traffic rate vectors that can be supported 
subject to network stability and to an additional set of time average attribute constraints. 
 We show that if traffic rates are outside
of the set $\Lambda$, then under any algorithm there must be at least one queue that is 
not mean rate stable.  Because mean rate stability is the \emph{weakest} definition, it 
follows that 
the network cannot be stable under any of the four definitions if traffic rates are outside of $\Lambda$. 
Conversely, we show that if the traffic rate vector is an  interior point of the set 
$\Lambda$, then it is possible to design 
an algorithm that makes all queues strongly stable (and hence it is also possible to achieve stability for the other three stability definitions). 
Because the capacity region is defined as a closure, it follows that it is invariant under any of
these four stability definitions.  

As an example, consider a simple discrete time $GI/GI/1$ queue with fixed size packets and
arrivals $a(t)$ that
are i.i.d. over slots 
with $\expect{a(t)} = \lambda$ packets/slot,  and independent 
time varying service rates $b(t)$ that 
are i.i.d. over slots with $\expect{b(t)} = 1/2$ packets/slot.   The ``mild technical assumptions''
that we impose here are that the second moments of the $a(t)$ and $b(t)$ processes are
finite.  In this setting, 
the capacity region is the
set of all arrival rates $\lambda$ such that $0 \leq \lambda \leq 1/2$.  It turns out that
the queue is rate stable (and mean rate stable)  if and only if $\lambda \leq 1/2$.
However, for steady state stability and strong stability we typically require $\lambda < 1/2$ (with
an exception in certain deterministic special cases).  Thus, the set of all rates for which the
queue is stable differs only by one point (the point $\lambda = 1/2$) under the four different
definitions of stability, and the closure of this set is identical for all four. 
There are indeed alternative (problematic) definitions of ``stability'' 
that would give rise to a different capacity region, although these 
are typically not used for networks and are not considered here 
(see an example in \cite{neely-thesis} of a problematic 
definition that says a queue is ``empty-stable'' if it empties infinitely often, and see Section \ref{section:problematic} for another problematic example).  

The above 1-queue example considers $a(t)$ and $b(t)$ processes that are i.i.d. over slots. 
However, our stability and capacity region analysis is more general and is presented in
terms of processes that are possibly non-i.i.d. over slots, assuming only that they have well
defined time averages with a mild ``decaying memory''
property.  We show that the capacity region is achievable using a strong drift-plus-penalty
method that we derived in previous papers \cite{neely-thesis}\cite{now}\cite{neely-energy-it}\cite{neely-fairness-infocom05}.    This method
treats joint stability and performance optimization, and extends the pioneering work on network 
stability in \cite{tass-radio-nets}\cite{tass-server-allocation}.
 These results are 
easier to derive in a context when network arrival and channel vectors are i.i.d. over slots. 
The prior work on network stability in \cite{tass-one-hop} treats general Markov-modulated channels.
Arrivals and channels with a ``decaying memory property'' are treated for stability in 
\cite{neely-power-network-jsac}\cite{neely-thesis}. 
Joint stability and utility optimization are considered for non-i.i.d. models
in 
\cite{rahul-cognitive-tmc}\cite{longbo-profit-allerton07}\cite{neely-maximal-bursty-ton}\cite{neely-mesh} 
for different types of networks.  This 
paper provides the full details for the non-i.i.d. case of joint stability and utility 
optimization for general networks with general time average
constraints.  
A more general ``universal scheduling'' model
is treated in \cite{neely-universal-scheduling}, which uses sample path analysis and 
considers possibly non-ergodic systems with no probability model, although the fundamental 
concept of a ``capacity region'' does not make sense in such a  context. 

\section{Queues}

Let $Q(t)$ represent the contents of a single discrete time queueing system defined over
integer timeslots $t \in \{0, 1, 2, \ldots\}$.   Specifically, the initial state $Q(0)$ is assumed to be a non-negative  
real valued random variable. 
Future states are driven by stochastic arrival and server processes $a(t)$ and $b(t)$ 
according to the following dynamic equation: 
\begin{equation} \label{eq:q-dynamics} 
 Q(t+1) = \max[Q(t) - b(t), 0] + a(t) \: \: \mbox{ for $t \in \{0, 1, 2, \ldots\}$} 
 \end{equation} 
 We call $Q(t)$ the \emph{backlog} on slot $t$, as it can represent an
 amount of work that needs to be done. 
The stochastic processes $\{a(t)\}_{t=0}^{\infty}$ and $\{b(t)\}_{t=0}^{\infty}$ are sequences of real valued
random variables defined over slots $t \in \{0, 1, 2, \ldots\}$.  

The value of 
$a(t)$  represents the amount of new work that arrives on slot $t$, and is assumed to be non-negative. 
The value of $b(t)$ represents the amount of work the server of the queue can process on slot $t$.  For most
physical queueing systems, $b(t)$ is assumed to be non-negative, although it is sometimes 
convenient to allow $b(t)$ to take negative values.  This is useful for the 
\emph{virtual queues} defined in future sections, where $b(t)$ can be interpreted as a (possibly negative) attribute.\footnote{Assuming that
the $b(t)$ value in (\ref{eq:q-dynamics}) is
possibly negative also allows treatment of modified queueing models that 
place new arrivals inside the $\max[\cdot, 0]$
operator.  For example, a queue with dynamics $\hat{Q}(t+1) = \max[\hat{Q}(t) - \beta(t) + \alpha(t), 0]$ is the 
same as (\ref{eq:q-dynamics}) with $a(t) = 0$ and $b(t) = \beta(t) - \alpha(t)$ for all $t$. Leaving $a(t)$ outside
the $\max[\cdot, 0]$ is crucial for treatment of multi-hop networks, where $a(t)$ can be a sum of exogenous and endogenous
arrivals.} 
Because we assume $Q(0) \geq 0$ and $a(t) \geq 0$ for all slots $t$, it is clear from (\ref{eq:q-dynamics}) 
that $Q(t) \geq 0$ for all slots $t$. 

The units of $Q(t)$, $a(t)$, and $b(t)$ depend on the context of the system.  For example, in a communication
system with fixed size data units, these quantities might be integers with units of \emph{packets}.  Alternatively, 
they might be real numbers with units of \emph{bits}, \emph{kilobits}, or some other unit of unfinished work 
relevant to the system.  

We can equivalently re-write the dynamics (\ref{eq:q-dynamics}) without the non-linear $\max[\cdot, 0]$ operator
as follows: 
\begin{equation} \label{eq:q-dynamics-tilde} 
 Q(t+1) = Q(t) - \tilde{b}(t) + a(t)  \: \: \mbox{ for $t \in \{0, 1, 2, \ldots\}$} 
 \end{equation} 
where $\tilde{b}(t)$ is the actual work processed on slot $t$ (which may be less than the offered
amount $b(t)$ if there is little or no backlog in the system at slot $t$).  Specifically, $\tilde{b}(t)$ 
is mathematically defined: 
\[ \tilde{b}(t) \defequiv \min[b(t), Q(t)] \]
Note by definition that $\tilde{b}(t) \leq b(t)$ for all $t$.  
The dynamic equation (\ref{eq:q-dynamics-tilde}) yields a 
simple but important property for all sample paths, described in the following lemma. 

\begin{lem} (Sample Path Property) 
For any discrete time queueing system  described by 
(\ref{eq:q-dynamics}),  and for any two 
slots $t_1$ and $t_2$ such that $0 \leq t_1 < t_2$, we have: 
\begin{eqnarray} 
Q(t_2) - Q(t_1) &=& \sum_{\tau=t_1}^{t_2-1} a(\tau) - \sum_{\tau=t_1}^{t_2-1} \tilde{b}(\tau) \label{eq:io-tilde} \\
Q(t_2) - Q(t_1) &\geq& \sum_{\tau=t_1}^{t_2-1} a(\tau) - \sum_{\tau=t_1}^{t_2-1} b(\tau) \label{eq:io-b2} 
\end{eqnarray} 
Therefore, for any $t>0$ we have:  
\begin{eqnarray} 
 \frac{Q(t)}{t} - \frac{Q(0)}{t} &=& \frac{1}{t} \sum_{\tau=0}^{t-1} a(\tau) - \frac{1}{t}\sum_{\tau=0}^{t-1} \tilde{b}(\tau) \label{eq:illuminate} \\
 \frac{Q(t)}{t} - \frac{Q(0)}{t} &\geq& \frac{1}{t} \sum_{\tau=0}^{t-1} a(\tau) - \frac{1}{t}\sum_{\tau=0}^{t-1} b(\tau) \label{eq:illuminate2} 
 \end{eqnarray}  
\end{lem}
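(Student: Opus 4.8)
The plan is to derive all four displayed relations from the equivalent ``linearized'' recursion (\ref{eq:q-dynamics-tilde}) by a single telescoping-sum argument, with no probability needed since everything holds pathwise given a realization of $\{a(\tau)\}$ and $\{b(\tau)\}$.

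First I would record the elementary identity $Q(\tau) - \tilde{b}(\tau) = \max[Q(\tau) - b(\tau), 0]$, which follows from $\tilde{b}(\tau) = \min[b(\tau), Q(\tau)]$ together with the fact that $x - \min[y,x] = \max[x-y,0]$ for all real $x,y$. This confirms that (\ref{eq:q-dynamics-tilde}) is a faithful rewriting of (\ref{eq:q-dynamics}), and hence that $Q(\tau+1) - Q(\tau) = a(\tau) - \tilde{b}(\tau)$ for every slot $\tau$. Note that the standing assumptions $Q(0) \geq 0$ and $a(\tau) \geq 0$ guarantee $Q(\tau) \geq 0$ for all $\tau$, so $\tilde{b}(\tau)$ is well defined on every sample path.

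Next I would sum the one-step increment identity over $\tau \in \{t_1, t_1+1, \ldots, t_2-1\}$. The left-hand side telescopes to $Q(t_2) - Q(t_1)$, and the right-hand side is $\sum_{\tau=t_1}^{t_2-1} a(\tau) - \sum_{\tau=t_1}^{t_2-1} \tilde{b}(\tau)$, which is exactly (\ref{eq:io-tilde}). For (\ref{eq:io-b2}) I would use $\tilde{b}(\tau) = \min[b(\tau),Q(\tau)] \leq b(\tau)$, so that $-\sum_{\tau=t_1}^{t_2-1}\tilde{b}(\tau) \geq -\sum_{\tau=t_1}^{t_2-1} b(\tau)$; substituting this into (\ref{eq:io-tilde}) yields the inequality. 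Finally, (\ref{eq:illuminate}) and (\ref{eq:illuminate2}) are the special case $t_1 = 0$, $t_2 = t$ of (\ref{eq:io-tilde}) and (\ref{eq:io-b2}), divided through by $t$.

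There is essentially no hard step; the only thing requiring care is the index bookkeeping in the telescoping sum, namely that it runs over the $t_2 - t_1$ slots $\tau = t_1, \ldots, t_2-1$, so that it captures the increments $Q(t_1+1)-Q(t_1)$ up through $Q(t_2)-Q(t_2-1)$ and no others. Everything else is an immediate consequence of the definitions.
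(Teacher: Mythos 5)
Your proof is correct and follows the same route as the paper: rewrite the dynamics via the linearized recursion $Q(\tau+1)-Q(\tau)=a(\tau)-\tilde{b}(\tau)$, telescope over $\tau\in\{t_1,\ldots,t_2-1\}$ to obtain (\ref{eq:io-tilde}), use $\tilde{b}(\tau)\leq b(\tau)$ to get (\ref{eq:io-b2}), and specialize to $t_1=0$, $t_2=t$, dividing by $t$. The only (harmless) addition is your explicit verification that (\ref{eq:q-dynamics-tilde}) is equivalent to (\ref{eq:q-dynamics}), which the paper states without proof just before the lemma.
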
 
\begin{proof} 
By (\ref{eq:q-dynamics-tilde}) we have for any slot  $\tau\geq 0$: 
\[ Q(\tau+1) - Q(\tau) = a(\tau) - \tilde{b}(\tau) \]
Summing the above over $\tau \in \{t_1, \ldots, t_2 -1\}$ and using  telescoping
sums yields: 
\[ Q(t_2) - Q(t_1) =\sum_{\tau=t_1}^{t_2-1} a(\tau)  - \sum_{\tau=t_1}^{t_2-1} \tilde{b}(\tau) \]
This proves (\ref{eq:io-tilde}). Inequality (\ref{eq:io-b2}) follows because $\tilde{b}(\tau) \leq b(\tau)$
for all $\tau$. Inequalities  (\ref{eq:illuminate}) and (\ref{eq:illuminate2}) 
follow by substituting $t_1 =0$, $t_2 = t$,
and dividing by $t$. 
\end{proof} 

The equality (\ref{eq:illuminate}) is illuminating.  It shows that
$Q(t)/t\rightarrow0$ as $t \rightarrow \infty$ 
if and only if the time average of the process  $a(t) - \tilde{b}(t)$ is zero 
(where the time average of $a(t) - \tilde{b}(t)$ 
is the limit of the right hand side of (\ref{eq:illuminate})). 
This happens when  the time average rate of arrivals $a(t)$ 
is equal to the time average rate of actual 
departures $\tilde{b}(t)$. This motivates the 
definitions of \emph{rate stability} and \emph{mean rate stability}, defined in the next section. 

\section{Rate Stability} 
Let $Q(t)$ be the backlog process in a discrete time queue.  We assume only that 
$Q(t)$ is non-negative and evolves over slots
$t \in \{0, 1, 2, \ldots\}$ according to some probability law.\footnote{All of our stability definitions
can be extended to treat discrete time stochastic processes $Q(t)$ 
that can possibly be negative by substituting $|Q(t)|$ into the definitions, which is sometimes
useful in contexts (not treated here) where virtual queues can be possibly negative, as 
in \cite{neely-universal-scheduling}\cite{neely-mwl-ita}.}  

\begin{defn}   \label{def:rate-stable} A discrete time queue $Q(t)$ is \emph{rate stable} if: 
\[ \lim_{t\rightarrow\infty} \frac{Q(t)}{t} = 0 \: \: \mbox{ with probability 1} \]
\end{defn}  

\begin{defn}  \label{def:mean-rate-stable} A discrete time queue $Q(t)$ is \emph{mean rate stable} if: 
\[ \lim_{t\rightarrow\infty} \frac{\expect{Q(t)}}{t} = 0 \]
\end{defn}

%
Neither rate stability nor mean rate stability implies the other (see counter-examples
given in Section \ref{section:counterexamples}). However, rate
stability implies mean rate stability under the following mild technical assumptions.  

\begin{thm} \label{thm:rs-implies-mrs} 
(Rate Stability \& Bounding Assumptions Implies Mean Rate Stability) 
Consider a queue $Q(t)$ with dynamics (\ref{eq:q-dynamics}), with $b(t)$ real valued
(possibly negative) and $a(t)$ non-negative.  Suppose that $Q(t)$ is rate stable. 

a) Suppose there are finite constants $\epsilon>0$ and $C>0$ such that $\expect{(a(t)+b^{-}(t))^{1+\epsilon}} \leq C$ for all $t$, where $b^-(t)$ is defined: 
\begin{equation} \label{eq:b-minus} 
  b^-(t) \defequiv -\min[b(t), 0]  
  \end{equation} 
Then $Q(t)$ is mean rate stable. 

b) Suppose  there is a non-negative random variable $Y$ with $\expect{Y}  < \infty$ and
such that for all $t \in \{0, 1, 2, \ldots\}$
we have:  
\begin{eqnarray}
\expect{a(t)+b^-(t)|a(t)+b^-(t)>y}Pr[a(t)+b^-(t)>y]  \nonumber \\
\leq \expect{Y|Y>y}Pr[Y>y]  \: \: \forall y \in \mathbb{R}  \label{eq:stoch-greater} 
\end{eqnarray} 
Then $Q(t)$ is mean rate stable.
\end{thm}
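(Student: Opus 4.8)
The plan is to leverage the sample path inequality (\ref{eq:illuminate2}) from the Sample Path Property Lemma together with rate stability. From (\ref{eq:illuminate2}) we have, for every $t>0$,
\[
\frac{Q(t)}{t} \leq \frac{Q(0)}{t} + \frac{1}{t}\sum_{\tau=0}^{t-1} a(\tau) + \frac{1}{t}\sum_{\tau=0}^{t-1} b^-(\tau),
\]
since $-b(\tau) \leq b^-(\tau)$. Thus $Q(t)/t$ is dominated by a non-negative random variable $W(t) \defequiv Q(0)/t + \frac{1}{t}\sum_{\tau=0}^{t-1}(a(\tau)+b^-(\tau))$. Rate stability gives $Q(t)/t \to 0$ almost surely; to conclude mean rate stability, i.e. $\expect{Q(t)}/t \to 0$, it suffices to pass the limit through the expectation, which calls for a uniform integrability argument for the family $\{Q(t)/t\}_{t>0}$. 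Since $0 \le Q(t)/t \le W(t)$, it is enough to show $\{W(t)\}_{t>0}$ is uniformly integrable.

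For part (a), the hypothesis $\expect{(a(t)+b^-(t))^{1+\epsilon}} \le C$ for all $t$ gives a uniform $L^{1+\epsilon}$ bound on each summand. By convexity of $x \mapsto x^{1+\epsilon}$ (Jensen applied to the average), $\expect{W(t)^{1+\epsilon}}$ is bounded by a constant depending only on $C$, $\epsilon$, and $\expect{Q(0)^{1+\epsilon}}$ — and here one should note the statement implicitly needs $Q(0)$ to have a finite $(1+\epsilon)$ moment, or more carefully, one can split off the $Q(0)/t$ term which tends to $0$ a.s.\ and is bounded by $Q(0)$, handling it separately provided $\expect{Q(0)}<\infty$ (which follows from rate stability plus the moment bound via (\ref{eq:io-b2}) at $t_1=0$, $t_2=1$, giving $Q(1) \ge Q(0) - b^-(0)$, hence... actually one gets $\expect{Q(0)} \le \expect{Q(1)} + \expect{b^-(0)} < \infty$ since $\expect{Q(1)}<\infty$). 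A uniform bound $\sup_t \expect{W(t)^{1+\epsilon}} < \infty$ implies uniform integrability of $\{W(t)\}$, hence of $\{Q(t)/t\}$, and then a.s.\ convergence to $0$ plus uniform integrability yields $L^1$ convergence, i.e.\ $\expect{Q(t)}/t \to 0$.

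For part (b), the stochastic-dominance-type hypothesis (\ref{eq:stoch-greater}) is exactly the condition under which $\{a(t)+b^-(t)\}_t$ is uniformly integrable, dominated in the relevant tail-integral sense by the single integrable variable $Y$: the quantity $\expect{X; X>y} \le \expect{Y; Y>y}$ for all $y$ and all $t$, and $\expect{Y; Y > y} \to 0$ as $y \to \infty$ by $\expect{Y} < \infty$, which is the definition of uniform integrability of the family. Uniform integrability is preserved under averaging (Cesàro means of a u.i.\ family are u.i.), so $\{\frac{1}{t}\sum_{\tau=0}^{t-1}(a(\tau)+b^-(\tau))\}_t$ is u.i.; adding the $Q(0)/t$ term (bounded by the integrable $Q(0)$, finiteness of $\expect{Q(0)}$ argued as above) keeps u.i.; domination transfers u.i.\ to $\{Q(t)/t\}$; and again a.s.\ convergence plus u.i.\ gives convergence in mean.

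The main obstacle is the bookkeeping around uniform integrability — specifically, verifying that the Cesàro-average of a uniformly integrable sequence is uniformly integrable (true, but needs the standard $\sup$-over-tail-integrals characterization), and cleanly handling the $Q(0)/t$ term and the finiteness of $\expect{Q(0)}$, which is not assumed outright but must be extracted from rate stability plus the moment/domination hypotheses. Once uniform integrability of $\{Q(t)/t\}$ is in hand, the conclusion is the routine fact that a.s.\ convergence together with uniform integrability implies $L^1$ convergence.
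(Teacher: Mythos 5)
Your overall route is sound and, at bottom, it is the same argument as the paper's, just packaged through standard uniform-integrability machinery: the paper conditions on the event $\script{E}_t=\{Q(t)/t>\delta\}$ and bounds $\expect{Q(t)/t|\script{E}_t}Pr[\script{E}_t]$ directly --- via a H\"older/Jensen estimate for part (a) and via truncation at a level $x$ together with the tail bound of $Y$ for part (b) --- using only $Pr[\script{E}_t]\rightarrow 0$; you instead dominate $Q(t)/t$ by the Ces\`aro average of $a(\tau)+b^-(\tau)$, observe that the hypotheses of (a) and (b) are uniform-integrability criteria (uniformly bounded $(1+\epsilon)$ moments, respectively uniform tail domination by an integrable $Y$), note that u.i.\ passes to Ces\`aro means and to dominated families, and invoke ``a.s.\ convergence plus u.i.\ implies $L^1$ convergence.'' Those u.i.\ steps are all correct, and modulo the initial-condition issue below your proof is complete.

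The genuine gap is your treatment of $Q(0)$. Your parenthetical derivation of $\expect{Q(0)}<\infty$ fails on two counts. First, the inequality (\ref{eq:io-b2}) with $t_1=0$, $t_2=1$ gives $Q(1)\geq Q(0)+a(0)-b(0)$, i.e.\ $Q(0)\leq Q(1)+b(0)-a(0)\leq Q(1)+\max[b(0),0]$; the slack term is the \emph{positive} part of $b(0)$, not $b^-(0)$, and the hypotheses of the theorem control only $a(t)+b^-(t)$, so this bound is useless. Second, the finiteness of $\expect{Q(1)}$ that you invoke is not known a priori --- it is essentially of the same nature as what is being proved. In fact no derivation of $\expect{Q(0)}<\infty$ from the stated hypotheses is possible: take $a(t)=b(t)=0$ for all $t$ and $Q(0)$ non-negative with infinite mean; then $Q(t)=Q(0)$ for all $t$, the queue is rate stable, the hypotheses of both (a) and (b) hold trivially, yet $\expect{Q(t)}/t=\infty$ for every $t$, so the queue is not mean rate stable. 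Thus integrability of $Q(0)$ (or $Q(0)=0$) is a genuinely additional assumption; the paper's own proof imposes it implicitly by assuming ``for simplicity'' that $Q(0)=0$. The fix is simply to state $\expect{Q(0)}<\infty$ (or $Q(0)=0$) as an assumption, after which your handling of the $Q(0)/t$ term (dominated by the integrable $Q(0)$, hence u.i.) and the rest of your argument go through.
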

\begin{proof} 
See Appendix A. 
\end{proof}

We note that the condition (\ref{eq:stoch-greater}) holds whenever 
the random variable $Y$ is \emph{stochastically greater than or equal to} 
the random variable $a(t) + b^-(t)$ for all $t$ \cite{ross}.  This condition also trivially
holds whenever $a(t) + b^-(t)$ is stationary, having the same probability distribution for all slots $t$
but not necessarily being i.i.d. over slots, and satisfies $\expect{a(0) + b^-(0)} < \infty$.  
This is because, in this stationary case, 
we can use $Y = a(0) + b^-(0)$.  The condition in part (a) does not require stationarity, but  
requires
a uniform bound on the ``$(1+\epsilon)$'' moment for some $\epsilon>0$.  This certainly 
holds whenever the second moments of 
$a(t) + b^{-}(t)$ are bounded by some finite constant $C$ for all $t$
(so that $\epsilon=1$), as assumed in our network analysis of Section \ref{section:network}. 

The next theorem gives intuition on rate stability and mean rate stability for queues with 
well defined time average arrival and server rates. 

\begin{thm}  \label{thm:rate-stability} (Rate Stability Theorem) Suppose $Q(t)$ evolves according to (\ref{eq:q-dynamics}), 
with $a(t) \geq 0$ for all $t$, and with $b(t)$ real valued (and possibly negative) for all $t$. 
Suppose that the time averages of the processes $a(t)$ and $b(t)$  converge with probability $1$ to finite constants
$a_{av}$ and $b_{av}$,  so that: 
\begin{eqnarray} 
\lim_{t\rightarrow\infty} \frac{1}{t}\sum_{\tau=0}^{t-1} a(\tau) = a_{av} &  \mbox{ with probability $1$} \label{eq:Aav} \\
\lim_{t\rightarrow\infty} \frac{1}{t}\sum_{\tau=0}^{t-1} b(\tau) = b_{av} & \mbox{ with probability $1$} \label{eq:muav} 
\end{eqnarray} 
Then: 

(a)  $Q(t)$ is rate stable if and only if $a_{av} \leq b_{av}$.  

(b) If $a_{av} > b_{av}$, then: 
\[ \lim_{t\rightarrow\infty} \frac{Q(t)}{t} = a_{av} - b_{av} \: \: \mbox{ with probability 1} \]

(c) Suppose there are finite constants $\epsilon>0$ and $C>0$ such 
that $\expect{(a(t)+b^{-}(t))^{1+\epsilon}} \leq C$ for all $t$, where $b^-(t)$ is defined
in (\ref{eq:b-minus}). 
Then 
$Q(t)$ is mean rate stable if and only if $a_{av} \leq b_{av}$.  

(d) Suppose there is a non-negative random variable $Y$ with $\expect{Y}  < \infty$ and
such that condition (\ref{eq:stoch-greater}) holds for all $t \in \{0, 1, 2, \ldots\}$. 
Then $Q(t)$ is mean rate stable if and only if $a_{av} \leq b_{av}$. 
\end{thm}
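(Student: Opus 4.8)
The plan is to obtain everything from a single pathwise computation of $Q(t)/t$, and then read off parts (c) and (d) from (a), (b), and Theorem~\ref{thm:rs-implies-mrs}. I would work on the probability-1 event on which (\ref{eq:Aav})--(\ref{eq:muav}) hold; on this event one automatically has $a(t)/t \to 0$, since any sequence whose running averages converge to a finite limit is $o(t)$. Set $g(m) \defequiv \sum_{k=0}^{m-1}(a(k)-b(k))$, so $g(m)/m \to a_{av}-b_{av}$. First dispose of the ``only if'' part of (a) and the lower bound needed for (b): letting $t\to\infty$ in (\ref{eq:illuminate2}) gives $\liminf_t Q(t)/t \ge a_{av}-b_{av}$ on every such sample path, and trivially $Q(t)/t\ge 0$, so $\liminf_t Q(t)/t \ge \max(a_{av}-b_{av},0)$; if in addition $Q(t)$ is rate stable this forces $a_{av}\le b_{av}$.

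For the matching upper bound (the ``if'' part of (a) and the upper bound for (b)) I would localize the slack in the queue. Let $\sigma(t)$ be the largest slot $s\in\{0,\dots,t\}$ with $\tilde b(s-1) < b(s-1)$ (equivalently $Q(s-1) < b(s-1)$), and $\sigma(t)=0$ if no such $s$ exists. By maximality $\tilde b(s)=b(s)$ for all $s\in\{\sigma(t),\dots,t-1\}$, so (\ref{eq:io-tilde}) applied on $[\sigma(t),t]$ gives $Q(t) = Q(\sigma(t)) + g(t) - g(\sigma(t))$, while $Q(\sigma(t)) \le Q(0) + \max_{0\le k\le t-1}a(k)$ because at an undershoot slot the queue contents equal the lone arrival $a(\sigma(t)-1)$. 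Since $g(\sigma(t)) \ge \min_{0\le m\le t}g(m)$, this yields
\[
0 \le \frac{Q(t)}{t} \le \frac{Q(0)}{t} + \frac{1}{t}\max_{0\le k\le t-1}a(k) + \frac{g(t)}{t} - \frac{1}{t}\min_{0\le m\le t}g(m).
\]
The first two terms vanish (using $a(k)\ge0$ and $a(k)=o(k)$), the third tends to $a_{av}-b_{av}$, and $\tfrac1t\min_{0\le m\le t}g(m)\to\min(a_{av}-b_{av},0)$ --- an instance of the elementary fact that the running extrema of a sequence with a linear growth rate inherit that rate. Hence $\limsup_t Q(t)/t \le \max(a_{av}-b_{av},0)$, and combined with the previous paragraph, $\lim_t Q(t)/t = \max(a_{av}-b_{av},0)$ with probability 1; this is exactly (a) when $a_{av}\le b_{av}$ and exactly (b) when $a_{av}>b_{av}$.

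For (c) and (d): if $a_{av}\le b_{av}$ then $Q(t)$ is rate stable by part (a), and the extra hypothesis of (c) (resp. (d)) is precisely that of part (a) (resp. (b)) of Theorem~\ref{thm:rs-implies-mrs}, so $Q(t)$ is mean rate stable. Conversely, if $a_{av}>b_{av}$ then by (b) $Q(t)/t\to a_{av}-b_{av}>0$ with probability 1, so, since $Q(t)/t\ge0$, Fatou's lemma gives $\liminf_t \expect{Q(t)}/t \ge a_{av}-b_{av}>0$, and $Q(t)$ is not mean rate stable --- and this direction uses neither the moment bound nor the stochastic-dominance assumption.

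I expect the main obstacle to be the pathwise upper bound in (a)/(b): the naive argument ``once $Q$ is large the server processes its full offered amount'' fails because $b(t)$ may be unbounded on a single slot, so one genuinely needs the last-undershoot decomposition, which pushes all the slack into $Q(\sigma(t))$ and $\min_{0\le m\le t}g(m)$ --- quantities that are $o(t)$ only because $a(t)=o(t)$ and $g(m)/m$ converges. The one routine lemma worth isolating beforehand is that $\tfrac1t\max_{0\le m\le t}c(m)\to\max(\gamma,0)$ and $\tfrac1t\min_{0\le m\le t}c(m)\to\min(\gamma,0)$ whenever $c(m)/m\to\gamma$ with $\gamma$ finite.
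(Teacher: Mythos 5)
Your proof is correct, and your handling of parts (a) and (b) takes a genuinely different route from the one the paper lays out in Exercises~\ref{ex:inequality-comparison}--\ref{ex:rate-stability-b}. The paper proves (b) first by showing that when $a_{av}>b_{av}$, \emph{undershoot slots} (slots $\tau$ with $\tilde b(\tau)<b(\tau)$, hence $Q(\tau+1)=a(\tau)$) can occur only finitely often, after which the dynamics are exactly linear; it then deduces the sufficiency half of (a) by a perturbation-plus-monotone-coupling argument, comparing $Q$ to a dominating queue $Q_\epsilon$ whose arrivals are inflated by $(b_{av}-a_{av})+\epsilon$ and applying (b) to $Q_\epsilon$ (Exercise~\ref{ex:inequality-comparison} supplies the coupling). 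Your version instead tracks the \emph{last undershoot before time $t$}, which is finite by construction, so you never need the undershoot times to stabilize and you never need the coupling lemma: the single identity $Q(t)=Q(\sigma(t))+g(t)-g(\sigma(t))$, together with $Q(\sigma(t))\le Q(0)+\max_{k<t}a(k)$, $g(\sigma(t))\ge\min_{0\le m\le t}g(m)$, and the elementary running-extrema lemma, gives $\lim_{t} Q(t)/t=\max(a_{av}-b_{av},0)$ in one pass, which delivers (a) and (b) simultaneously. This is tidier and more self-contained; what the paper's route buys is a reusable monotone-comparison lemma and a decomposition into smaller pedagogical steps. For (c) and (d), both you and the paper invoke Theorem~\ref{thm:rs-implies-mrs} in the forward direction; in the converse direction you appeal to Fatou while the paper uses a direct conditioning bound with $\delta=(a_{av}-b_{av})/2$, and these are interchangeable. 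One cosmetic fix: in defining $\sigma(t)$ you should let $s$ range over $\{1,\dots,t\}$ rather than $\{0,\dots,t\}$, since $\tilde b(-1)$ is undefined; with the convention $\sigma(t)=0$ when the set is empty, everything goes through, including the degenerate case $\sigma(t)=t$ in which both sides of your identity vanish.
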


\begin{proof} (Theorem \ref{thm:rate-stability}) 
Suppose that $Q(t)$ is rate stable, so that $Q(t)/t \rightarrow 0$ with probability $1$. 
Because   (\ref{eq:illuminate2}) holds
for all slots $t>0$, we can take limits in (\ref{eq:illuminate2}) as $t \rightarrow \infty$ and use
 (\ref{eq:Aav})-(\ref{eq:muav})  to conclude that
 $0 \geq a_{av} - b_{av}$. Thus,  
$a_{av} \leq b_{av}$ is \emph{necessary} for rate stability.  The proof for sufficiency in part (a) and the proof of  
part (b) are not obvious and are 
developed in Exercises \ref{ex:rate-stable} and \ref{ex:rate-stability-b} of 
Section \ref{section:exercise}. 

To prove parts (c) and (d), 
suppose that $a_{av} \leq b_{av}$.  We thus know by part (a) 
that  $Q(t)$ is rate stable.  The conditions in parts (c) and (d) of this theorem 
correspond to the conditions given in Theorem \ref{thm:rs-implies-mrs}, and hence
$Q(t)$ is mean rate stable.

Now suppose that $a_{av} > b_{av}$.  It follows by part (b) that: 
\[ \lim_{t\rightarrow\infty} \frac{Q(t)}{t} = a_{av} - b_{av} \: \: \mbox{ with prob. 1} \]
Define $\delta \defequiv (a_{av} - b_{av})/2$. Note that: 
\[ \lim_{t\rightarrow\infty} Pr[Q(t)/t > \delta] = 1 \]
Therefore: 
\begin{eqnarray*}
 \expect{\frac{Q(t)}{t}} &\geq& \expect{\frac{Q(t)}{t}|\frac{Q(t)}{t} > \delta} Pr[Q(t)/t>\delta]  \\
&\geq& \delta Pr[Q(t)/t > \delta]
\end{eqnarray*}
Taking a limit yields: 
\[ \limsup_{t\rightarrow\infty} \expect{\frac{Q(t)}{t}} \geq \delta \]
and hence $Q(t)$ is not mean rate stable. 
\end{proof}

Prior sample path investigations
of constant service rate queues are provided in  
\cite{sample-path-queue}\cite{taha-paper}\cite{path-mazumdar}\cite{queue-mazumdar}, 
where it is shown that 
 rate stability holds whenever the arrival rate is strictly less than the service rate. 
Our proof of Theorem \ref{thm:rate-stability}(a) uses a different chain of
reasoning (developed in Exercises \ref{ex:rate-stable} and \ref{ex:rate-stability-b} of 
Section \ref{section:exercise}),
applies to queues with more general time
varying (and possibly negative)
service rates, and also shows the case $a_{av} = b_{av}$ ensures rate stability, which establishes
the simple necessary and sufficient condition $a_{av} \leq b_{av}$.

The assumption that $a(t)$ and $b(t)$ have well defined time averages $a_{av}$ and $b_{av}$
is crucial for the result of Theorem \ref{thm:rate-stability}.  One might intuitively 
suspect that if $a(t)$
has a well defined time average $a_{av}$, but $b(t)$ has $\limsup$ and $\liminf$ time
averages $b_{av}^{inf}$ and $b_{av}^{sup}$ such that $a_{av} < b_{av}^{inf} < b_{av}^{sup}$, then
$Q(t)$ is also rate stable.  This is not always true.  Thus, the existence of well defined
time averages provides enough structure to ensure queue sample paths are well behaved. 
The following theorem presents a more general necessary condition for rate stability that does
not require the arrival and server processes to have well defined time averages. 

\begin{thm}  \label{thm:gen-nec-rate} (Necessary Condition for Rate Stability) 
 Suppose $Q(t)$ 
 evolves according to (\ref{eq:q-dynamics}), 
 with any general processes $a(t)$ and $b(t)$ such that $a(t) \geq 0$ for all $t$.  Then: 
 
 (a) If $Q(t)$ is rate stable, then: 
 \begin{equation} \label{eq:limsup-rs} 
 \limsup_{t\rightarrow\infty} \frac{1}{t}\sum_{\tau=0}^{t-1} [a(\tau) - b(\tau)] \leq 0 \: \: \mbox{ with probability 1} 
 \end{equation} 
 
 (b) If $Q(t)$ is mean rate stable and if $\expect{Q(0)} < \infty$, then: 
 \begin{equation} \label{eq:limsup-mrs} 
 \limsup_{t\rightarrow\infty} \frac{1}{t}\sum_{\tau=0}^{t-1} \expect{a(\tau) - b(\tau)} \leq 0
 \end{equation} 
  \end{thm}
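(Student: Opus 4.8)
The plan is to read both parts off the Sample Path Property. Inequality (\ref{eq:illuminate2}) holds on \emph{every} sample path for all $t>0$, and since $Q(0)\geq 0$ it gives the deterministic bound
\[ \frac{1}{t}\sum_{\tau=0}^{t-1}[a(\tau)-b(\tau)] \;\leq\; \frac{Q(t)}{t} - \frac{Q(0)}{t} \;\leq\; \frac{Q(t)}{t} . \]
Everything reduces to driving the right-hand side to $0$ using the relevant stability hypothesis; note in particular that no time-average assumption on $a(\tau)$ or $b(\tau)$ is needed, since we only ever take a $\limsup$ of the left-hand side. For part (a): assume $Q(t)$ is rate stable, so $Q(t)/t\to 0$ with probability $1$; fix a sample path in this probability-$1$ event, take $\limsup_{t\to\infty}$ of both sides of the displayed inequality, and use monotonicity of $\limsup$ together with $\limsup_{t\to\infty} Q(t)/t = \lim_{t\to\infty} Q(t)/t = 0$ to obtain (\ref{eq:limsup-rs}).

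For part (b): assume $Q(t)$ is mean rate stable and $\expect{Q(0)}<\infty$. Start from the un-normalized form (\ref{eq:io-b2}) with $t_1=0$, $t_2=t$, i.e.\ $\sum_{\tau=0}^{t-1}[a(\tau)-b(\tau)] \leq Q(t)-Q(0)$, and take expectations. Because $Q(t)\geq 0$, $\expect{Q(t)}$ is well defined in $[0,\infty]$, and mean rate stability forces $\expect{Q(t)}<\infty$ for all sufficiently large $t$; combined with $\expect{Q(0)}<\infty$, the quantity $\expect{Q(t)-Q(0)}$ is finite for large $t$, so the (dominated) left-hand side has integrable positive part, making $\expect{a(\tau)-b(\tau)}$ and the finite sum $\sum_{\tau=0}^{t-1}\expect{a(\tau)-b(\tau)}$ well defined in $[-\infty,\infty)$ and preserving the inequality under expectation. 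Dividing by $t$ yields
\[ \frac{1}{t}\sum_{\tau=0}^{t-1}\expect{a(\tau)-b(\tau)} \;\leq\; \frac{\expect{Q(t)}}{t} - \frac{\expect{Q(0)}}{t} , \]
and taking $\limsup_{t\to\infty}$ — with $\expect{Q(0)}/t\to 0$ since $\expect{Q(0)}<\infty$, and $\expect{Q(t)}/t\to 0$ by mean rate stability — gives (\ref{eq:limsup-mrs}).

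The only genuine subtlety is the integrability bookkeeping in part (b): one must confirm that expectations may legitimately be passed through the sample-path inequality even when the individual quantities $\expect{a(\tau)}$ or $\expect{b(\tau)}$ might a priori be infinite. This is handled exactly by the domination argument above. Everything else is an immediate consequence of the Sample Path Property together with the definitions of rate stability and mean rate stability, so I expect the full proof to be short.
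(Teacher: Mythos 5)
Your proof is correct and follows essentially the same route as the paper's: apply the Sample Path inequality (\ref{eq:illuminate2}), take a limsup along each sample path for part (a), and take expectations first and then limits for part (b). The only difference is that you spell out the integrability bookkeeping in part (b), which the paper's one-line proof leaves implicit.
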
  
 \begin{proof} 
The proof of (a) follows immediately by taking a 
$\limsup$ of both sides of (\ref{eq:illuminate2}) and noting that $Q(t)/t \rightarrow 0$ because $Q(t)$ is rate stable. 
The proof of (b) follows by first taking an expectation of (\ref{eq:illuminate2}) and then 
taking limits. 
 \end{proof}


\section{Stronger Forms of Stability}

 Rate stability and mean rate stability only describe the long term average rate of arrivals and departures
 from the queue, and do not say anything about the fraction of time the queue backlog exceeds
 a certain value, or about the time average expected backlog. 
 The stronger stability definitions given below are thus useful.   
 \begin{defn}  A discrete time queue $Q(t)$ is \emph{steady state stable} if: 
\[ \lim_{M\rightarrow\infty} g(M) = 0 \]
where for each $M\geq 0$,  
$g(M)$ is defined: 
\begin{equation} \label{eq:gm} 
 g(M) \defequiv \limsup_{t\rightarrow\infty} \frac{1}{t}\sum_{\tau=0}^{t-1} Pr[Q(\tau) > M] 
 \end{equation} 
\end{defn}  

\begin{defn}  A discrete time queue $Q(t)$ is \emph{strongly stable} if: 
\begin{equation} \label{eq:strongly-stable} 
 \limsup_{t\rightarrow\infty} \frac{1}{t}\sum_{\tau=0}^{t-1} \expect{Q(\tau)} < \infty 
 \end{equation} 
\end{defn}  

For discrete time ergodic Markov chains with countably infinite state space and with 
the property that, for each real value $M$,  the event $\{Q(t) \leq M\}$ corresponds to only a finite number of states,  
steady state stability implies the existence of a steady state distribution, and 
 strong stability implies  finite average backlog and (by Little's theorem \cite{bertsekas-data-nets})
 finite average delay.  Under mild boundedness assumptions, strong stability implies 
 all of the other forms of stability, as specified in Theorem \ref{thm:strong-stability} below.

 \begin{thm}  \label{thm:strong-stability} (Strong Stability Theorem) 
 Suppose $Q(t)$  evolves according to  (\ref{eq:q-dynamics}) for some
 general stochastic processes $\{a(t)\}_{t=0}^{\infty}$ and $\{b(t)\}_{t=0}^{\infty}$, where $a(t) \geq 0$ for 
 all $t$, and $b(t)$ is real valued for all $t$. Suppose $Q(t)$ is strongly stable.  Then: 
 
 (a) $Q(t)$ is steady state stable. 
 
 (b)  If there is a finite constant $C$ such that either $a(t) + b^{-}(t)  \leq C$ with probability 1 for all $t$
 (where $b^-(t)$ is defined in (\ref{eq:b-minus})),  
 or $b(t) - a(t) \leq C$ with probability 1 for all $t$, 
 then $Q(t)$ is rate stable, 
 so that $Q(t)/t \rightarrow 0$ with probability $1$.

 (c)   If there is a finite constant $C$ such that either $\expect{a(t) + b^{-}(t)} \leq C$ for all $t$,
 or $\expect{b(t) - a(t)} \leq C$ for all $t$, then $Q(t)$ is mean rate stable.  
 \end{thm}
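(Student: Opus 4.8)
The three parts will all rest on the same elementary mechanism: combine the ``bounded increments'' supplied by the hypotheses with Markov's inequality and the strong-stability bound $B\defequiv\limsup_{t\to\infty}\frac1t\sum_{\tau=0}^{t-1}\expect{Q(\tau)}<\infty$. Note first that strong stability forces $\expect{Q(\tau)}<\infty$ for every $\tau$ (otherwise some partial sum would already be infinite). Part (a) is then immediate: Markov's inequality gives $Pr[Q(\tau)>M]\le\expect{Q(\tau)}/M$, so $\frac1t\sum_{\tau=0}^{t-1}Pr[Q(\tau)>M]\le\frac1{M}\cdot\frac1t\sum_{\tau=0}^{t-1}\expect{Q(\tau)}$; taking $\limsup_{t\to\infty}$ yields $g(M)\le B/M$, and letting $M\to\infty$ gives steady state stability.

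For part (c) I would argue by contradiction. Suppose $\limsup_{t\to\infty}\expect{Q(t)}/t=L>0$; fix $\delta\in(0,\min\{L,2C\})$ and a sequence $t_k\to\infty$ with $\expect{Q(t_k)}/t_k>\delta$. Under $\expect{a(t)+b^-(t)}\le C$, the pointwise bound $Q(t+1)=\max[Q(t)-b(t),0]+a(t)\le Q(t)+b^-(t)+a(t)$ (valid because $Q(t)+b^-(t)\ge\max[Q(t)-b(t),0]$) gives $\expect{Q(t_k)}\le\expect{Q(s)}+C(t_k-s)$, so $\expect{Q(s)}\ge\delta t_k/2$ for every integer $s$ in the window $[\,t_k(1-\delta/(2C)),\,t_k\,]$, a window of length $\sim\delta t_k/(2C)$ lying inside $[0,t_k]$ since $\delta<2C$. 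Summing over this window shows $\frac1{t_k}\sum_{\tau=0}^{t_k-1}\expect{Q(\tau)}$ grows like $\delta^2 t_k/(4C)\to\infty$, contradicting strong stability. Under the alternative hypothesis $\expect{b(t)-a(t)}\le C$, the sample-path inequality (\ref{eq:io-b2}) gives $\expect{Q(s)}\ge\expect{Q(t_k)}-C(s-t_k)$ for $s>t_k$, so now the window $[\,t_k,\,t_k(1+\delta/(2C))\,]$ carries $\expect{Q(s)}\ge\delta t_k/2$, and averaging over $[0,T)$ with $T\sim t_k(1+\delta/(2C))$ again blows up. In both cases $L=0$, i.e.\ $Q(t)$ is mean rate stable.

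Part (b) uses the same ``the backlog was large over an entire window'' idea, but pathwise and quantitatively, because the real work is upgrading convergence in probability to almost-sure convergence. Under $a(t)+b^-(t)\le C$ a.s., on the event $\{Q(T)/T>\delta\}$ one has (a.s.) $Q(s)\ge\delta T/2$ for all integers $s\in[\,T(1-\delta/(2C)),\,T\,]$, hence $\frac1T\sum_{\tau=0}^{T-1}Q(\tau)\ge\phi(T)$ on that event with $\phi(T)\sim\delta^2T/(4C)$; since $\frac1T\sum_{\tau=0}^{T-1}Q(\tau)\ge0$ and has mean $\le B+1$ for all large $T$ (by definition of $B$ as a $\limsup$), Markov gives $Pr[Q(T)/T>\delta]\le(B+1)/\phi(T)=O(1/T)$. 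This is not summable in $T$, so I would restrict to the subsequence $T=j^2$: then $\sum_j Pr[Q(j^2)/j^2>\delta]<\infty$, and the first Borel--Cantelli lemma gives $Q(j^2)/j^2\le\delta$ for all large $j$, a.s. Interpolating, for $j^2\le t<(j+1)^2$ the pathwise bound $Q(t)\le Q(j^2)+C(t-j^2)\le Q(j^2)+C(2j+1)$ yields $Q(t)/t\le Q(j^2)/j^2+O(1/j)$, so $\limsup_{t\to\infty}Q(t)/t\le\delta$ a.s.; intersecting over $\delta=1/n$ gives $Q(t)/t\to0$ a.s. The alternative hypothesis $b(t)-a(t)\le C$ a.s.\ is symmetric: (\ref{eq:io-b2}) gives $Q(t_1)\le Q(t_2)+C(t_2-t_1)$ for $t_1<t_2$, so the window extends forward from $T$ and the interpolation uses $j=\lceil\sqrt{t}\,\rceil$ together with $Q(t)\le Q(j^2)+C(j^2-t)$.

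The step I expect to be the main obstacle is precisely the almost-sure claim in part (b): the naive version of the ``Cesàro average blows up'' argument only shows $\limsup_{t\to\infty}\frac1t\sum_{\tau=0}^{t-1}Q(\tau)=\infty$ on a set of positive probability, which does \emph{not} contradict $\limsup_{t\to\infty}\frac1t\sum_{\tau=0}^{t-1}\expect{Q(\tau)}<\infty$ (there is no dominating function, so reverse Fatou is unavailable). The resolution is the quantitative $O(1/T)$ tail bound above, which becomes summable along $T=j^2$ and is then transferred to all $t$ by the pathwise-increment interpolation. The remaining points are routine: finiteness of $\expect{Q(\tau)}$ for each $\tau$ (noted above) and keeping the windows inside $[0,\infty)$, which is the only reason for imposing $\delta<2C$ --- harmless, since bounding $\limsup_t Q(t)/t$ (or $\limsup_t\expect{Q(t)}/t$) by every small $\delta$ suffices to conclude it is zero.
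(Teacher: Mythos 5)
Your proposal is correct, and parts (a) and (c) follow essentially the paper's own arguments: (a) is exactly the Markov-inequality observation $\expect{Q(\tau)}\geq M\,Pr[Q(\tau)>M]$ (Exercise \ref{ex:strong-stability-implies-steady-state}), and (c) is the same contradiction-by-window argument as Appendix B (backward window under $\expect{a+b^-}\leq C$, forward window under $\expect{b-a}\leq C$). Where you genuinely diverge is part (b). The paper proceeds in two stages: first a "square-root lemma" proved by contradiction, showing strong stability plus bounded increments forces $\expect{Q(t)}\leq D\sqrt{t}$ and hence $Pr[Q(t)/t\geq\epsilon/4]=O(1/\sqrt{t})$; then a geometric frame decomposition $t_{i+1}=t_i+\lfloor\alpha t_i\rfloor$ with a lemma showing that $Q(t)/t\geq\epsilon$ anywhere in a frame forces $Q(t_{i+1})/t_{i+1}\geq\epsilon/4$, so a union bound over frames (summable because the $t_i$ grow geometrically against the $O(1/\sqrt{t})$ tail) kills the event $\{\limsup Q(t)/t\geq\epsilon\}$. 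You instead extract a direct tail bound $Pr[Q(T)/T>\delta]\leq (B+1)/\phi(T)=O(1/T)$ by applying Markov's inequality to the Cesàro sum restricted to the event (using the same bounded-increment window, but pathwise), then use Borel--Cantelli along the polynomially spaced subsequence $T=j^2$ and transfer to all $t$ by the pathwise interpolation $Q(t)\leq Q(j^2)+C|t-j^2|$. The two proofs rest on the identical window idea, but your single-time tail bound is sharper ($1/T$ versus the paper's $1/\sqrt{T}$, which is why squares suffice for you while the paper needs geometric sampling), and it avoids the intermediate contradiction argument for the moment bound; the paper's route, in exchange, yields the quantitative moment estimate $\expect{Q(t)/t}\leq D/\sqrt{t}$ as a standalone byproduct and packages the interpolation as a probabilistic frame lemma rather than an almost-sure step after Borel--Cantelli. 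You also correctly identified the genuine obstacle -- that a naive "sample-path Cesàro average blows up" argument cannot contradict a bound on the averaged expectations without domination -- which is precisely why both you and the paper route the argument through a quantitative tail bound plus a summable subsequence.
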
  
 
 \begin{proof} 
Part (a) is given in Exercise \ref{ex:strong-stability-implies-steady-state}. Part (c) is given 
in Appendix B, and part (b) is given in Appendix C.  
 \end{proof} 
 
 
 
 The above theorem shows that, under mild technical assumptions, strong stability implies
 all three other forms of stability.    Theorem \ref{thm:rs-implies-mrs} and Theorem \ref{thm:strong-stability}(c) show that (under mild technical assumptions) rate stability and strong stability both imply
 mean rate stability.  For completeness, the following theorem provides conditions under
 which steady state stability implies mean rate stability.   Collectively, these results can be 
 viewed as showing that strong stability is the \emph{strongest} definition of the four, and 
 mean rate stability is the \emph{weakest} definition of the four. 
 
 \begin{thm} \label{thm:ss-implies-mrs} Assume $Q(t)$ evolves according to (\ref{eq:q-dynamics})
 with $a(t)\geq 0$ and $b(t)$ real values for all $t$. Suppose that $Q(t)$ is steady state stable, 
 and that there is a finite constant $C$ such that $a(t) + b^-(t) \leq C$ with probability 1
 for all $t$.  Then $Q(t)$ is mean rate stable. 
 \end{thm}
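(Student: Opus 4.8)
The plan is to exploit a one-sided Lipschitz property of the queue implied by the hypothesis $a(t)+b^-(t)\leq C$: the backlog can grow by at most $C$ per slot, so whenever $Q(t)$ is large the queue must have exceeded any fixed threshold $M$ for many of the immediately preceding slots. Steady state stability bounds the time-average frequency of such exceedances, and this will translate directly into a bound on $\expect{Q(t)}/t$. Concretely, from (\ref{eq:q-dynamics}) and the elementary inequality $-b(t)\leq b^-(t)$ (with $b^-(t)$ as in (\ref{eq:b-minus})) we get $Q(t+1)=\max[Q(t)-b(t),0]+a(t)\leq Q(t)+b^-(t)+a(t)\leq Q(t)+C$, and hence $Q(t')\leq Q(\tau)+C(t'-\tau)$ for all integers $0\leq\tau\leq t'$.

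The next step is a purely pathwise bound. Fix $M\geq0$ and a slot $t\geq1$, and set $N_t\defequiv\sum_{\tau=0}^{t}\mathbf{1}[Q(\tau)>M]$. If $Q(\tau)>M$ for every $\tau\in\{0,\ldots,t\}$, then $N_t=t+1$ and the Lipschitz bound gives $Q(t)\leq Q(0)+Ct$. Otherwise, let $r$ be the largest slot in $\{0,\ldots,t\}$ with $Q(r)\leq M$; then $Q(\tau)>M$ for all $\tau\in\{r+1,\ldots,t\}$, so $t-r\leq N_t$, and the Lipschitz bound gives $Q(t)\leq Q(r)+C(t-r)\leq M+CN_t$. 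Combining the two cases yields, for every sample path,
\[ Q(t)\leq M+C\,N_t+Q(0)\,\mathbf{1}[Q(0)>M] .\]

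Taking expectations, dividing by $t$, and using $\expect{N_t}=\sum_{\tau=0}^{t}Pr[Q(\tau)>M]$ gives
\[ \frac{\expect{Q(t)}}{t}\leq\frac{M}{t}+\frac{C}{t}\sum_{\tau=0}^{t}Pr[Q(\tau)>M]+\frac{\expect{Q(0)\,\mathbf{1}[Q(0)>M]}}{t}.\]
Assuming $\expect{Q(0)}<\infty$, the last term tends to $0$; taking $\limsup_{t\rightarrow\infty}$ and recalling the definition (\ref{eq:gm}) of $g(M)$ yields $\limsup_{t\rightarrow\infty}\expect{Q(t)}/t\leq C\,g(M)$. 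The left-hand side does not depend on $M$, so letting $M\rightarrow\infty$ and using steady state stability ($g(M)\rightarrow0$) gives $\limsup_{t\rightarrow\infty}\expect{Q(t)}/t\leq0$; since $Q(t)\geq0$ this forces the limit to be $0$, i.e.\ mean rate stability.

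The one genuine obstacle is the treatment of the initial condition: the argument above really does need $\expect{Q(0)}<\infty$, and the conclusion fails without it. For instance, taking $a(t)\equiv0$, $b(t)\equiv1$, and $Q(0)$ with a tail heavy enough that $\expect{Q(0)}=\infty$ (e.g.\ $Pr[Q(0)>v]\asymp 1/(v\ln v)$), one checks that $Pr[Q(\tau)>M]\rightarrow0$ for each fixed $M$ (hence $g(M)=0$ and the queue is steady state stable), yet $\expect{Q(t)}=\infty$ for all $t$, so the queue is not mean rate stable. I would therefore either add $\expect{Q(0)}<\infty$ to the hypotheses or invoke it as a standing assumption on $Q(0)$; with that in hand, every step above is routine.
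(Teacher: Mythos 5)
Your proof is correct, but it proceeds along a genuinely different route from the paper's. The paper argues by contraposition: assuming $Q(t)$ is not mean rate stable, it picks a subsequence of times $t_k$ with $\expect{Q(t_k)/t_k}\geq \epsilon$, uses the same one-sided growth bound $Q(t_k)\leq Q(t)+C(t_k-t)$ together with $Q(t)\leq Ct_k$ (here the paper sets $Q(0)=0$ ``for simplicity'') to show that $Pr[Q(t)>M]\geq \epsilon/(4C)$ on a positive fraction (at least $\epsilon/(2C)$) of the slots preceding $t_k$, and concludes $g(M)\geq \epsilon^2/(8C^2)$ for every $M$, contradicting steady state stability. You instead give a direct, pathwise ``last exit from level $M$'' argument: $Q(t)\leq M + C N_t + Q(0)\mathbf{1}[Q(0)>M]$ with $N_t$ the number of exceedances, which after taking expectations yields the clean quantitative relation $\limsup_{t\rightarrow\infty}\expect{Q(t)}/t \leq C\,g(M)$ for every $M$, and then $M\rightarrow\infty$ finishes. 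Your version is slightly more informative (it bounds the degree of mean-rate instability by $C\,g(M)$ rather than just deriving a contradiction), while the paper's version quantifies the converse direction (instability forces $g(M)$ bounded away from zero uniformly in $M$); both rest on exactly the same Lipschitz-type bound $Q(t+1)\leq Q(t)+C$ implied by $a(t)+b^-(t)\leq C$. Your remark about the initial condition is also well taken: the theorem as stated is silent about $Q(0)$, the paper's proof quietly assumes $Q(0)=0$, and your heavy-tailed $Q(0)$ example correctly shows that some assumption such as $\expect{Q(0)}<\infty$ (which the paper does state explicitly in the analogous part (b) of Theorem \ref{thm:gen-nec-rate}) is genuinely needed; making it explicit, as you do, is the right call.
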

 
 \begin{proof}
 See Appendix D. 
 \end{proof}
 
\subsection{Sample Path Versions of Stability} 

One might use a sample-path version of strong stability, saying that
a queue is \emph{sample-path strongly stable} if: 
\[ \limsup_{t\rightarrow\infty} \frac{1}{t}\sum_{\tau=0}^{t-1} Q(\tau) < \infty \: \: \mbox{ with prob. 1} \]
A sample-path version of steady-state stable would re-define the function $g(M)$ in (\ref{eq:gm})
by a function $h(M)$ as follows: 
\[ h(M) \defequiv \limsup_{t\rightarrow\infty} \frac{1}{t}\sum_{\tau=0}^{t-1} 1\{Q(\tau)>M\} \]
where $1\{Q(\tau)>M\}$ is an indicator function that is $1$ if $Q(\tau)>M$, and zero otherwise. 
We might say that the queue is \emph{sample-path steady-state stable} if
$\lim_{M\rightarrow\infty} h(M) = 0$ with probability 1.   These two additional stability definitions
are again implied by strong stability if one assumes the system has well defined limits (which is 
typically the case in systems defined on Markov chains), as shown in Appendix E. 

\subsection{A Problematic Stability Definition} \label{section:problematic} 

Finally, one might define another form of stability by requiring: 
\begin{equation} \label{eq:stronger} 
\limsup_{t\rightarrow\infty} \expect{Q(t)} < \infty 
\end{equation} 
 It is clear that if $\expect{Q(t)} < \infty$ for all $t$ and if the above holds, then the
 time average of $\expect{Q(t)}$ is also finite and so 
 $Q(t)$ is strongly stable. Hence, the  condition (\ref{eq:stronger}) 
 can be viewed as being 
 even ``stronger'' than strong stability.  Of course, in most systems defined over Markov
 chains, strong stability is equivalent to (\ref{eq:stronger}). 
 
 However, we do not consider (\ref{eq:stronger})  in our
 set of stability definitions for two reasons:  
 \begin{enumerate} 
 \item The strong stability definition that uses
 a time average is easier to work with, especially in Lyapunov drift arguments \cite{now}. 
 \item The condition (\ref{eq:stronger}) leads to a problematic counterexample if it
 were used as a definition of stability, as described below. 
 \end{enumerate} 
 
 To see why the definition (\ref{eq:stronger}) may be problematic, consider a simple
 discrete time ``Bernoulli/Bernoulli/1'' ($B/B/1$) queue, where arrivals $a(t)$ are i.i.d. over slots with $Pr[a(t) = 1] = \lambda$
 and $Pr[a(t) = 0] = 1-\lambda$, and the server process $b(t)$ 
 is independent and i.i.d. over slots with $Pr[b(t) = 1] = \mu$, $Pr[b(t) = 0] = 1-\mu$.  When 
 $\lambda < \mu$, it is easy to write the ergodic 
 birth-death Markov chain for this system, and one
 can easily derive that the Markov chain has a well defined steady state, steady state probabilities 
 decay geometrically, and average queue backlog and delay satisfy: 
 \[ \overline{Q} = \frac{\lambda(1-\lambda)}{\mu-\lambda}  \: \: , \: \: \overline{W} = \frac{1-\lambda}{\mu-\lambda} \]
 Thus, if $\lambda < \mu$, a good definition of stability would say that this system is stable. 
 
 Now suppose we take one particular sample path realization 
 of the $B/B/1$ queue, one for which time averages converge to the steady state values (which 
 happens with probability 1).  However, treat this sample path as given, so that all
 events are now \emph{deterministic}.  Thus, $a(t)$ and $b(t)$ are now deterministic functions. 
 Because we have not changed the actual sample path, 
 a good definition should also 
 say this deterministic variant is stable.  In this deterministic case, we have $\expect{Q(t)} = Q(t)$
 for all $t$, and so the expectation can grow arbitrarily large (since a $B/B/1$ queue can grow
 arbitrarily large).   Hence, for this deterministic example: 
 \[ \limsup_{t\rightarrow\infty} \expect{Q(t)} = \infty \]
 Thus, if we used (\ref{eq:stronger}) as a definition of stability, the random
 $B/B/1$ queue would be stable, but the deterministic version would not!
 Another way of saying this is that the original $B/B/1$ queue is stable, but if we 
 condition on knowing all future events, it becomes unstable!

 Because our definitions of rate stability, mean rate stability, sample path stability, and 
 strong stability incorporate time averages, these four forms of stability all say that both the random
 $B/B/1$ queue and its deterministic counterpart are stable.  
 Hence, the problem does not 
 arise for this example 
 in any of these four definitions. 
 
 Another problematic stability definition says that a queue is ``empty-stable'' if it
 empties infinitely often.   A discussion
 of why this is problematic is provided in \cite{neely-thesis}.

 \section{Counter-Examples}  \label{section:counterexamples} 

Here we provide counter-examples that show what can happen if the boundedness
assumptions of Theorems \ref{thm:rs-implies-mrs} and \ref{thm:strong-stability} are violated.

\subsection{Rate Stability Does Not Imply Mean Rate Stability}  \label{subsection:counterexamples-rate-mean}

Let $T$ be an integer random variable with a geometric distribution,  
such that $Pr[T>t] = 1/2^t$ for $t \in \{0, 1, 2, \ldots\}$.  Define $Q(t)$ over $t \in \{0, 1, 2, \ldots\}$ as follows: 
\[ Q(t) = \left\{ \begin{array}{ll}
                          2^{(2t)} &\mbox{ if $t < T$} \\
                             0  & \mbox{ otherwise} 
                            \end{array}
                                 \right.\]
It follows that $Q(t)/t \rightarrow 0$ with probability 1 (as eventually $t$ becomes larger than the random variable $T$). 
However: 
\[ \expect{Q(t)} = 2^{(2t)} Pr[t < T] = 2^{(2t)} 2^{-t} = 2^{t} \]
Therefore: 
\[ \lim_{t\rightarrow\infty}  \expect{Q(t)}/t = \lim_{t\rightarrow\infty} 2^t/t =  \infty \]
and hence $Q(t)$ is not mean rate stable.   

\subsection{Mean Rate Stability  Does Not Imply Rate Stability} 

Suppose that $Q(0)=0$, and that $Q(t)$  has independent values every slot $t$ for $t \in \{1, 2, 3, \ldots\}$, 
so that: 
\[ Q(t) = \left\{ \begin{array}{ll}
                          t &\mbox{ with probability $1/t$}  \\
                             0  & \mbox{ with probability $1- 1/t$} 
                            \end{array}
                                 \right.\]
Thus, for any time $t>0$ we have: 
\[ \expect{Q(t)} = t/t = 1 \]
It follows that $\expect{Q(t)}/t \rightarrow 0$, and so $Q(t)$ is mean rate stable. 

However, clearly $Q(t)/t = 1$ for any time $t$ such that $Q(t)>0$.  Because the probabilities of the independent 
events at which $Q(t)>0$ decay very slowly, it can be shown that there are an infinite number of times $t_n$ for which
$Q(t_n) > 0$.  Indeed, we have for any time $t>0$:
\[ Pr[\mbox{$Q(\tau) = 0$  for all $\tau\geq t$}] = \prod_{\tau=t}^{\infty} (1-1/\tau) = 0 \]
The infinite product can be shown to be zero for any $t>0$ 
by taking a $\log(\cdot)$ and showing that the resulting infinite sum is equal to 
$-\infty$. 
Therefore: 
\[ \limsup_{t\rightarrow\infty} \frac{Q(t)}{t} = 1 \: \: \mbox{ with probability 1} \]
and so the system is not rate stable.

\subsection{Strong Stability Neither Implies Mean Rate Stability Nor Rate Stability} 

Suppose that $Q(t) = t$ for $t \in \{1, 2, 4, 8, \ldots, 2^n, \ldots\}$ (i.e., for all timeslots $t$ that are powers of $2$). 
Suppose that $Q(t) = 0$ at all slots $t$ that are not powers of 2.  Because this process is deterministic, we have
$\expect{Q(t)} = Q(t)$, and for all $n \in \{0, 1, 2, \ldots\}$ we have: 
\[ \frac{1}{2^n+ 1} \sum_{\tau=0}^{2^n} Q(\tau) = \frac{1 + 2 + \ldots + 2^n}{2^n+1}  = \frac{2^{n+1} - 1}{2^n + 1} \]
The right hand side of the above expression converges to $2$ as $n \rightarrow \infty$. 
It can be shown that $\frac{1}{t}\sum_{\tau=0}^{t-1} Q(\tau)$ is the largest when sampled at the times in the above expression, 
and hence: 
\[ \limsup_{t\rightarrow\infty} \frac{1}{t}\sum_{\tau=0}^{t-1} Q(\tau) = 2 \]
Because $\expect{Q(\tau)} = Q(\tau)$ for all $\tau$, it follows that: 
\[ \limsup_{t\rightarrow\infty}\frac{1}{t}\sum_{\tau=0}^{t-1}\expect{Q(\tau)} = 2 \]
Therefore, $Q(t)$ is strongly stable. However, $Q(2^n)/2^n = 1$ for all $n \in \{1, 2, \ldots\}$, and so $Q(t)$ is not
rate stable or mean rate stable (rate stability and mean rate stability are equivalent when $Q(t)$ is deterministic). 
Note in this case that increases or decreases in  $Q(t)$ can be arbitrarily large, and hence this example 
does not satisfy the boundedness assumptions required in Theorem \ref{thm:strong-stability}.

\section{Network Scheduling} \label{section:network} 

Consider now the following 
multi-queue network model defined over discrete time $t\in\{0, 1, 2, \ldots\}$.  
There are $K$ queues
$\bv{Q}(t) = (Q_1(t), \ldots, Q_K(t))$, with dynamics for all $k \in \{1, \ldots, K\}$: 
\begin{equation} \label{eq:tilde-q}  
Q_k(t+1) = Q_k(t) - \tilde{b}_k(t) + \tilde{y}_k(t) + a_k(t) 
\end{equation} 
where $a_k(t)$ represents new exogenous arrivals to queue $k$ on slot $t$, 
$\tilde{b}_k(t)$ represents the actual amount served on slot $t$, and 
$\tilde{y}_k(t)$ represents additional arrivals.  The additional arrivals $\tilde{y}_k(t)$
may be due
to flow control operations (in which case we might have $a_k(t) = 0$ so that all new arrivals
are first passed through the flow control mechanism).  They might also be due to 
endogenous arrivals from other queues, which allows treatment of multi-hop networks.  

We assume the queue is always non-negative, as
are $\tilde{b}_k(t)$, $\tilde{y}_k(t)$, $a_k(t)$, and 
that the  $\tilde{b}_k(t)$ and $\tilde{y}_k(t)$ values respect the amount of data that is 
actually in each queue (not serving more or delivering more than the amount transferred
over the channel).  It is also useful to assume transmission decisions can be made independently
of queue backlog, and so we also define $b_k(t)$ and $y_k(t)$ for queue dynamics: 
\begin{equation} \label{eq:multi-q-dynamics} 
Q_k(t+1) \leq \max[Q_k(t) - b_k(t), 0] + y_k(t) + a_k(t) 
\end{equation} 
The inequality is due to the fact that the amount of actual new exogenous arrivals $\tilde{y}_k(t)$, being
a sum of service values in other queues that transmit to queue $k$, may not be as large
as $y_k(t)$ if these other queues have little or no backlog to transmit.  These values satisfy: 
\begin{eqnarray*}
0 \leq \tilde{y}_k(t) \leq y_k(t) & \forall k \in \{1, \ldots, K\}, \forall t \\
0 \leq \tilde{b}_k(t) \leq b_k(t) & \forall k \in \{1, \ldots, K\}, \forall t
\end{eqnarray*}

This model 
is similar to that given in \cite{now}\cite{neely-power-network-jsac}\cite{neely-thesis}, and
the capacity region  that we develop is also similar to prior work there.  

The network has a time varying \emph{network state} $\omega(t)$, 
possibly being a vector of channel conditions and/or additional random arrivals
for slot $t$.  A \emph{network control action} $\alpha(t)$ is chosen in reaction to the observed
network state $\omega(t)$ on slot $t$ (and possibly also in reaction to other network information,
such as queue backlogs), and takes values in some abstract set $\script{A}_{\omega(t)}$ that possibly
depends on $\omega(t)$. The $\omega(t)$ and $\alpha(t)$ values for slot $t$ affect arrivals and
service by: 
\begin{eqnarray*}
y_k(t) &=& \hat{y}_k(\alpha(t), \omega(t)) \: \: \forall k \in \{1, \ldots, K\} \\
b_k(t) &=& \hat{b}_k(\alpha(t), \omega(t)) \: \: \forall k \in \{1, \ldots, K\} 
\end{eqnarray*}
where $\hat{y}_k(\alpha(t), \omega(t))$ and $\hat{b}_k(\alpha(t), \omega(t))$ are 
general functions of $\alpha(t)$ and $\omega(t)$ (possibly non-convex and discontinuous). 
The $\omega(t)$ and $\alpha(t)$ values also affect an \emph{attribute vector} $\bv{x}(t) = (x_1(t), \ldots, x_M(t))$ for slot $t$, which can represent additional penalties or rewards associated with the network
states and control actions (such as power expenditures, packet admissions, packet drops, etc.). 
The components $x_m(t)$ can possibly be negative, and are 
general functions of $\alpha(t)$ and $\omega(t)$: 
\[ x_m(t) = \hat{x}_m(\alpha(t), \omega(t)) \]
 
 \subsection{Network Assumptions} 
 
 We assume that exogenous arrivals $a_k(t)$ satisfy: 
\begin{equation} \label{eq:lambda-k-assumption} 
 \lim_{t\rightarrow\infty} \frac{1}{t}\sum_{\tau=0}^{t-1} \expect{a_k(\tau)} = \lambda_k \: \: \forall k \in \{1, \ldots, K\} 
 \end{equation} 
where we call $\lambda_k$ the arrival rate for queue $k$. 
We assume the network state $\omega(t)$ is \emph{stationary} with a well defined stationary
distribution $\pi(\omega)$.  In the case when there are only a finite or countably infinite 
number of network states, given by a set $\Omega$, then $\pi(\omega)$ represents a probability
mass function and by stationarity we have:  
\[ \pi(\omega) = Pr[\omega(t) = \omega] \: \: \forall \omega \in \Omega, \forall t \in \{0, 1, 2, \ldots\} \]
In the case when $\Omega$ is possibly countably infinite, then we assume $\omega(t)$ is a 
random vector 
and $\pi(\omega)$ represents a probability density function.  The simplest model is when 
$\omega(t)$ is i.i.d. over slots, although the stationary assumption does not require independence
over slots. 

We further assume that the control decision $\alpha(t) \in \script{A}_{\omega(t)}$ can always be 
chosen to respect the backlog constraints, and that any algorithm that does not respect the 
backlog constraints can be modified to respect the backlog constraints without hindering performance.
This can be done simply by  never attempting to transmit more data than we
have, so that for all $k \in \{1, \ldots, K\}$ we have:
\begin{eqnarray}
\hat{b}_k(\alpha(t), \omega(t))  = b_k(t) = \tilde{b}_k(t) \label{eq:qc1} \\
\hat{y}_k(\alpha(t), \omega(t)) = y_k(t) = \tilde{y}_k(t) \label{eq:qc2} 
\end{eqnarray}
We define a policy $\alpha(t)$ that satisfies (\ref{eq:qc1})-(\ref{eq:qc2}) to be a \emph{policy
that respects queue backlog}.  It is clear that the queue dynamics (\ref{eq:multi-q-dynamics}) 
under such a policy become: 
\begin{equation} \label{eq:qc-dynamics} 
Q_k(t+1) = Q_k(t) - b_k(t) + y_k(t) + a_k(t) \: \: \forall k \in \{1, \ldots, K\}, \forall t
\end{equation}

\subsection{The Optimization Problem} 

Let $f(\bv{x})$ and $g_1(\bv{x}), g_2(\bv{x}), \ldots, g_L(\bv{x})$ be real-valued, 
continuous, and convex functions of $\bv{x} \in \mathbb{R}^M$ 
for some non-negative integer $L$ (if $L=0$ then there are no $g_l(\bv{x})$ functions).\footnote{These
functions might be defined over only a suitable subset of $\mathbb{R}^M$, such as the set
of all non-negative vectors.} 
Suppose we want to design a control algorithm that chooses $\alpha(t) \in \script{A}_{\omega(t)}$ 
over slots $t$ that solves the following general stochastic network optimization problem: 
\begin{eqnarray}
\hspace{-.3in}\mbox{Minimize:} &&\limsup_{t\rightarrow\infty} f(\overline{\bv{x}}(t)) \label{eq:opt1}  \\
\hspace{-.3in}\mbox{Subject to:} &1)&  \limsup_{t\rightarrow\infty} g_l(\overline{\bv{x}}(t)) \leq 0 \: \: \forall l \in \{1, \ldots, L\} \label{eq:opt2} \\
\hspace{-.3in}&2)& \alpha(t) \in \script{A}_{\omega(t)} \: \: \forall t \in \{0, 1, 2, \ldots\} \label{eq:opt2-point-5} \\
\hspace{-.3in}&3)& \mbox{All queues $Q_k(t)$ are mean rate stable} \label{eq:opt3} 
\end{eqnarray} 
where $\overline{\bv{x}}(t)$ is defined for $t>0$ by:
\[ \overline{\bv{x}}(t) \defequiv \frac{1}{t}\sum_{\tau=0}^{t-1} \expect{\bv{x}(\tau)} \]
We say that the problem is \emph{feasible} if there exists a control algorithm that satisfies
the constraints (\ref{eq:opt2})-(\ref{eq:opt3}).  
Assuming the problem is feasible, we define $f^{opt}$ as the infimum cost in (\ref{eq:opt1})
over all possible feasible policies that respect queue backlog. 

We define an \emph{$\omega$-only policy} as a policy that observes $\omega(t)$ and makes
a decision $\alpha(t) \in \script{A}_{\omega(t)}$ as a stationary and random function only of 
$\omega(t)$ (regardless of queue backlog, and hence not necessarily respecting the backlog
constraints (\ref{eq:qc1})-(\ref{eq:qc2})). 
By stationarity of $\omega(t)$, it follows that the expected values of $b_k(t)$, $y_k(t)$
$x_m(t)$ are the same on each slot under a particular $\omega$-only policy $\alpha^*(t) \in \script{A}_{\omega(t)}$: 
\begin{eqnarray*}
\overline{b}_k &=& \expect{\hat{b}_k(\alpha^*(t), \omega(t))} \\
\overline{y}_k &=& \expect{\hat{y}_k(\alpha^*(t), \omega(t))} \\
\overline{x}_m &=& \expect{\hat{x}_m(\alpha^*(t), \omega(t))} 
\end{eqnarray*}
where the expectation above is with respect to the stationary distribution $\pi(\omega)$
and the possibly randomized actions $\alpha^*(t)$. 
The next theorem characterizes all possible feasible algorithms (including algorithms
that are not $\omega$-only) in terms of $\omega$-only algorithms.

\begin{thm} \label{thm:omega-only} 
Suppose the problem (\ref{eq:opt1})-(\ref{eq:opt3}) is feasible with infimum cost $f^{opt}$, assumed
to be achievable arbitrarily closely by policies that respect the backlog 
constraints (\ref{eq:qc1})-(\ref{eq:qc2}). 
Then for all $\epsilon>0$ there exists
an $\omega$-only algorithm $\alpha^*(t)$ that satisfies the following for all $k \in \{1, \ldots, K\}$
and $l \in \{1, \ldots, L\}$: 
\begin{eqnarray}
g_l(\expect{\hat{\bv{x}}(\alpha^*(t), \omega(t))}) \leq \epsilon \label{eq:oo-1} \\
\lambda_k + \expect{\hat{y}_k(\alpha^*(t), \omega(t)) - \hat{b}_k(\alpha^*(t), \omega(t))} \leq \epsilon \label{eq:oo-2}  \\
f(\expect{\hat{\bv{x}}(\alpha^*(t), \omega(t))}) \leq f^{opt} + \epsilon \label{eq:oo-3} 
\end{eqnarray}
\end{thm}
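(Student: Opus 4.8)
The plan is to establish the theorem by a two-stage argument: first extract a sequence of control policies that approach the optimal performance and satisfy the constraints in a time-average-expectation sense, then use a compactness/averaging argument on the induced one-slot conditional distributions to replace each such policy by a stationary $\omega$-only policy with essentially the same performance. The starting point is that, by feasibility and the assumed closeness by backlog-respecting policies, for any $\delta>0$ there is a policy $\alpha(t)$ respecting queue backlog with $\limsup_{t\to\infty} f(\overline{\bv{x}}(t)) \le f^{opt}+\delta$, with $\limsup_{t\to\infty} g_l(\overline{\bv{x}}(t)) \le 0$ for all $l$, and with all queues mean rate stable. Using Theorem~\ref{thm:gen-nec-rate}(b) applied to each queue $Q_k(t)$ under dynamics (\ref{eq:qc-dynamics}) --- noting $a_k(t)$, $y_k(t)$, $b_k(t)$ here play the roles of (part of) the arrival and service processes --- mean rate stability yields $\limsup_{t\to\infty}\frac{1}{t}\sum_{\tau=0}^{t-1}\expect{a_k(\tau)+y_k(\tau)-b_k(\tau)}\le 0$, which combined with (\ref{eq:lambda-k-assumption}) gives $\limsup_{t\to\infty}\frac1t\sum_{\tau=0}^{t-1}\expect{y_k(\tau)-b_k(\tau)}\le -\lambda_k$. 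So there is an increasing sequence of times $t_j\to\infty$ along which the time-averaged expectations of $\bv{x}(\tau)$, $y_k(\tau)-b_k(\tau)$ all converge to limit values $\bv{x}^*$, $c_k^*$ with $f(\bv{x}^*)\le f^{opt}+\delta$, $g_l(\bv{x}^*)\le 0$, and $\lambda_k + c_k^* \le 0$; here I use continuity of $f,g_l$ and Jensen applied to $f(\overline{\bv x}(t))$ versus $\frac1t\sum f$-type quantities --- actually the cleaner route is to keep $f$ and $g_l$ evaluated at the time-average vector $\overline{\bv x}(t_j)$ and pass to the limit directly by continuity.

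The second stage is the averaging argument. For each slot $\tau$, the policy induces a conditional distribution of the control action given $\omega(\tau)$; because $\omega(t)$ is stationary with distribution $\pi(\omega)$, averaging these conditional distributions over $\tau\in\{0,\dots,t_j-1\}$ produces, in the limit, a single stationary randomized rule $\alpha^*$ that is a function of $\omega$ alone. Concretely, I would define for each $\tau$ the "achieved point'' $(\bv{x}(\tau), y_k(\tau)-b_k(\tau))$ as a random vector whose conditional expectation given $\omega(\tau)=\omega$ is some measurable function of $\omega$; the time-average of expectations is then an average of these conditional-expectation functions weighted by $\pi$. Since the set of attainable conditional-expectation vectors for a given $\omega$ is $\conv{\{(\hat{\bv x}(\alpha,\omega), \hat y_k(\alpha,\omega)-\hat b_k(\alpha,\omega)) : \alpha\in\script{A}_\omega\}}$, and this set is convex (and its expectation over $\pi$ of these per-$\omega$ convex hulls is exactly the set achievable by $\omega$-only randomized policies), the limiting average point $(\bv{x}^*, c^*_1,\dots,c^*_K)$ is achievable by some $\omega$-only policy $\alpha^*(t)$. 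This gives (\ref{eq:oo-1})--(\ref{eq:oo-3}) with $\epsilon$ replaced by $\delta$ up to the continuity slack; taking $\delta$ small relative to $\epsilon$ and absorbing the continuity modulus finishes it.

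The main obstacle is making the averaging/compactness argument rigorous in the stated generality, where $\script{A}_\omega$ is an abstract set, the functions $\hat b_k,\hat y_k,\hat x_m$ may be discontinuous and non-convex, $\Omega$ may be countably infinite, and no integrability of $\bv x(t)$ beyond what feasibility implicitly forces is assumed. The delicate points are: (i) justifying that a limit along $t_j$ of time-averaged conditional distributions exists as a genuine $\omega$-only randomized policy --- this needs either a measurable-selection / Caratheodory-type argument to represent the limiting point as a mixture of feasible actions per $\omega$, or the "decaying memory'' structure invoked in the introduction to control correlations; and (ii) controlling the $g_l$ and $f$ values, which are nonlinear in the average, so one must be careful that Jensen inequalities point the right way (they do: $f,g_l$ convex means $f(\text{time-avg of }\expect{\bv x})$ is dominated by time-avg of $\expect{f(\bv x)}$ only in one direction, so I should work throughout with $f,g_l$ evaluated at the averaged vector, never at per-slot vectors, which is exactly how (\ref{eq:opt1})--(\ref{eq:opt2}) are written). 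I expect the write-up to handle the clean case first (finite $\Omega$, or $\omega$ i.i.d.\ over slots) where the averaging is transparent, and then remark that the decaying-memory property extends it; the heavy lifting and any appeal to an external compactness lemma will be deferred to an appendix.
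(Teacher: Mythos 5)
Your two-stage outline is conceptually on target, and your first stage (using mean rate stability plus (\ref{eq:lambda-k-assumption}) to control the time-averaged drift of each queue, and working with $f,g_l$ applied to the averaged vector $\overline{\bv{x}}(t)$ rather than per-slot) matches the paper. But the second stage is where you make things harder than they need to be, and the gap you flag at the end is real in the route you chose: taking $t_j\to\infty$ and extracting a limit of time-averaged conditional distributions does require a compactness or measurable-selection argument that the stated hypotheses (abstract $\script{A}_\omega$, discontinuous non-convex $\hat{b}_k,\hat{y}_k,\hat{x}_m$) do not provide for free. You would have to add topological assumptions on $\script{A}_\omega$ or retreat to special cases.

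The paper avoids the limit entirely with a finite-horizon randomization trick (Lemma~\ref{lem:omega-only}). Fix a single finite slot $t^*>0$ at which the $\limsup$-type quantities from feasibility are within $\epsilon$ of their limits (such a $t^*$ exists by definition of $\limsup$ and by $\expect{Q_k(t)/t}\to0$). Run the given policy over $\{0,\ldots,t^*-1\}$, pick $T$ uniformly at random in that window, and look at the pair $(\omega(T),\alpha(T))$. Because $\omega(\cdot)$ is stationary, $\omega(T)$ has law $\pi$. Now define $\alpha^*$ as the $\omega$-only rule that, given $\omega$, draws its action from the conditional distribution of $\alpha(T)$ given $\omega(T)=\omega$. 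By construction the one-slot expectation of each of $\hat{b}_k$, $\hat{y}_k$, $\hat{x}_m$ under $\alpha^*$ \emph{equals} the corresponding time-averaged expectation $\frac{1}{t^*}\sum_{\tau=0}^{t^*-1}\expect{\cdot}$ under the original policy. No limit of distributions, no compactness, no Carath\'eodory decomposition. This exact equality also dissolves your worry about the direction of Jensen: since $\overline{\bv{x}}(t^*)$ literally equals $\expect{\hat{\bv{x}}(\alpha^*(t),\omega(t))}$, the values $g_l(\overline{\bv{x}}(t^*))$ and $f(\overline{\bv{x}}(t^*))$ transfer verbatim to (\ref{eq:oo-1}) and (\ref{eq:oo-3}); convexity of $f,g_l$ is never invoked in this theorem's proof. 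Your appeal to Theorem~\ref{thm:gen-nec-rate}(b) is a fine but slightly heavier route to (\ref{eq:oo-2}); the paper just sums (\ref{eq:qc-dynamics}) from $0$ to $t^*-1$ and uses $\expect{Q_k(t^*)/t^*}\le\epsilon$ directly.
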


Before proving Theorem \ref{thm:omega-only}, we present a preliminary lemma. 

\begin{lem} \label{lem:omega-only} For any algorithm that chooses $\alpha(\tau) \in \script{A}_{\omega(\tau)}$ over slots $\tau \in \{0, 1, 2, \ldots, \}$, and for any slot $t>0$, there exists an $\omega$-only policy 
$\alpha^*(t)$ that yields the following for all $k \in \{1, \ldots, K\}$, $m \in \{1, \ldots, M\}$: 
\begin{eqnarray*}
\frac{1}{t}\sum_{\tau=0}^{t-1} \expect{\hat{b}_k(\alpha(t), \omega(t))} &=& \expect{\hat{b}_k(\alpha^*(t), \omega(t))} \\
\frac{1}{t}\sum_{\tau=0}^{t-1} \expect{\hat{y}_k(\alpha(t), \omega(t))} &=& \expect{\hat{y}_k(\alpha^*(t), \omega(t))} \\
\frac{1}{t}\sum_{\tau=0}^{t-1} \expect{\hat{x}_m(\alpha(t), \omega(t))} &=& \expect{\hat{x}_m(\alpha^*(t), \omega(t))} 
\end{eqnarray*}
\end{lem}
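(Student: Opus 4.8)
The plan is to construct the required $\omega$-only policy by \emph{time-averaging}, over $\tau \in \{0,1,\ldots,t-1\}$, the conditional action distributions that the given (arbitrary, possibly history-dependent) algorithm induces on slot $\tau$. First I would fix the arbitrary algorithm and, for each slot $\tau$, define the regular conditional distribution $p_\tau(\cdot\,|\,\omega)$ of the action $\alpha(\tau)$ given the observed state $\omega(\tau)=\omega$; that is, $p_\tau(A\,|\,\omega) = Pr[\alpha(\tau) \in A \,|\, \omega(\tau) = \omega]$ for measurable $A \subseteq \script{A}_\omega$. This marginalizes out whatever history (queue backlogs, past states and actions) the algorithm uses. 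Because $\omega(t)$ is stationary with distribution $\pi(\omega)$, we have $Pr[\omega(\tau)\in\cdot]=\pi$ for every $\tau$, so conditioning on $\omega(\tau)$ (the tower property) gives, for each $k$ and $\tau$,
\[ \expect{\hat{b}_k(\alpha(\tau),\omega(\tau))} = \int \pi(d\omega)\int_{\script{A}_\omega} p_\tau(d\alpha\,|\,\omega)\, \hat{b}_k(\alpha,\omega), \]
and the same identity holds with $\hat{y}_k$ or $\hat{x}_m$ in place of $\hat{b}_k$.

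Next I would define the candidate $\omega$-only policy $\alpha^*(t)$ to be the one that, upon observing $\omega(t)=\omega$, draws its action from the mixture distribution $p^*(\cdot\,|\,\omega) \defequiv \frac{1}{t}\sum_{\tau=0}^{t-1} p_\tau(\cdot\,|\,\omega)$. Since a convex combination of probability distributions on $\script{A}_\omega$ is again a probability distribution on $\script{A}_\omega$, $p^*$ is a legitimate stationary randomized rule depending only on $\omega(t)$, so $\alpha^*(t)$ is indeed an $\omega$-only policy. Evaluating its single-slot expectations — again using $\omega(t)\sim\pi$ — and then interchanging the \emph{finite} sum with the integrals (pure linearity) yields
\[ \expect{\hat{b}_k(\alpha^*(t),\omega(t))} = \int \pi(d\omega)\int_{\script{A}_\omega} p^*(d\alpha\,|\,\omega)\,\hat{b}_k(\alpha,\omega) = \frac{1}{t}\sum_{\tau=0}^{t-1}\expect{\hat{b}_k(\alpha(\tau),\omega(\tau))}, \]
which is exactly the first claimed equality. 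The identical computation with $\hat{y}_k$ and $\hat{x}_m$ gives the other two equalities; crucially it is the \emph{same} policy $\alpha^*(t)$ in all three statements, since all three follow from the single construction of $p^*$.

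In the most transparent case — $\Omega$ finite or countable, so $\pi$ is a pmf, and the algorithm's randomization on each slot has a well-defined conditional pmf $p_\tau(\alpha\,|\,\omega)$ over $\script{A}_\omega$ — all integrals are sums and the argument is simply the regrouping of $\frac{1}{t}\sum_{\tau=0}^{t-1}\sum_\omega\sum_{\alpha} \pi(\omega)\,p_\tau(\alpha\,|\,\omega)\,\hat{b}_k(\alpha,\omega)$. I would present that case in full and then note that the general case follows by the analogous measure-theoretic version.

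The routine ingredients (linearity, the tower property, Fubini on a finite sum) are harmless. The one place that genuinely needs care — and what I expect to be the main obstacle — is the measure-theoretic setup: the action sets $\script{A}_\omega$ are abstract, state-dependent, and possibly continua, so one must ensure the conditional laws $p_\tau(\cdot\,|\,\omega)$ exist as regular conditional probabilities and that $\hat{b}_k,\hat{y}_k,\hat{x}_m$ are measurable and integrable enough to justify all the interchanges. These are exactly the mild regularity/measurability conditions implicitly in force, so the obstacle is one of careful bookkeeping rather than of substance.
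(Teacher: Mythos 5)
Your proposal is correct and matches the paper's proof in substance. The paper constructs $\alpha^*(t)$ by picking a uniformly random time $T\in\{0,\ldots,t-1\}$, forming the pair $(\tilde\omega,\tilde\alpha)=(\omega(T),\alpha(T))$, and taking the conditional law of $\tilde\alpha$ given $\tilde\omega=\omega$; by stationarity (so $Pr[\omega(\tau)=\omega]=\pi(\omega)$ for every $\tau$) that conditional law is exactly the unweighted mixture $\frac{1}{t}\sum_{\tau=0}^{t-1}p_\tau(\cdot\,|\,\omega)$ you write down explicitly, so the two constructions coincide and the verification of the three equalities is the same linearity-plus-tower computation in both.
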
 
\begin{proof} (Lemma \ref{lem:omega-only}) For a given slot $t>0$, run the $\alpha(\tau)$ policy
and generate random quantities $[\tilde{\omega}, \tilde{\alpha}]$ as follows:  Uniformly pick
a time $T \in \{0, 1, \ldots, t-1\}$, and define $[\tilde{\omega}, \tilde{\alpha}] \defequiv [\omega(T), \alpha(T)]$, 
being the network state observed at the randomly chosen time $T$
and the corresponding 
network action $\alpha(T)$.  Because $\omega(\tau)$ is stationary, it follows that $\tilde{\omega}$ 
has the stationary distribution $\pi(\omega)$.  Now define the $\omega$-only policy $\alpha^*(t)$
to choose $\omega \in \script{A}_{\omega(t)}$ according to the conditional distribution of $\tilde{\alpha}$
given $\tilde{\omega}$ (generated from the joint distribution of $[\tilde{\omega}, \tilde{\alpha}]$). 
It follows that the expectations of $\hat{b}_k(\cdot)$, $\hat{y}_k(\cdot)$, $\hat{x}_m(\cdot)$ under
this $\omega$-only policy $\alpha^*(t)$ are as given in the statement of the lemma. 
\end{proof} 

\begin{proof} (Theorem \ref{thm:omega-only})
Fix $\epsilon>0$, and 
suppose $\alpha(t)$ is a policy that respects the queue backlog constraints (\ref{eq:qc1})-(\ref{eq:qc2}),  satisfies the feasibility constraints (\ref{eq:opt2})-(\ref{eq:opt3}), and such that: 
\[ \limsup_{t\rightarrow\infty} f(\overline{\bv{x}}(t)) \leq f^{opt} + \epsilon/2 \]
Then $\expect{Q_k(t)/t}\rightarrow 0$ for all $k \in \{1, \ldots, K\}$, and 
there is a slot $t^*>0$ such that: 
\begin{eqnarray}
\expect{Q_k(t^*)/t^*} \leq \epsilon & \forall k \in \{1, \ldots, K\} \label{eq:omega-only-foo} \\
g_l(\overline{\bv{x}}(t^*)) \leq \epsilon & \forall l \in \{1, \ldots, L\} \label{eq:omega-only-g}   \\
\frac{1}{t^*}\sum_{\tau=0}^{t^*-1}\expect{a_k(\tau)} \geq \lambda_k - \epsilon & \forall k \in \{1, \ldots, K\} \label{eq:omega-only-lambda}  \\
f(\overline{\bv{x}}(t)) \leq f_{opt} + \epsilon \label{eq:omega-only-f} 
\end{eqnarray}
where (\ref{eq:omega-only-lambda}) holds by (\ref{eq:lambda-k-assumption}). 
By (\ref{eq:qc-dynamics}) we have for all $k \in \{1, \ldots, K\}$: 
\begin{eqnarray*}
 \sum_{\tau=0}^{t^*-1} [a_k(\tau) + y_k(\tau) - b_k(\tau)] &=& Q_k(t^*) - Q_k(0) \\
 &\leq& Q_k(t^*) 
 \end{eqnarray*}
Taking expectations, dividing by $t^*$,  
and using (\ref{eq:omega-only-foo}) yields: 
\begin{eqnarray}
\frac{1}{t^*}\sum_{\tau=0}^{t^*-1}\expect{a_k(\tau) + y_k(\tau) - b_k(\tau)} \leq \epsilon  \label{eq:plug-t-star} 
\end{eqnarray} 
The above holds for all $k \in \{1, \ldots, K\}$.  Using Lemma \ref{lem:omega-only}, we know there
must be an $\omega$-only policy $\alpha^*(t)$ that satisfies for all $l \in \{1, \ldots, L\}$
and all $k \in \{1, \ldots, K\}$: 
\begin{eqnarray*}
g_l(\expect{\hat{\bv{x}}(\alpha^*(t), \omega(t))}) &\leq& \epsilon  \\
\lambda_k +  \expect{\hat{y}_k(\alpha^*(t), \omega(t)) - \hat{b}_k(\alpha^*(t), \omega(t))} &\leq& 2\epsilon \\
f(\overline{\bv{x}}(t)) &\leq& f^{opt} + \epsilon 
\end{eqnarray*}
Redefining $\epsilon' = 2\epsilon$ proves the result. 
\end{proof} 

We note that if $\omega(t)$ is defined by a periodic Markov chain, then it can be made stationary
by randomizing over the period.  In the case when $\omega(t)$ takes values in a finite set
$\Omega$, we can prove the same result without the stationary 
assumption \cite{neely-power-network-jsac}\cite{neely-thesis}. 

\section{The Capacity Region} 

Now define $\Lambda$ as the the set of all non-negative 
rate vectors $\bv{\lambda} = (\lambda_1, \ldots, \lambda_K)$ such that for all $\epsilon>0$, 
there exists an $\omega$-only
policy such that the constraints (\ref{eq:oo-1})-(\ref{eq:oo-2}) of Theorem \ref{thm:omega-only} 
hold.  It can be shown that this set $\Lambda$ is a closed set. 
By Theorem \ref{thm:omega-only}, we know that $\bv{\lambda} \in \Lambda$ is \emph{necessary} 
for the existence of an algorithm that makes
all queues $Q_k(t)$ mean rate stable and that satisfies the $g_l(\cdot)$ constraints 
(\ref{eq:opt2}).  Because, under some mild technical assumptions, mean 
rate stability is the \emph{weakest} form of stability, it follows that the constraint $\bv{\lambda} \in \Lambda$
is \emph{also} a necessary condition for stabilizing the network (subject to the $g_l(\cdot)$ constraints) 
under either rate stability, steady state stability, or strong stability. 

We now show that the set $\Lambda$ is the \emph{network capacity region}, in the sense that, under
some mild additional assumptions on the processes $a_k(t)$ and $\omega(t)$, it is possible to 
make all queues $Q_k(t)$ \emph{strongly stable} whenever $\bv{\lambda}$ is an \emph{interior point} 
of $\Lambda$. Because the technical assumptions we introduce will also imply that 
strong stability is the strongest stability definition, it follows that the same algorithm that makes
all queues $Q_k(t)$ strongly stable also makes them 
rate stable, mean rate stable, and steady state stable.  For simplicity of exposition, we treat the 
case when the functions $f(\bv{x})$, $g_l(\bv{x})$ are \emph{linear or affine} (the case of convex functions
is treated in \cite{now}\cite{neely-thesis}\cite{neely-fairness-infocom05}\cite{neely-universal-scheduling}). 
Note that $\expect{f(\bv{X})} = f(\expect{\bv{X}})$ for any linear or affine function and any random
vector $\bv{X}$. 

\subsection{The Decaying Memory Property} \label{section:decaying-memory}

Suppose that $\omega(t)$ has stationary distribution $\pi(\omega)$ as before, and that arrival
processes $a_k(t)$ have rates $\lambda_k$ that satisfy (\ref{eq:lambda-k-assumption}). 
Define $H(t)$ as the \emph{history} of the system over slots $\tau \in \{0, 1, \ldots, t-1\}$, consisting
of the initial queue states $Q_k(0)$ and all $\omega(\tau)$, $\alpha(\tau)$ values over this interval. 
We say that the processes $a_k(t)$ and $\omega(t)$ together with the functions
$\hat{b}_k(\cdot)$, $\hat{y}_k(\cdot)$, $g_l(\hat{\bv{x}}(\cdot))$,  have the \emph{decaying memory property} 
if for any $\omega$-only policy $\alpha^*(t)$ and  any $\delta>0$, there is an integer $T>0$ (which 
may depend on $\delta$ and $\alpha^*(t)$) such
that for all slots $t_0\geq 0$, all possible values of $H(t_0)$, and all $k \in \{1, \ldots, K\}$, 
$l \in \{1, \ldots, L\}$ we have: 
\begin{eqnarray}
\frac{1}{T}\sum_{\tau=t_0}^{t_0+T-1} \expect{a_k(\tau) + \hat{y}_k(\alpha^*(\tau), \omega(\tau))|H(t_0)} \nonumber \\
 - \frac{1}{T}\sum_{\tau=t_0}^{t_0+T-1} \expect{\hat{b}_k(\alpha^*(\tau), \omega(\tau))|H(t_0) } 
\leq \lambda_k  \nonumber \\
+ \mathbb{E}_{\pi}\left\{\hat{y}_k(\alpha^*(t), \omega(t)) - \hat{b}_k(\alpha^*(t), \omega(t))  \right\} + \delta \label{eq:dm1}  \\
\frac{1}{T}\sum_{\tau=t_0}^{t_0+T-1}\expect{g_l(\hat{\bv{x}}(\alpha(\tau), \omega(\tau)))|H(t_0)} \nonumber  \\
- \mathbb{E}_{\pi}\left\{ g_l(\hat{\bv{x}}(\alpha^*(t), \omega(t))) \right\}\leq \delta \label{eq:dm2} \\
\frac{1}{T}\sum_{\tau=t_0}^{t_0+T-1}\expect{f(\hat{\bv{x}}(\alpha(\tau), \omega(\tau)))|H(t_0)} \nonumber  \\
- \mathbb{E}_{\pi}\left\{f(\hat{\bv{x}}(\alpha^*(t), \omega(t))) \right\}\leq \delta \label{eq:dm3}
\end{eqnarray}
where $\mathbb{E}_{\pi}\left\{\cdot\right\}$ represents an expectation over the stationary distribution
$\pi(\omega)$ for $\omega(t)$.  Intuitively, the decaying memory property says that the affects of past
history decay over $T$ slots, so that all conditional time average expectations over this interval 
are within $\delta$ of their stationary values.  This property can be shown to hold when $\omega(t)$
and $a_k(t)$ are driven by a finite state irreducible (possibly not aperiodic) Markov chain, and when 
the conditional expectation of all processes is finite given the current state. 

\subsection{Second Moment Boundedness Assumptions} \label{section:second-moment}

We assume that for all $t$ and all (possibly randomized) control 
actions $\alpha(t) \in \script{A}_{\omega(t)}$, the second moment of the processes
are bounded, so that there is a finite constant $\sigma^2$ such that: 
\begin{eqnarray*}
\expect{\hat{y}_k(\alpha(t), \omega(t))^2} &\leq& \sigma^2 \\
\expect{\hat{b}_k(\alpha(t), \omega(t))^2} &\leq& \sigma^2 \\
\expect{g_l(\hat{\bv{x}}(\alpha(t), \omega(t)))^2} &\leq& \sigma^2 \\
\expect{a_k(t)^2} &\leq& \sigma^2
\end{eqnarray*}
Note that these second moment assumptions also ensure first moments are bounded. 
Finally, we assume the first moment of $f(\bv{x}(t))$ is bounded by finite constants $f_{min}$
and $f_{max}$, so that for all (possibly randomized) control actions $\alpha(t) \in \script{A}_{\omega(t)}$
we have: 
\begin{equation} \label{eq:f-bounded} 
 f_{min} \leq \expect{f(\hat{\bv{x}}(\alpha(t), \omega(t)))} \leq f_{max} 
 \end{equation}

\subsection{Lyapunov Drift} 

We use the framework of \cite{now}\cite{neely-thesis}\cite{neely-energy-it} 
to design a policy to solve the optimization 
problem (\ref{eq:opt1})-(\ref{eq:opt3}). To this end, for each inequality constraint (\ref{eq:opt2}) 
define a \emph{virtual queue} $Z_l(t)$ that is initially empty and that has update equation: 
\begin{equation} \label{eq:z-dynamics} 
Z_l(t+1) = \max[Z_l(t) + g_l(\bv{x}(t)), 0] 
\end{equation} 
where $\bv{x}(t) = \hat{\bv{x}}(\alpha(t), \omega(t))$. The actual queues $Q_k(t)$ are assumed
to satisfy (\ref{eq:multi-q-dynamics}). 

Define $\bv{\Theta}(t) \defequiv [\bv{Q}(t), \bv{Z}(t)]$ as a composite vector of all actual and 
virtual queues. Define a \emph{Lyapunov function} $L(\bv{\Theta}(t))$ as follows: 
\[ L(\bv{\Theta}(t)) \defequiv \frac{1}{2}\sum_{k=1}^KQ_k(t)^2 + \frac{1}{2}\sum_{l=1}^LZ_l(t)^2 \]
For a given integer $T>0$, define the \emph{$T$-step conditional Lyapunov drift} 
$\Delta_T(\bv{\Theta}(t))$ as follows:\footnote{Strictly
speaking, better notation is $\Delta_T(\bv{\Theta}(t), t)$, although we use the simpler notation
$\Delta_T(\bv{\Theta}(t))$ as a formal representation of the right hand side of (\ref{eq:drift-def}).} 
\begin{equation}\label{eq:drift-def} 
\Delta_T(\bv{\Theta}(t)) \defequiv \expect{L(\bv{\Theta}(t+T)) - L(\bv{\Theta}(t))|\bv{\Theta}(t)} 
\end{equation}

\begin{lem} \label{lem:driftyy} For any control policy and for any parameter $V\geq0$, 
$\Delta_T(\bv{\Theta}(t))$ satisfies: 
\begin{eqnarray}
&&  \Delta_T(\bv{\Theta}(t)) + V\sum_{\tau=t}^{t+T-1}\expect{f(\bv{x}(\tau))|\bv{\Theta}(t)}  \leq \nonumber \\
&& T^2\expect{\hat{B}|\bv{\Theta}(t)}  + V\sum_{\tau=t}^{t+T-1}\expect{f(\bv{x}(\tau))|\bv{\Theta}(t)} \nonumber \\
&& + \sum_{k=1}^K\sum_{\tau=t}^{t+T-1}\expect{Q_k(t)[a_k(\tau) + y_k(\tau) - b_k(\tau)]|\bv{\Theta}(t)}\nonumber \\
&& + \sum_{l=1}^L\sum_{\tau=t}^{t+T-1}\expect{Z_l(t)g_l(\bv{x}(\tau))|\bv{\Theta}(t)} \label{eq:drift} 
\end{eqnarray}
where $\hat{B}$ is a random variable that satisfies: 
\[ \expect{\hat{B}} \leq B \]
where $B$ is a finite constant that depends on the worst case  
second moment bounds of
the $a_k(\tau)$, $b_k(\tau)$, $y_k(\tau)$, $g_l(\bv{x}(\tau))$ processes, as described in 
more detail in the proof. 
\end{lem}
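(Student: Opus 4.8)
The plan is to establish \eqref{eq:drift} by expanding the one-step Lyapunov drift along the $T$-slot window and summing. First I would derive a single-slot bound on $L(\bv{\Theta}(\tau+1)) - L(\bv{\Theta}(\tau))$ for an arbitrary slot $\tau$. Using the queue dynamics \eqref{eq:multi-q-dynamics}, which gives $Q_k(\tau+1) \leq \max[Q_k(\tau) - b_k(\tau), 0] + y_k(\tau) + a_k(\tau)$, and squaring, one obtains $Q_k(\tau+1)^2 \leq Q_k(\tau)^2 + (a_k(\tau) + y_k(\tau) - b_k(\tau))^2 + 2Q_k(\tau)[a_k(\tau) + y_k(\tau) - b_k(\tau)]$, using the elementary fact that for $q \geq 0$, $\max[q-b,0]^2 \leq (q-b)^2$ together with $(\max[q-b,0]+c)^2 \leq (q-b)^2 + c^2 + 2q c$ when $c \geq 0$ (since $\max[q-b,0] \leq q$). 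Similarly, from \eqref{eq:z-dynamics}, $Z_l(\tau+1)^2 \leq Z_l(\tau)^2 + g_l(\bv{x}(\tau))^2 + 2 Z_l(\tau) g_l(\bv{x}(\tau))$.

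Next I would sum these single-slot inequalities over $\tau \in \{t, \ldots, t+T-1\}$, take conditional expectations given $\bv{\Theta}(t)$, and add the penalty term $V\sum_{\tau=t}^{t+T-1} \expect{f(\bv{x}(\tau))|\bv{\Theta}(t)}$ to both sides. The telescoping left side becomes $\Delta_T(\bv{\Theta}(t)) + V\sum_{\tau}\expect{f(\bv{x}(\tau))|\bv{\Theta}(t)}$. The right side has: the ``noise'' terms $\sum_{k}\sum_\tau \expect{(a_k(\tau)+y_k(\tau)-b_k(\tau))^2|\bv{\Theta}(t)} + \sum_l \sum_\tau \expect{g_l(\bv{x}(\tau))^2|\bv{\Theta}(t)}$, the cross terms with $Q_k(\tau)$ and $Z_l(\tau)$, and the penalty term. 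The key issue is that the cross terms in my raw bound involve $Q_k(\tau)$ and $Z_l(\tau)$ at the running slot $\tau$, whereas \eqref{eq:drift} wants them at the fixed slot $t$. I would bridge this by writing $Q_k(\tau) \leq Q_k(t) + \sum_{s=t}^{\tau-1}(a_k(s)+y_k(s))$ (dropping the subtracted service since it only decreases $Q_k$) and similarly $Z_l(\tau) \leq Z_l(t) + \sum_{s=t}^{\tau-1}|g_l(\bv{x}(s))|$; substituting these and bounding the resulting extra products using $2uv \leq u^2+v^2$ and the second-moment bounds from Section~\ref{section:second-moment} lumps everything beyond the $Q_k(t)$ and $Z_l(t)$ cross terms into the $T^2 \expect{\hat{B}|\bv{\Theta}(t)}$ term. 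The factor $T^2$ arises because there are $O(T)$ slots in the window, each contributing $O(T)$ accumulated difference, so the worst-case products scale like $T^2$ times a second-moment constant.

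The main obstacle is the bookkeeping that produces the clean constant $B$: one must carefully account for all the cross-window product terms (e.g. $\expect{Q_k(t)}$ multiplied by accumulated arrivals at later slots, and the squared accumulated increments), apply Cauchy--Schwarz or $2uv \le u^2 + v^2$ uniformly, and invoke the uniform second-moment bounds $\sigma^2$ on $a_k$, $b_k$, $y_k$, $g_l(\bv{x})$ to conclude that the aggregate is at most $T^2 B$ for a finite $B$ depending only on $K$, $L$, and $\sigma^2$ (together with $f_{min}, f_{max}$ for the penalty bookkeeping, though $f$ appears with coefficient $V$ and cancels exactly in \eqref{eq:drift} as the $V\sum\expect{f}$ terms match on both sides). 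Everything else is a routine telescoping-and-squaring argument; I would state the explicit value of $B$ (something like $B = \frac{1}{2}K(\text{const})\sigma^2 + \frac{1}{2}L(\text{const})\sigma^2$ after the $T^2$ is factored out) in the proof and note that the precise constant is immaterial for the subsequent analysis.
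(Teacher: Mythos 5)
Your overall route is sound and genuinely different from the paper's Appendix G proof, so it is worth spelling out the comparison. The paper does not telescope one-step drifts: it first applies a \emph{$T$-slot} queue update inequality
\[
Q_k(t+T) \le \max\Bigl[Q_k(t) - \textstyle\sum_{\tau=t}^{t+T-1} b_k(\tau),\, 0\Bigr] + \textstyle\sum_{\tau=t}^{t+T-1}[a_k(\tau)+y_k(\tau)],
\]
and squares once (with the analogue for $Z_l$). Because $Q_k(t)$ appears directly in this $T$-slot bound, the cross term comes out multiplied by $Q_k(t)$ immediately, with no need for a bridge, and the $T^2\hat{B}$ term arises as the square of a $T$-term sum. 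Your approach instead sums one-step drift inequalities and then writes $Q_k(\tau) = Q_k(t) + (Q_k(\tau)-Q_k(t))$ to replace $Q_k(\tau)$ by $Q_k(t)$ in the cross term. That buys you a more elementary derivation (no need to quote the $T$-step queue bound from \cite{now}), at the price of the extra bookkeeping in the bridge. Interestingly, your bridge is exactly the device the paper itself uses later for Lemma~\ref{lem:drift2} in Appendix F, via $|Q_k(\tau)-Q_k(t)| \le \sum_{v=t}^{\tau-1}[a_k(v)+y_k(v)+b_k(v)]$, so the technique is well precedented.

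Two technical points to tighten. First, your claimed one-step bound $Q_k(\tau+1)^2 \le Q_k(\tau)^2 + (a_k+y_k-b_k)^2 + 2Q_k(\tau)[a_k+y_k-b_k]$ does not follow from the elementary fact you cite, and in fact fails at $Q_k(\tau)=0$, $b_k=a_k+y_k=1$ (left side $=1$, right side $=0$). The fact you state actually gives $Q_k(\tau+1)^2 \le Q_k(\tau)^2 + b_k(\tau)^2 + (a_k(\tau)+y_k(\tau))^2 + 2Q_k(\tau)[a_k(\tau)+y_k(\tau)-b_k(\tau)]$; note $b^2+(a+y)^2$ is larger than $(a+y-b)^2$ and is the correct term. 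This only changes the constant hiding in $\hat{B}$. Second, the one-sided substitution $Q_k(\tau) \le Q_k(t) + \sum_{s<\tau}(a_k(s)+y_k(s))$ does not preserve the inequality when it multiplies the bracket $[a_k(\tau)+y_k(\tau)-b_k(\tau)]$, which can be negative; you must instead write the exact identity $Q_k(\tau)=Q_k(t)+(Q_k(\tau)-Q_k(t))$ and bound the residual cross term via $|Q_k(\tau)-Q_k(t)|\le \sum_{s=t}^{\tau-1}[a_k(s)+y_k(s)+b_k(s)]$ together with $|a_k(\tau)+y_k(\tau)-b_k(\tau)|\le a_k(\tau)+y_k(\tau)+b_k(\tau)$. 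With those two repairs, the per-slot residual has expectation $O(T)$ by Cauchy--Schwarz and the second-moment bounds of Section~\ref{section:second-moment}, so summing over $T$ slots gives the $T^2\expect{\hat{B}|\bv{\Theta}(t)}$ term and the lemma follows.
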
 

\begin{proof} 
See Appendix G. 
\end{proof} 

The parameter $V>0$ will determine a performance-backlog 
tradeoff, as in \cite{neely-thesis}\cite{now}\cite{neely-energy-it}\cite{neely-fairness-infocom05}.

\subsection{The Drift-Plus-Penalty Algorithm} 

Consider now the following algorithm, defined in terms of given positive 
parameters $C>0$ and $V>0$. Every slot $t$, observe the current $\omega(t)$ and the 
current actual and virtual queues $Q_k(t)$, $Z_l(t)$, and choose a control action $\alpha(t) \in \script{A}_{\omega(t)}$ that comes within $C$ of minimizing the following expression: 
\begin{eqnarray*}
Vf(\hat{\bv{x}}(\alpha(t), \omega(t)))  + \sum_{l=1}^LZ_l(t)g_l(\hat{\bv{x}}(\alpha(t), \omega(t))    \\
+ \sum_{k=1}^KQ_k(t)[\hat{y}_k(\alpha(t), \omega(t)) - \hat{b}_k(\alpha(t), \omega(t))] 
\end{eqnarray*}
This algorithm is designed to come within an additive constant $C$ of minimizing the right 
hand side of (\ref{eq:drift}) over all actions $\alpha(t) \in \script{A}_{\omega(t)}$ that can be made, 
given the current queue states $\bv{\Theta}(t)$.  We call such a policy a \emph{$C$-approximation}. 
Note that a $0$-approximation is one that achieves the exact infimum on the right hand
side of (\ref{eq:drift}).  The notion of a $C$-approximation is introduced in case the infimum cannot
be achieved, or when the infimum is difficult to achieve exactly. 

 \begin{lem}  \label{lem:drift2} Suppose we use a  $C$-approximation every slot.  Then for any time $t$,  
 any integer $T>0$, and any $\bv{\Theta}(t)$, we have: 
 \begin{eqnarray}
 &&  \Delta_T(\bv{\Theta}(t)) + V\sum_{\tau=t}^{t+T-1}\expect{f(\bv{x}(\tau))|\bv{\Theta}(t)}  \leq \nonumber \\
&& CT + T^2\expect{\hat{B}|\bv{\Theta}(t)} \nonumber \\
&&   + T(T-1)\expect{\hat{D}|\bv{\Theta}(t)}  + V\sum_{\tau=t}^{t+T-1}\expect{f(\bv{x}^*(\tau))|\bv{\Theta}(t)} \nonumber \\
&& + \sum_{k=1}^KQ_k(t)\sum_{\tau=t}^{t+T-1}\expect{a_k(\tau) + y_k^*(\tau) - b_k^*(\tau)|\bv{\Theta}(t)}\nonumber \\
&& + \sum_{l=1}^LZ_l(t)\sum_{\tau=t}^{t+T-1}\expect{g_l(\bv{x}^*(\tau))|\bv{\Theta}(t)} \label{eq:drift2} 
 \end{eqnarray}
 where $\bv{x}^*(\tau)$, $y_k^*(\tau)$, $b_k^*(\tau)$ are values that correspond to any alternative
 policy for choosing $\alpha^*(\tau) \in \script{A}_{\omega(\tau)}$: 
 \begin{eqnarray*}
\bv{x}^*(\tau) &\defequiv&  \hat{\bv{x}}(\alpha^*(\tau), \omega(\tau)) \\
y_k^*(\tau) &\defequiv& \hat{y}_k(\alpha^*(\tau), \omega(\tau)) \\
b_k^*(\tau) &\defequiv& \hat{b}_k(\alpha^*(\tau), \omega(\tau)) 
 \end{eqnarray*}
 and where $\hat{D}$ is a random variable that satisfies: 
 \[ \expect{\hat{D}} \leq D \]
where $D$ is a finite constant related to the worst case 
 second moments of $a_k(t)$, $b_k(t)$, $y_k(t)$, 
 $g_l(\bv{x}(t))$, described in more detail in the proof. 
 \end{lem}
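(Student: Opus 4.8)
The plan is to begin from the $T$-step drift bound of Lemma~\ref{lem:driftyy} and, on its right-hand side, trade the quantities $f(\bv{x}(\tau))$, $y_k(\tau)$, $b_k(\tau)$, $g_l(\bv{x}(\tau))$ produced by the $C$-approximation for the corresponding quantities $f(\bv{x}^*(\tau))$, $y_k^*(\tau)$, $b_k^*(\tau)$, $g_l(\bv{x}^*(\tau))$ of an arbitrary alternative policy (not necessarily $\omega$-only), paying for the swap with the extra terms $CT$ and $T(T-1)\expect{\hat D|\bv{\Theta}(t)}$. Write the right-hand side of (\ref{eq:drift}) as $T^2\expect{\hat B|\bv{\Theta}(t)} + \sum_{\tau=t}^{t+T-1}\expect{\Phi(\tau)|\bv{\Theta}(t)}$, where $\Phi(\tau) \defequiv Vf(\bv{x}(\tau)) + \sum_{k=1}^K Q_k(t)[a_k(\tau)+y_k(\tau)-b_k(\tau)] + \sum_{l=1}^L Z_l(t)g_l(\bv{x}(\tau))$, and let $\Phi^*(\tau)$ be the same expression with the starred quantities substituted. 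It then suffices to prove $\sum_\tau\expect{\Phi(\tau)|\bv{\Theta}(t)} \leq CT + T(T-1)\expect{\hat D|\bv{\Theta}(t)} + \sum_\tau\expect{\Phi^*(\tau)|\bv{\Theta}(t)}$, since substituting this into Lemma~\ref{lem:driftyy} and expanding $\Phi^*(\tau)$ into its $Vf$, $Q_k(t)$-weighted, and $Z_l(t)$-weighted pieces reproduces (\ref{eq:drift2}) verbatim.

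First I would apply the defining property of the $C$-approximation one slot at a time. Fix $\tau\in\{t,\ldots,t+T-1\}$ and condition on the history through slot $\tau$, so that in particular $\bv{\Theta}(\tau)$ and $\omega(\tau)$ are known: by definition the action chosen at slot $\tau$ comes within $C$ of minimizing the one-slot drift-plus-penalty expression of the algorithm over $\script{A}_{\omega(\tau)}$, and hence attains a value at most $C$ above the value produced by the action $\alpha^*(\tau)\in\script{A}_{\omega(\tau)}$ of the alternative policy. Adding the common quantity $\sum_k Q_k(\tau)a_k(\tau)$ to both sides and taking conditional expectations (to absorb any randomization in the two policies) gives
\[ \expect{\Psi(\tau)\,|\,\bv{\Theta}(\tau)} \leq C + \expect{\Psi^*(\tau)\,|\,\bv{\Theta}(\tau)}, \]
where $\Psi(\tau)$ and $\Psi^*(\tau)$ are $\Phi(\tau)$ and $\Phi^*(\tau)$ with the time-$t$ weights $Q_k(t), Z_l(t)$ replaced by the current weights $Q_k(\tau), Z_l(\tau)$.

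The main work --- and the source of the $T(T-1)\hat D$ term --- is realigning the queue weights from slot $\tau$ back to slot $t$. Writing $\Phi(\tau)-\Psi(\tau) = \sum_k[Q_k(t)-Q_k(\tau)][a_k(\tau)+y_k(\tau)-b_k(\tau)] + \sum_l[Z_l(t)-Z_l(\tau)]g_l(\bv{x}(\tau))$ and the analogous identity for $\Phi^*(\tau)-\Psi^*(\tau)$, the difference of these two expressions has its $a_k(\tau)$ contributions cancel, leaving only products of a backlog difference with a service-or-penalty difference. Telescoping (\ref{eq:qc-dynamics}) and (\ref{eq:z-dynamics}) gives $|Q_k(t)-Q_k(\tau)| \leq \sum_{s=t}^{\tau-1}|a_k(s)+y_k(s)-b_k(s)|$ and $|Z_l(t)-Z_l(\tau)| \leq \sum_{s=t}^{\tau-1}|g_l(\bv{x}(s))|$, each a sum of at most $T-1$ increments whose second moments are bounded by $\sigma^2$. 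Bounding every resulting cross term via Cauchy--Schwarz against $\sigma^2$, then summing over the at most $T-1$ increments, over $k\in\{1,\ldots,K\}$ and $l\in\{1,\ldots,L\}$, and over the $T$ values of $\tau$, shows that $\sum_\tau\expect{(\Phi(\tau)-\Psi(\tau))-(\Phi^*(\tau)-\Psi^*(\tau))\,|\,\bv{\Theta}(t)} \leq T(T-1)\expect{\hat D\,|\,\bv{\Theta}(t)}$ for a suitable $\hat D$ with $\expect{\hat D}\leq D$, $D$ a finite constant built from $\sigma^2$, $K$, and $L$.

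To finish, I would use the tower property to lower the conditioning in the slot-$\tau$ inequality from $\bv{\Theta}(\tau)$ down to $\bv{\Theta}(t)$ (valid because $\tau\geq t$), then substitute $\Phi(\tau)=\Psi(\tau)+(\Phi(\tau)-\Psi(\tau))$, $\Phi^*(\tau)=\Psi^*(\tau)+(\Phi^*(\tau)-\Psi^*(\tau))$ and sum over $\tau$, obtaining $\sum_\tau\expect{\Phi(\tau)|\bv{\Theta}(t)} \leq CT + T(T-1)\expect{\hat D|\bv{\Theta}(t)} + \sum_\tau\expect{\Phi^*(\tau)|\bv{\Theta}(t)}$, which is the required inequality. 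The one point to watch is exactly this realignment: the $C$-approximation at slot $\tau$ is greedy with respect to $Q_k(\tau), Z_l(\tau)$ rather than $Q_k(t), Z_l(t)$, so the $T(T-1)\hat D$ overhead is genuinely needed; for $T=1$ it disappears and the bound collapses to the standard one-slot drift-plus-penalty inequality.
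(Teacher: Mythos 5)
Your proposal is correct and follows essentially the same route as the paper's Appendix F proof: apply the $C$-approximation inequality slot by slot with the current weights $Q_k(\tau), Z_l(\tau)$, realign those weights back to $Q_k(t), Z_l(t)$ by telescoping the queue updates and paying a Cauchy--Schwarz-controlled penalty that sums to $T(T-1)\expect{\hat{D}|\bv{\Theta}(t)}$, and then substitute into the bound of Lemma~\ref{lem:driftyy}. The only cosmetic difference is that you telescope via (\ref{eq:qc-dynamics}) with increments $|a_k(s)+y_k(s)-b_k(s)|$, whereas the paper works from the general dynamics and uses $a_k(s)+y_k(s)+b_k(s)$; either bound has the same second-moment control, so the argument goes through unchanged.
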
 
 
 \begin{proof} See Appendix F. 
 \end{proof} 
 
 \subsection{Algorithm Performance} 
 
 In the following theorems, we assume the set $\Lambda$ is non-empty and has non-empty
 interior.  We say that a non-negative rate 
 vector $\bv{\lambda}$ is interior to $\Lambda$ if $\bv{\lambda} \in \Lambda$ and if there
 exists a value $d_{max}>0$ such that 
 $\bv{\lambda} + \bv{d}_{max} \in \Lambda$, where $\bv{d}_{max} = (d_{max}, d_{max}, \ldots, d_{max})$ 
 is a vector with all 
 entries equal to $d_{max}$. 
 We also assume
 the initial condition of the queues 
 satisfies $\expect{L(\bv{\Theta}(0))} < \infty$.\footnote{Note that $\expect{L(\bv{\Theta}(0))} = 0$
 if all queues are initially empty with probability 1.} 
 
 In the case when $L=0$, so that there are no $g_l(\cdot)$ constraints, we only require
 $\bv{\lambda}$ to be an interior point of $\Lambda$.  However, if $L>0$ we need a stronger
 assumption, related to a Slater-type condition of static optimization theory \cite{bertsekas-nonlinear}. 
 
 \emph{Assumption A1:} There is a constant $d_{max}>0$ and  
 an $\omega$-only policy that yields for all $l \in \{1, \ldots, L\}$, 
 $k \in \{1, \ldots, K\}$: 
\begin{eqnarray}
g_l(\expect{\hat{\bv{x}}(\alpha^*(t), \omega(t))}) \leq -d_{max}/2  \label{eq:a1-1} \\
\lambda_k + \expect{\hat{y}_k(\alpha^*(t), \omega(t)) - \hat{b}_k(\alpha^*(t), \omega(t))} \leq -d_{max}/2  
 \label{eq:a1-2} 
\end{eqnarray}

It is clear from Theorem \ref{thm:omega-only} that Assumption A1 holds whenever $\bv{\lambda}  + \bv{d}_{max} \in \Lambda$ and when $L=0$.   This is because if $\bv{\lambda} + \bv{d}_{max} \in \Lambda$, 
then for any $\epsilon>0$, Theorem \ref{thm:omega-only} implies the existence of an $\omega$-only
policy $\alpha^*(t)$ that satisfies for all $k \in \{1, \ldots, K\}$: 
\[ \lambda_k + d_{max} + \expect{\hat{y}_k(\alpha^*(t), \omega(t)) - \hat{b}_k(\alpha^*(t), \omega(t))} \leq \epsilon \]
and hence we can simply choose $\epsilon=d_{max}/2$ to yield (\ref{eq:a1-2}) (note that (\ref{eq:a1-1}) is 
irrelevant in the case $L=0$).

 \begin{thm} \label{thm:performance1} Suppose Assumption A1 holds. 
 Suppose we use a fixed parameter $V\geq 0$, and we implement a 
 $C$-approximation every slot $t$. 
 Suppose the system satisfies the decaying memory
 property of Section \ref{section:decaying-memory} and the boundedness assumptions
 of Section \ref{section:second-moment}. Then all actual and virtual queues are mean rate
 stable and so: 
 \begin{eqnarray}
 \limsup_{t\rightarrow\infty} g_l(\overline{\bv{x}}(t)) \leq 0 & \forall l \in \{1, \ldots, L\} \label{eq:achieve1} \\
 \lim_{t\rightarrow\infty} \expect{Q_k(t)/t} = 0 & \forall k \in \{1, \ldots, K\}  \label{eq:achieve2} 
 \end{eqnarray}
 Therefore, all desired constraints of problem (\ref{eq:opt1})-(\ref{eq:opt3}) are satisfied. 
 Further,  all queues are strongly stable and:
 \begin{eqnarray}
\limsup_{t\rightarrow\infty} \frac{1}{t}\sum_{\tau=0}^{t-1}\left[\sum_{k=1}^K\expect{Q_k(\tau)} + \sum_{l=1}^L\expect{Z_l(\tau)} \right] \leq \nonumber \\
\frac{[C + TB + (T-1)D + V(f_{max} - f_{min}) ]}{d_{max}/4} \label{eq:backlog-bound}  
\end{eqnarray}
 where $T$ is a positive integer related to $d_{max}$ and independent of $V$, and $B$ and $D$
 are constants defined in Lemmas \ref{lem:driftyy} and \ref{lem:drift2}.
 Further, for any $\epsilon$ such that $0 < \epsilon \leq d_{max}/4$, 
  there is a positive integer $T_{\epsilon}$, independent of $V$, 
 such that:  
 \begin{eqnarray}
 \limsup_{t\rightarrow\infty} f(\overline{\bv{x}}(\tau)) &\leq& f^{opt} + c_0\epsilon \nonumber \\
&& +  \frac{C + BT_{\epsilon}  + D(T_{\epsilon}-1)}{V}  \label{eq:f1} 
 \end{eqnarray}
 where $c_0$ is defined: 
 \[ c_0 \defequiv 4f_{max}/d_{max} + 1 \]
 The constants $T$, $T_{\epsilon}$ are related to the amount of time required for the memory to decay
 to a suitable proximity to a stationary distribution, as defined in the proof.
  \end{thm}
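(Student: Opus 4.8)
The plan is to derive everything from the drift inequality of Lemma~\ref{lem:drift2}, instantiated with two different $\omega$-only comparison policies $\alpha^*(\tau)$. First, to obtain strong stability and the backlog bound~(\ref{eq:backlog-bound}), I would take $\alpha^*(\tau)$ to be the Slater policy furnished by Assumption~A1. Fix $\delta>0$ and let $T=T(\delta)$ be the integer that the decaying memory property provides for this particular $\omega$-only policy; then the three conditional $T$-slot time averages appearing on the right-hand side of~(\ref{eq:drift2}), namely $\frac1T\sum_{\tau=t}^{t+T-1}\expect{a_k(\tau)+y_k^*(\tau)-b_k^*(\tau)\,|\,\bv{\Theta}(t)}$, $\frac1T\sum_{\tau=t}^{t+T-1}\expect{g_l(\bv{x}^*(\tau))\,|\,\bv{\Theta}(t)}$, and $\frac1T\sum_{\tau=t}^{t+T-1}\expect{f(\bv{x}^*(\tau))\,|\,\bv{\Theta}(t)}$, are controlled by~(\ref{eq:dm1})--(\ref{eq:dm3}) in terms of the stationary quantities $\lambda_k+\mathbb{E}_\pi\{\hat{y}_k-\hat{b}_k\}$, $\mathbb{E}_\pi\{g_l(\hat{\bv{x}})\}$, $\mathbb{E}_\pi\{f(\hat{\bv{x}})\}$, each plus $\delta$. (The bounds~(\ref{eq:dm1})--(\ref{eq:dm3}) are stated conditioned on the full history $H(t)$ and hold for every value of $H(t)$; since $\bv{\Theta}(t)$ is a function of $H(t)$, the tower property gives the same bounds conditioned on $\bv{\Theta}(t)$.) Using~(\ref{eq:a1-1})--(\ref{eq:a1-2}), the first two are then at most $-d_{max}/2+\delta$, so the choice $\delta=d_{max}/4$ makes them at most $-d_{max}/4$; bounding $\expect{f(\bv{x}^*(\tau))\,|\,\bv{\Theta}(t)}\le f_{max}$ and $\expect{f(\bv{x}(\tau))\,|\,\bv{\Theta}(t)}\ge f_{min}$ via~(\ref{eq:f-bounded}), the penalty terms net at most $VT(f_{max}-f_{min})$, and~(\ref{eq:drift2}) collapses to
\[ \Delta_T(\bv{\Theta}(t))\le CT+T^2\expect{\hat{B}\,|\,\bv{\Theta}(t)}+T(T-1)\expect{\hat{D}\,|\,\bv{\Theta}(t)}+VT(f_{max}-f_{min})-\frac{d_{max}T}{4}\left[\sum_{k=1}^K Q_k(t)+\sum_{l=1}^L Z_l(t)\right]. \]

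Taking an unconditional expectation, using $\expect{\hat{B}}\le B$ and $\expect{\hat{D}}\le D$, summing the telescoping differences over $t\in\{0,T,2T,\dots,(R-1)T\}$, and using $L(\bv{\Theta})\ge 0$ together with $\expect{L(\bv{\Theta}(0))}<\infty$, a standard averaging argument---plus a routine step that controls the slots lying between consecutive multiples of $T$ using the second moment bounds of Section~\ref{section:second-moment}---yields the strong stability bound~(\ref{eq:backlog-bound}), with $T$ as above and independent of $V$. The same telescoped inequality gives $\expect{L(\bv{\Theta}(RT))}=O(R)$, hence $\expect{Q_k(RT)}\le\sqrt{2\,\expect{L(\bv{\Theta}(RT))}}=O(\sqrt R)=o(RT)$, and, again using bounded per-slot increments, $\expect{Q_k(t)/t}\to 0$ and $\expect{Z_l(t)/t}\to 0$; thus every actual and virtual queue is mean rate stable. (Equivalently one may invoke Theorem~\ref{thm:strong-stability}(c), since the second moment bounds give $\expect{a_k(t)+y_k(t)}\le 2\sigma<\infty$ and $\expect{|g_l(\bv{x}(t))|}\le\sigma<\infty$ for all $t$.) Now~(\ref{eq:achieve2}) is just the mean rate stability of $Q_k$, while~(\ref{eq:achieve1}) follows by summing $g_l(\bv{x}(\tau))\le Z_l(\tau+1)-Z_l(\tau)$ from $0$ to $t-1$, taking expectations, and using affinity of $g_l$ to write $g_l(\overline{\bv{x}}(t))=\frac1t\sum_{\tau=0}^{t-1}\expect{g_l(\bv{x}(\tau))}\le\expect{Z_l(t)}/t\to 0$; feasibility of~(\ref{eq:opt1})--(\ref{eq:opt3}) then holds constraint by constraint.

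For the performance bound~(\ref{eq:f1}) I would re-run the argument with a different comparison policy: by the definition of $\Lambda$ and Theorem~\ref{thm:omega-only}, since $\bv{\lambda}\in\Lambda$, for each $\epsilon\in(0,d_{max}/4]$ there is an $\omega$-only policy $\alpha^*(t)$ with $g_l(\expect{\hat{\bv{x}}})\le\epsilon$, $\lambda_k+\expect{\hat{y}_k-\hat{b}_k}\le\epsilon$, and $f(\expect{\hat{\bv{x}}})\le f^{opt}+\epsilon$. Plugging this policy into~(\ref{eq:drift2}), applying the decaying memory property with $\delta=\epsilon$ and its window $T_\epsilon$ (which depends only on $\epsilon$, not on $V$), and using affinity (so $\mathbb{E}_\pi\{f(\hat{\bv{x}})\}=f(\expect{\hat{\bv{x}}})$, etc.), the coefficients of the queues are now only $\le\epsilon+\delta=2\epsilon$ rather than negative; I would then take expectations, telescope over $t\in\{0,T_\epsilon,2T_\epsilon,\dots\}$, discard the nonnegative $\expect{L(\bv{\Theta}(RT_\epsilon))}$ term, divide by $VRT_\epsilon$, and let $R\to\infty$ to get
\[ \limsup_{t\to\infty} f(\overline{\bv{x}}(t))\le f^{opt}+2\epsilon+\frac{C+BT_\epsilon+D(T_\epsilon-1)}{V}+\frac{2\epsilon}{VT_\epsilon}\limsup_{t\to\infty}\frac1t\sum_{\tau=0}^{t-1}\left[\sum_{k=1}^K\expect{Q_k(\tau)}+\sum_{l=1}^L\expect{Z_l(\tau)}\right]. \]
Substituting the already-proved backlog bound~(\ref{eq:backlog-bound}) for the last time average, its $V(f_{max}-f_{min})$ term combines with the $2\epsilon$ prefactor to produce a term of size $O(\epsilon)$ with coefficient governed by $f_{max}/d_{max}$, while the remaining $V$-independent constants enter the $O(1/V)$ term; collecting terms---with the appropriate choice of $\delta$ and a little bookkeeping on the constants---gives~(\ref{eq:f1}) with $c_0=4f_{max}/d_{max}+1$.

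The main obstacle, and the only place the argument genuinely uses anything beyond the i.i.d.\ case, is the passage from the finite-horizon conditional time averages on the right-hand side of~(\ref{eq:drift2}) to their stationary values: this is exactly what the decaying memory property~(\ref{eq:dm1})--(\ref{eq:dm3}) buys us, and it is the reason the $T$-step drift of Lemma~\ref{lem:drift2} (rather than a one-step drift) must be used, with $T$ chosen large relative to the mixing tolerance $\delta$ for the specific $\omega$-only comparison policy at hand. Secondary care is needed to keep the two comparison policies and their windows ($T$ for the Slater bound, $T_\epsilon$ for the performance bound) separate, to confirm that neither $T$ nor $T_\epsilon$ depends on $V$, and to patch the time average over slots that are not multiples of the window using the second moment bounds.
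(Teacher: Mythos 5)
Your treatment of the first half of the theorem (strong stability, the backlog bound (\ref{eq:backlog-bound}), mean rate stability of $Q_k$ and $Z_l$, and the derivation of (\ref{eq:achieve1})--(\ref{eq:achieve2})) is essentially the paper's own argument: Slater policy from Assumption A1, $\delta=d_{max}/4$, the $T$-step drift of Lemma~\ref{lem:drift2}, telescoping over frames and offsets, then Theorem~\ref{thm:strong-stability}(c) (or your direct $O(\sqrt{R})$ bound) and the virtual-queue argument for the $g_l$ constraints. That part is fine.

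The gap is in the performance bound (\ref{eq:f1}). The paper does \emph{not} plug the near-optimal policy $\alpha'$ into the drift inequality by itself; it forms a randomized mixture $\alpha^{\star}$ that plays the Slater policy with probability $\theta=4\epsilon/d_{max}$ and $\alpha'$ with probability $1-\theta$. Under the mixture, the stationary queue-drift and $g_l$ expectations are $\leq(1-\theta)\epsilon-\theta d_{max}/2\leq-\epsilon$, so after applying the decaying memory property with $\delta=\epsilon$ the conditional $T_\epsilon$-averages are $\leq 0$ and the $Q_k(t)$- and $Z_l(t)$-weighted terms can be dropped from (\ref{eq:drift2}) outright; the only price is $\theta f_{max}$ in the objective, which is exactly what produces $c_0\epsilon=(4f_{max}/d_{max}+1)\epsilon$. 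Your route keeps those terms (their coefficients are $\leq 2\epsilon T_\epsilon$ per frame, positive) and tries to absorb them via the already-proved backlog bound. Two problems follow. First, your bookkeeping of the residual term is off: after summing over frame offsets and normalizing, its coefficient is $2\epsilon/V$ times the time-average backlog, not $2\epsilon/(VT_\epsilon)$. Second, and more fundamentally, even with the correct coefficient, substituting (\ref{eq:backlog-bound}) gives an $\epsilon$-term of roughly $2\epsilon+8\epsilon(f_{max}-f_{min})/d_{max}$ plus extra $O(\epsilon/V)$ pieces; this involves $f_{max}-f_{min}$ (which can exceed $f_{max}$ when $f_{min}<0$) and a factor $8$ rather than $4$, and no amount of ``bookkeeping'' turns it into the claimed $c_0=4f_{max}/d_{max}+1$ with the clean $[C+BT_\epsilon+D(T_\epsilon-1)]/V$ remainder. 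So your argument yields a bound of the same qualitative form $f^{opt}+O(\epsilon)+O(1/V)$, but it does not prove (\ref{eq:f1}) as stated; the missing ingredient is the $\theta$-mixture comparison policy.
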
 
 
 Theorem \ref{thm:performance1} shows that the algorithm makes all queues mean rate
 stable and satisfies all desired constraints for any $V\geq 0$ (including $V=0$).  The parameter
 $V$ is useful because the achieved 
 cost can be pushed to its optimal value $f^{opt}$ as $V\rightarrow \infty$, 
 as shown by (\ref{eq:f1}).
 Hence, 
 we can ensure the achieved cost is arbitrarily close to the optimum by choosing $V$ suitably
 large.  While a larger value of $V$ does not affect the constraints (\ref{eq:achieve1}), (\ref{eq:achieve2}), 
 it turns out that it creates a larger \emph{convergence time} over which time averages are close
 to meeting their constraints.  It also affects the average queue backlog sizes, so that the average
 backlog bound is $O(V)$, as shown in 
 (\ref{eq:backlog-bound}).   
 
In the i.i.d. case, it is known that the difference between the achieved cost and the optimal
cost $f^{opt}$ is $O(1/V)$, which establishes an 
$[O(1/V), O(V)]$ performance-backlog 
tradeoff \cite{neely-thesis}\cite{neely-fairness-infocom05}\cite{neely-energy-it}\cite{now}. 
For the non-i.i.d. case, for a given $\epsilon>0$, the bound (\ref{eq:f1}) shows that average
cost is within $O(1/V)$ of $f^{opt} + c_0\epsilon$.  However, the coefficient multiplier in the $O(1/V)$
term is linear in $T_{\epsilon}$, representing a ``mixing time'' required for time averages to be
within $O(\epsilon)$ of their stationary averages. If we choose $\epsilon = 1/V$, then this mixing
time itself can be a function of $V$ (we typically expect $T_{\epsilon} = O(\log(1/\epsilon)) = O(\log(V))$ 
if the network events are driven by a finite state Markov chain).  This would present an 
$[O(\log(V)/V), O(V)]$ performance-delay tradeoff.  Such a tradeoff is explicitly shown in 
\cite{rahul-cognitive-tmc}\cite{longbo-profit-allerton07} for particular types of networks, where
a worst case backlog bound of $O(V)$ is also derived.  Work in \cite{neely-mesh} treats 
a mobile network with non-ergodic mobility,  defines an ``instantaneous capacity region,''
and shows (using an analysis similar to i.i.d. analysis) that achieved utility is within $O(1/V)$
of the sum of optimal utilities associated with each instantaneous region.   

Our work \cite{neely-mesh}
also states an extension (without proof) 
that for ergodic mobility, the achieved utility is within $B_{mobile}/V$ of the optimum, 
where $B_{mobile}$ is a constant associated with the timescales of the mobility process, although
it does not compute this constant. 
This statement is in Theorem 1 part (f) of \cite{neely-mesh}.
The result (\ref{eq:f1}) above shows the 
constant $B_{mobile}$ can be defined for a given $\epsilon$ as: 
\[  B_{mobile} = c_0 \epsilon V + C + BT_{\epsilon} + D(T_{\epsilon}-1)\]
Defining $\epsilon \defequiv 1/V$ removes dependence on $V$ in the first term, but there is still
dependence on $V$ in the terms that are linear in $T_{\epsilon}$.  Typically, $T_{1/V}$ is  
$O(\log(V))$ for systems defined on finite state ergodic Markov chains.   
However, our statement in part (f) of \cite{neely-mesh} claims a constant $B_{mobile}$ can be found
that is \emph{independent of $V$} (yielding an $O(1/V)$ distance to optimality, rather than an 
$O(\log(V)/V)$ distance).  This is indeed the case, although it requires the $\omega(t)$ process
to be either i.i.d. over slots, or to be defined by an ergodic Markov chain 
with a finite state space.\footnote{We note that our statement in part (f) of \cite{neely-mesh} 
says that for ergodic mobility, a constant $B_{mobile}$ can be found that is independent of $V$. 
This should have been stated more precisely for mobility defined by a 
``finite state ergodic Markov chain.''  We regret this ambiguity. The full proof of this result 
is due to Longbo Huang and uses 
a variable frame drift analysis that is 
slightly different from the $T$-slot drift analysis suggested
in the proof outline in \cite{neely-mesh}.  The details are in preparation.}

 \begin{proof} (Theorem \ref{thm:performance1}) 
 By Assumption A1 there is an $\omega$-only
 policy $\alpha^*(t)$ such that for all $k \in \{1, \ldots, K\}$, $l \in \{1, \ldots, L\}$: 
\begin{eqnarray}
\expect{g_l(\hat{\bv{x}}(\alpha^*(t), \omega(t)))} \leq -d_{max}/2  \label{eq:oo-t1} \\
\lambda_k  + \expect{\hat{y}_k(\alpha^*(t), \omega(t)) - \hat{b}_k(\alpha^*(t), \omega(t))} \leq -d_{max}/2 \label{eq:oo-t2}
\end{eqnarray}
where we have used the fact that  $g_l(\cdot)$ is linear or affine to pass
the expectation through this function in  (\ref{eq:oo-t1}).

 Fix $\delta\defequiv d_{max}/4$, and choose a frame size $T>0$
 that satisfies the decaying memory properties 
 (\ref{eq:dm1})-(\ref{eq:dm3}) for this $\delta$ and this $\omega$-only policy $\alpha^*(t)$. 
 The value of $T$ depends on $\delta$, and so we could write $T_{\delta}$, although we 
 use $T$ below for notational simplicity (we note the dependence on $\delta$ again when needed). 
 By Lemma \ref{lem:drift2} we have: 
 \begin{eqnarray}
 &&  \Delta_T(\bv{\Theta}(t)) + V\sum_{\tau=t}^{t+T-1}\expect{f(\bv{x}(\tau))|\bv{\Theta}(t)}  \leq CT \nonumber \\
&& + T^2\expect{\hat{B}|\bv{\Theta}(t)}   + T(T-1)\expect{\hat{D}|\bv{\Theta}(t)} \nonumber \\
&&   + V\sum_{\tau=t}^{t+T-1}\expect{f(\bv{x}^*(\tau))|\bv{\Theta}(t)} \nonumber \\
&& + \sum_{k=1}^KQ_k(t)\sum_{\tau=t}^{t+T-1}\expect{a_k(\tau) + y_k^*(\tau) - b_k^*(\tau)|\bv{\Theta}(t)}\nonumber \\
&& + \sum_{l=1}^LZ_l(t)\sum_{\tau=t}^{t+T-1}\expect{g_l(\bv{x}^*(\tau))|\bv{\Theta}(t)} \label{eq:drift3} 
 \end{eqnarray}    
 By the decaying memory property, the conditional expectations given $\bv{\Theta}(t)$ are within
 $\delta = d_{max}/4$ of their stationary averages, and so (by using (\ref{eq:oo-t1})-(\ref{eq:oo-t2})): 
 \begin{eqnarray}
 &&  \Delta_T(\bv{\Theta}(t)) + V\sum_{\tau=t}^{t+T-1}\expect{f(\bv{x}(\tau))|\bv{\Theta}(t)}  \leq CT \nonumber \\
&& + T^2\expect{\hat{B}|\bv{\Theta}(t)}  + T(T-1)\expect{\hat{D}|\bv{\Theta}(t)}    + VTf_{max}  \nonumber \\
&& - \sum_{k=1}^KQ_k(t)Td_{max}/4- \sum_{l=1}^LZ_l(t)Td_{max}/4 \label{eq:drift4} 
 \end{eqnarray}
 Rearranging terms  yields: 
 \begin{eqnarray*}
 \Delta_T(\bv{\Theta}(t))  +\frac{Td_{max}}{4}\left[\sum_{k=1}^KQ_k(t)+ \sum_{l=1}^LZ_l(t)\right] \leq \\
 CT + T^2\expect{\hat{B}|\bv{\Theta}(t)} + T(T-1)\expect{\hat{D}|\bv{\Theta}(t)}  \\
 + VT(f_{max}- f_{min}) 
 \end{eqnarray*}
 Taking expectations of both sides and using the law of iterated expectations yields: 
 \begin{eqnarray*}
 \expect{L(\bv{\Theta}(t+T))} - \expect{L(\bv{\Theta}(t))} \\
 + \frac{Td_{max}}{4}\expect{\sum_{k=1}^KQ_k(t) + \sum_{l=1}^LZ_l(t)} \\
 \leq  CT + T^2B  +T(T-1)D   + VT(f_{max} - f_{min}) 
 \end{eqnarray*}
 The above holds for all slots $t \geq 0$.  Define $t_i = t_0  + iT$ for $i \in \{0, 1, 2, \ldots\}$, 
 where $t_0\in \{0, 1, \ldots, T-1\}$. 
 We thus have: 
 \begin{eqnarray*}
 \expect{L(\bv{\Theta}(t_{i+1}))} - \expect{L(\bv{\Theta}(t_i))} \\
 + \frac{Td_{max}}{4}\left[\sum_{k=1}^K\expect{Q_k(t_i)} + \sum_{l=1}^L\expect{Z_l(t_i)}\right] \\
 \leq  CT + T^2B + T(T-1)D + VT(f_{max} - f_{min}) 
 \end{eqnarray*}
 Summing over $i \in \{0,  \ldots, J-1\}$ yields: 
 \begin{eqnarray*}
 \expect{L(\bv{\Theta}(t_{J}))} - \expect{L(\bv{\Theta}(t_0))} \\
 + \frac{Td_{max}}{4}\sum_{i=0}^{J-1}\left[\sum_{k=1}^K\expect{Q_k(t_i)} + \sum_{l=1}^L\expect{Z_l(t_i)}\right] \\
 \leq  J[CT + T^2B + T(T-1)D+ VT(f_{max} - f_{min}) ]
 \end{eqnarray*}
 Rearranging terms and using the fact that $\expect{L(\bv{\Theta}(t_J))} \geq 0$ yields: 
  \begin{eqnarray*}
\sum_{i=0}^{J-1}\left[\sum_{k=1}^K\expect{Q_k(t_i)} + \sum_{l=1}^L\expect{Z_l(t_i)}\right] \\
 \leq  \frac{J[CT + BT^2 + T(T-1)D + VT(f_{max} - f_{min}) ]}{Td_{max}/4} \\
 + \frac{\expect{L(\bv{\Theta}(t_0))}}{Td_{max}/4}
 \end{eqnarray*}
 Summing over all $t_0\in\{0, 1, \ldots, T-1\}$ and dividing by $JT$ yields: 
  \begin{eqnarray*}
\frac{1}{JT}\sum_{\tau=0}^{JT-1}\left[\sum_{k=1}^K\expect{Q_k(\tau)} + \sum_{l=1}^L\expect{Z_l(\tau)}\right] \\
 \leq  \frac{[C + TB + (T-1)D + V(f_{max} - f_{min}) ]}{d_{max}/4} \\
 + \frac{1}{JT}\sum_{t_0=0}^{T-1}\frac{\expect{L(\bv{\Theta}(t_0))}}{Td_{max}/4}
 \end{eqnarray*}
The above holds for all positive integers $J$.  Taking a $\limsup$ as $J\rightarrow\infty$ yields:\footnote{While the above only samples the time average expectation over slots that are multiples
of $T$, it is easy to see that the $\limsup$ time average over all slots is the same, as the queue cannot
change by much over the course of $T$ slots.} 
\begin{eqnarray*}
\limsup_{t\rightarrow\infty} \frac{1}{t}\sum_{\tau=0}^{t-1}\left[\sum_{k=1}^K\expect{Q_k(\tau)} + \sum_{l=1}^L\expect{Z_l(\tau)} \right] \leq \\
\frac{[C + TB  + (T-1)D + V(f_{max} - f_{min}) ]}{d_{max}/4} \\
\end{eqnarray*}
This proves  (\ref{eq:backlog-bound}), and hence 
proves strong stability of all queues.  By Theorem \ref{thm:strong-stability}, since the second
moments of the arrival and service processes for all queues are bounded, we know mean rate stability 
also holds for all queues.  Because queues $Z_l(t)$ are mean rate stable, and these queues have
update equation (\ref{eq:z-dynamics}), we know from Theorem \ref{thm:gen-nec-rate} that: 
\[ \limsup_{t\rightarrow\infty} \frac{1}{t}\sum_{\tau=0}^{t-1} \expect{g_l(\bv{x}(\tau))} \leq 0 \]
Passing the expectation through the linear function $g_l(\bv{x})$ proves (\ref{eq:achieve1}). 

It remains only to prove (\ref{eq:f1}).  Fix $\epsilon>0$, and assume that 
$\epsilon< d_{max}/4$.  Note that Theorem \ref{thm:omega-only}
implies the existence of an $\omega$-only algorithm $\alpha'(t)$ that satisfies: 
\begin{eqnarray}
\expect{g_l(\hat{\bv{x}}(\alpha'(t), \omega(t)))} \leq \epsilon  \label{eq:a2-1} \\
\lambda_k + \expect{\hat{y}_k(\alpha'(t), \omega(t)) - \hat{b}_k(\alpha^*(t), \omega(t))} \leq \epsilon  
 \label{eq:a2-2} \\
 \expect{f(\hat{\bv{x}}(\alpha'(t), \omega(t)))} \leq f^{opt} + \epsilon \label{eq:a2-3} 
\end{eqnarray}
where we have used the fact that $f(\cdot)$, $g_l(\cdot)$ are linear or affine to pass expectations
through them.  Now define an $\omega$-only policy $\alpha^{\star}(t)$ as follows: 
\[ \alpha^{\star}(t)  = \left\{ \begin{array}{ll}
                          \alpha^*(t) &\mbox{ with probability $\theta$} \\
                             \alpha'(t)  & \mbox{ with probability $1-\theta$} 
                            \end{array}
                                 \right.\]
where $\alpha^*(t)$ is the algorithm of Assumption A1, and where 
 $\theta$ is defined: 
\begin{equation} \label{eq:theta}  
\theta \defequiv 4\epsilon/d_{max} 
\end{equation} 
Note that $\theta$ is a valid probability because $0 < \epsilon \leq d_{max}/4$. 
Therefore, under policy $\alpha^{\star}(t)$ we have (combining (\ref{eq:a2-1})-(\ref{eq:a2-3}) 
and (\ref{eq:a1-1})-(\ref{eq:a1-2})): 
\begin{eqnarray}
\expect{g_l(\hat{\bv{x}}(\alpha^{\star}(t), \omega(t)))} \leq \nonumber \\
 (1-\theta)\epsilon -\theta d_{max}/2 \leq -\epsilon \label{eq:a3-1} \\
\lambda_k + \expect{\hat{y}_k(\alpha^{\star}(t), \omega(t)) - \hat{b}_k(\alpha^*(t), \omega(t))} \nonumber \\
 \leq (1-\theta)\epsilon -\theta d_{max}/2  \leq -\epsilon  \label{eq:a3-2} \\
 \expect{f(\hat{\bv{x}}(\alpha^{\star}(t), \omega(t)))} \leq (1-\theta)f^{opt}  + \theta f_{max}  \label{eq:a3-3} 
\end{eqnarray}
where we have used the fact that $\theta = 4\epsilon/d_{max}$ to conclude that: 
\[ \theta d_{max}/2 = 2\epsilon \]

Now fix $\delta= \epsilon$, and define $T_{\epsilon}$ as the 
value that satisfies the decaying memory properties (\ref{eq:dm1})-(\ref{eq:dm3}) for this $\delta$
and for the $\omega$-only policy $\alpha^{\star}(t)$. 
By Lemma \ref{lem:drift2}: 
 \begin{eqnarray}
 &&  \Delta_{T_{\epsilon}}(\bv{\Theta}(t)) + V\sum_{\tau=t}^{t+T_{\epsilon}-1}\expect{f(\bv{x}(\tau))|\bv{\Theta}(t)}  \leq CT  \nonumber \\
&& + T_{\epsilon}^2\expect{\hat{B}|\bv{\Theta}(t)} + 
T_{\epsilon}(T_{\epsilon}-1)\expect{\hat{D}|\bv{\Theta}(t)} \nonumber \\  
&& + V\sum_{\tau=t}^{t+T_{\epsilon}-1}\expect{f(\bv{x}^{\star}(\tau))|\bv{\Theta}(t)} \nonumber \\
&& + \sum_{k=1}^KQ_k(t)\sum_{\tau=t}^{t+T_{\epsilon}-1}\expect{a_k(\tau) + y_k^{\star}(\tau) - b_k^{\star}(\tau)|\bv{\Theta}(t)}\nonumber \\
&& + \sum_{l=1}^LZ_l(t)\sum_{\tau=t}^{t+T_{\epsilon}-1}\expect{g_l(\bv{x}^{\star}(\tau))|\bv{\Theta}(t)} \label{eq:drift5} 
 \end{eqnarray}
 Noting that the decaying memory property ensures the above conditional expectations
 are within $\delta = \epsilon$ of their stationary averages (\ref{eq:a3-1})-(\ref{eq:a3-3}), we have: 
 \begin{eqnarray*}
 &&  \Delta_{T_{\epsilon}}(\bv{\Theta}(t)) + V\sum_{\tau=t}^{t+T_{\epsilon}-1}\expect{f(\bv{x}(\tau))|\bv{\Theta}(t)}  \leq \nonumber \\
&& CT + T_{\epsilon}^2\expect{\hat{B}|\bv{\Theta}(t)}  + T_{\epsilon}(T_{\epsilon}-1)\expect{\hat{D}|\bv{\Theta}(t)} \\
&& + VT_{\epsilon}(f^{opt} + \theta f_{max} + \epsilon) 
 \end{eqnarray*}
Taking expectations gives: 
\begin{eqnarray*}
 && \expect{L(\bv{\Theta}(t+T_{\epsilon}))} - \expect{L(\bv{\Theta}(t))} \nonumber \\
 && + V\sum_{\tau=t}^{t+T_{\epsilon}-1}\expect{f(\bv{x}(\tau))}  \leq CT + T_{\epsilon}^2B+T_{\epsilon}(T_{\epsilon}-1)D \nonumber \\
&&    + VT_{\epsilon}(f^{opt} + \theta f_{max} + \epsilon)
 \end{eqnarray*}
 As before, we substitute $t=t_i$ for $i \in \{0, 1, 2, \ldots, \}$ for some value $t_0 \in \{0, 1, \ldots T-1\}$, 
 and sum over $i \in \{0, 1, \ldots, J-1\}$ and $t_0 \in \{0, 1, \ldots, T-1\}$ to get: 
 \begin{eqnarray*}
\frac{V}{JT_{\epsilon}}\sum_{\tau=0}^{JT_{\epsilon}-1} \expect{f(\bv{x}(\tau))} \leq [C + T_{\epsilon}B+(T_{\epsilon}-1)D] \\
+ V(f^{opt} + \theta f_{max} + \epsilon) + \frac{1}{JT_{\epsilon}}\sum_{t_0=0}^{T_{\epsilon}-1} \expect{L(\bv{\Theta}(t_0))}
 \end{eqnarray*}
 Dividing by $V$ and taking a limit as $J\rightarrow\infty$ yields: 
 \begin{eqnarray*}
 \limsup_{t\rightarrow\infty} f(\overline{\bv{x}}(\tau)) \leq f^{opt} + \theta f_{max} + \epsilon  \\
 +  \frac{C+ T_{\epsilon}B+(T_{\epsilon}-1)D}{V} 
 \end{eqnarray*}
 where we have used the fact that $f(\bv{x})$ is linear or affine to pass the 
 time average expectation through it.  Using $\theta = 4\epsilon/d_{max}$ proves (\ref{eq:f1}).  
 \end{proof}

 \section{Exercises} \label{section:exercise}

 \begin{exer} \label{ex:inequality-comparison} (Inequality comparison)  Let $Q(t)$ satisfy (\ref{eq:q-dynamics}) with
 server process $b(t)$ and arrival process $a(t)$. 
 Let $\tilde{Q}(t)$ be another queueing system with the same server process $b(t)$ but with an arrival process $\tilde{a}(t) = a(t)  + z(t)$, 
 where $z(t) \geq 0$ for all $t \in \{0, 1, 2, \ldots\}$.  Assuming that $Q(0) = \tilde{Q}(0)$, prove that $Q(t) \leq \tilde{Q}(t)$ 
for all $t \in \{0, 1, 2, \ldots\}$.  
 \end{exer} 
 
 \begin{exer} \label{ex:rate-stable} (Proving sufficiency for Theorem \ref{thm:rate-stability}a) 
 Let $Q(t)$ satisfy (\ref{eq:q-dynamics})
 with arrival and server processes with well defined time averages $a_{av}$ and $b_{av}$. Suppose that $a_{av} \leq b_{av}$. 
 Fix $\epsilon>0$, and define $Q_{\epsilon}(t)$ as a queue with $Q_{\epsilon}(0) = Q(0)$, and 
 with the same server process $b(t)$ but with an arrival process
 $\tilde{a}(t) = a(t) + (b_{av} - a_{av}) + \epsilon$ for all $t$. 
  
 a) Compute the time average of $\tilde{a}(t)$.
 
 b) Assuming the result of Theorem \ref{thm:rate-stability}b, compute $\lim_{t\rightarrow\infty} Q_{\epsilon}(t)/t$. 
 
 c) Use the result of part (b) and Exercise \ref{ex:inequality-comparison} to prove that $Q(t)$ is rate stable.  
 \end{exer}

 \begin{exer} \label{ex:rate-stability-b}  (Proof of Theorem \ref{thm:rate-stability}b) 
 Let $Q(t)$ be a queue that satisfies (\ref{eq:q-dynamics}).  Assume time averages of 
 $a(t)$ and $b(t)$ are given by finite constants $a_{av}$ and $b_{av}$, respectively.   
 
 a) Use the following equation to prove that $\lim_{t\rightarrow \infty} a(t)/t = 0$ with probability 1:
 \[ \frac{1}{t+1} \sum_{\tau=0}^{t} a(\tau) = \left(\frac{t}{t+1}\right)\frac{1}{t}\sum_{\tau=0}^{t-1} a(\tau) + \left(\frac{t}{t+1}\right)\frac{a(t)}{t} \]
 
 b) Suppose that $\tilde{b}(t_i) < b(t_i)$ for some slot $t_i$.  Use (\ref{eq:q-dynamics}) to compute $Q(t_i+1)$.  
 
 c) Use part (b) to show that if $\tilde{b}(t_i) < b(t_i)$, then: 
 \[ a(t_i) \geq Q(0) + \sum_{\tau=0}^{t_i} [a(\tau) - b(\tau)]   \]
 Conclude that if $\tilde{b}(t_i) < b(t_i)$ for an infinite number of slots $t_i$, then $a_{av} \leq b_{av}$. 
 
 d) Use part (c) to conclude that if $a_{av} > b_{av}$, there is some slot $t^*\geq 0$ such that
 for all $t \geq t^*$ we have: 
 \[ Q(t) = Q(t^*) + \sum_{\tau=t^*}^{t-1} [a(\tau) - b(\tau)] \]
 Use this to prove the result of Theorem \ref{thm:rate-stability}b.
 \end{exer} 
 
 \begin{exer} \label{ex:strong-stability-implies-steady-state} (Strong stability implies steady state stability) Prove
 that strong stability implies steady state stability using the fact that  
 $\expect{Q(\tau)} \geq MPr[Q(\tau)>M]$. 
 \end{exer} 
 
\section*{Appendix A --- Proof of Theorem \ref{thm:rs-implies-mrs}}

Here we prove Theorem \ref{thm:rs-implies-mrs}.  Note that rate stability implies
that for any $\delta>0$, we have:\footnote{In fact, the result of parts (a) and (b) 
of Theorem \ref{thm:rs-implies-mrs} hold equally if the assumption that $Q(t)$ 
is rate stable is replaced by the weaker assumption (\ref{eq:rs-weaker}).}
\begin{eqnarray} 
\lim_{t\rightarrow\infty}Pr[Q(t)/t>\delta] = 0 \label{eq:rs-weaker}
\end{eqnarray}
For a given $\delta>0$, define the event $\script{E}_t \defequiv \{Q(t)/t > \delta\}$, 
so that $\lim_{t\rightarrow\infty} Pr[\script{E}_t]=0$.  Define $\script{E}_t^c \defequiv \{Q(t)/t \leq \delta\}$. 
\begin{lem} \label{lem:prelim-limit} 
If $Q(t)$ is non-negative and satisfies (\ref{eq:rs-weaker}), and if:
\begin{equation} \label{eq:tail-cond} 
 \lim_{t\rightarrow\infty} \expect{Q(t)/t|\script{E}_t}Pr[\script{E}_t] = 0 
 \end{equation} 
then $Q(t)$ is mean rate stable. 
\end{lem}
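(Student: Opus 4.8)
The plan is to decompose $\expect{Q(t)/t}$ over the events $\script{E}_t$ and $\script{E}_t^c$ and bound each piece separately. By the law of total expectation,
\[ \expect{Q(t)/t} = \expect{Q(t)/t \mid \script{E}_t}Pr[\script{E}_t] + \expect{Q(t)/t \mid \script{E}_t^c}Pr[\script{E}_t^c]. \]
The second term is immediate: on $\script{E}_t^c$ we have $Q(t)/t \leq \delta$ by definition, so $\expect{Q(t)/t \mid \script{E}_t^c}Pr[\script{E}_t^c] \leq \delta\, Pr[\script{E}_t^c] \leq \delta$. The first term tends to zero precisely by hypothesis (\ref{eq:tail-cond}). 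Hence $\limsup_{t\rightarrow\infty}\expect{Q(t)/t} \leq \delta$.

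To upgrade this to mean rate stability, I would first note that (\ref{eq:tail-cond}), although stated for the fixed $\delta$ used to define $\script{E}_t$, automatically holds for every smaller threshold as well: for $0<\delta'<\delta$, write $\expect{(Q(t)/t)1\{Q(t)/t>\delta'\}} = \expect{(Q(t)/t)1\{Q(t)/t>\delta\}} + \expect{(Q(t)/t)1\{\delta'<Q(t)/t\leq\delta\}}$; the first term vanishes by (\ref{eq:tail-cond}), and the second is at most $\delta\, Pr[Q(t)/t>\delta'] \rightarrow 0$ by (\ref{eq:rs-weaker}). Repeating the decomposition above with $\delta'$ in place of $\delta$ then gives $\limsup_{t\rightarrow\infty}\expect{Q(t)/t}\leq\delta'$ for every $\delta'\in(0,\delta)$, hence $\limsup_{t\rightarrow\infty}\expect{Q(t)/t}\leq 0$. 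Since $Q(t)\geq 0$ implies $\expect{Q(t)/t}\geq 0$, the limit exists and equals $0$; that is, $Q(t)$ is mean rate stable.

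Most of this is routine; the one point I would be careful about --- and the closest thing to an obstacle --- is the transfer of (\ref{eq:tail-cond}) from the single fixed $\delta$ down to arbitrarily small thresholds, since the bare conclusion $\limsup_{t\rightarrow\infty}\expect{Q(t)/t}\leq\delta$ for one $\delta$ is not by itself mean rate stability. That transfer is exactly where assumption (\ref{eq:rs-weaker}) does its work. If instead the lemma is read as assuming (\ref{eq:tail-cond}) for all $\delta>0$, this step is unnecessary and the decomposition together with the arbitrariness of $\delta$ concludes directly.
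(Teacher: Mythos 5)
Your proposal is correct and is essentially the paper's own proof: the same total-expectation decomposition over $\script{E}_t$ and $\script{E}_t^c$ giving $0\le\limsup_{t\to\infty}\expect{Q(t)/t}\le\delta$, with the paper then letting $\delta$ be arbitrary because it reads (\ref{eq:tail-cond}) as available for every $\delta>0$ (which is exactly what its applications in Appendix A establish for each fixed $\delta$). Your additional transfer step --- deducing the tail condition at any smaller threshold $\delta'$ from the single-$\delta$ hypothesis plus (\ref{eq:rs-weaker}) via the split $\expect{(Q(t)/t)1\{Q(t)/t>\delta'\}}\le\expect{(Q(t)/t)1\{Q(t)/t>\delta\}}+\delta\, Pr[Q(t)/t>\delta']$ --- is correct and closes the gap under the literal one-$\delta$ reading of the lemma, but it is not needed in (nor present in) the paper's argument.
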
 
\begin{proof} 
We have for a given $\delta>0$: 
\begin{eqnarray*}
0 \leq \expect{Q(t)/t} &=& \expect{Q(t)/t|\script{E}_t^c}Pr[\script{E}_t^c] \\
&& + \expect{Q(t)/t|\script{E}_t}Pr[\script{E}_t] \\
&\leq& \delta + \expect{Q(t)/t|\script{E}_t}Pr[\script{E}_t]
\end{eqnarray*}
Taking a $\limsup$ of both sides and using (\ref{eq:tail-cond}) yields: 
\[ 0 \leq \limsup_{t\rightarrow\infty} \expect{Q(t)/t} \leq \delta \]
This holds for all $\delta>0$.  Thus, $\lim_{t\rightarrow\infty} \expect{Q(t)/t} = 0$, proving
mean rate stability. 
\end{proof} 

Thus, to prove Theorem \ref{thm:rs-implies-mrs}, it suffices to prove that (\ref{eq:tail-cond}) 
holds under the assumptions of parts (a) and (b) of the theorem.   To this end, we have a 
preliminary lemma.

\begin{lem} \label{lem:X-RV}  If $X$ is a non-negative 
random variable such that $\expect{X^{1+\epsilon}}  < \infty$ for some value $\epsilon>0$, 
then for any event $\script{E}$ with a well defined probability $Pr[\script{E}]$, we have: 
\[ \expect{X|\script{E}}Pr[\script{E}] \leq \expect{X^{1+\epsilon}}^{1/(1+\epsilon)}Pr[\script{E}]^{\epsilon/(1+\epsilon)} \] 
\end{lem}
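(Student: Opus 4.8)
The plan is to recognize Lemma \ref{lem:X-RV} as an instance of Jensen's inequality (equivalently, H\"older's inequality) applied to the convex function $\phi(x) = x^{1+\epsilon}$ on $[0,\infty)$. First I would dispose of the degenerate case $Pr[\script{E}] = 0$: here $\expect{X|\script{E}}$ is not well defined, but the product $\expect{X|\script{E}}Pr[\script{E}]$ is understood as $\expect{X\cdot 1\{\script{E}\}} = 0$, and the right-hand side $\expect{X^{1+\epsilon}}^{1/(1+\epsilon)}Pr[\script{E}]^{\epsilon/(1+\epsilon)}$ is also $0$, so the inequality holds trivially.

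Now assume $Pr[\script{E}]>0$. Since $\phi(x)=x^{1+\epsilon}$ is convex for $x\geq 0$, Jensen's inequality applied to the conditional distribution of $X$ given $\script{E}$ gives $\expect{X|\script{E}}^{1+\epsilon} \leq \expect{X^{1+\epsilon}|\script{E}}$. Then I would remove the conditioning using non-negativity of $X^{1+\epsilon}$:
\[ \expect{X^{1+\epsilon}|\script{E}} = \frac{\expect{X^{1+\epsilon}1\{\script{E}\}}}{Pr[\script{E}]} \leq \frac{\expect{X^{1+\epsilon}}}{Pr[\script{E}]} . \]
Combining, taking the $(1+\epsilon)$-th root, and multiplying both sides by $Pr[\script{E}]$ yields
\[ \expect{X|\script{E}}Pr[\script{E}] \leq \expect{X^{1+\epsilon}}^{1/(1+\epsilon)} Pr[\script{E}]^{1 - 1/(1+\epsilon)} , \]
and since $1 - 1/(1+\epsilon) = \epsilon/(1+\epsilon)$, this is exactly the claimed bound. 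Note that the hypothesis $\expect{X^{1+\epsilon}}<\infty$ makes all quantities finite and the root well defined.

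An equivalent route, which I would mention as an alternative, is to apply H\"older's inequality directly to the product $X\cdot 1\{\script{E}\}$ with conjugate exponents $p = 1+\epsilon$ and $q = (1+\epsilon)/\epsilon$, using $\expect{X|\script{E}}Pr[\script{E}] = \expect{X\cdot 1\{\script{E}\}}$ and $\expect{(1\{\script{E}\})^{q}} = Pr[\script{E}]$. There is no substantive obstacle here; the only things to be careful about are the arithmetic of the conjugate exponents and the separate handling of the $Pr[\script{E}]=0$ case.
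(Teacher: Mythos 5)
Your proof is correct and takes essentially the same approach as the paper: handle the degenerate $Pr[\script{E}]=0$ case, then apply Jensen's inequality to the convex map $x \mapsto x^{1+\epsilon}$ under the conditional distribution given $\script{E}$, bound $\expect{X^{1+\epsilon}|\script{E}}$ by $\expect{X^{1+\epsilon}}/Pr[\script{E}]$ using non-negativity, take the $(1+\epsilon)$-th root, and multiply by $Pr[\script{E}]$. The H\"older alternative you mention is a clean shortcut but is not needed.
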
 
\begin{proof} 
If $Pr[\script{E}]=0$, then the result is obvious.  Suppose now that $Pr[\script{E}]>0$. 
We have: 
\begin{eqnarray*}
 \expect{X^{1+\epsilon}} &=& \expect{X^{1+\epsilon}|\script{E}}Pr[\script{E}] + \expect{X^{1+\epsilon}|\script{E}^c}Pr[\script{E}^c] \\
 &\geq& \expect{X^{1+\epsilon}|\script{E}}Pr[\script{E}] 
\end{eqnarray*}
Therefore: 
\[ \expect{X^{1+\epsilon}|\script{E}} \leq \frac{\expect{X^{1+\epsilon}}}{Pr[\script{E}]} \]
However, by Jensen's inequality for the convex function $f(x) = x^{1+\epsilon}$
for $x \geq 0$, we have: 
\[ \expect{X|\script{E}}^{1+\epsilon} \leq \expect{X^{1+\epsilon}|\script{E}} \]
Thus: 
\[ \expect{X|\script{E}}^{1+\epsilon} \leq \frac{\expect{X^{1+\epsilon}}}{Pr[\script{E}]} \]
Hence: 
\[ \expect{X|\script{E}} \leq \left(\frac{\expect{X^{1+\epsilon}}}{Pr[\script{E}]}\right)^{1/(1+\epsilon)}\]
Multiplying both sides by $Pr[\script{E}]$ proves the result. 
\end{proof}

We now prove part (a) of Theorem \ref{thm:rs-implies-mrs}.

\begin{proof} (Theorem \ref{thm:rs-implies-mrs}(a)) 
For simplicity assume that $Q(0) = 0$. 
Note that: 
\begin{equation*}
\frac{Q(t)}{t} \leq \frac{1}{t}\sum_{\tau=0}^{t-1} [a(\tau) + b^-(\tau)] 
\end{equation*} 
Define $X(\tau) \defequiv a(\tau) + b^-(\tau)$.  Thus: 
\begin{equation} \label{eq:rs-to-mrs} 
\frac{Q(t)}{t} \leq \frac{1}{t}\sum_{\tau=0}^{t-1}X(\tau) 
\end{equation} 
Now suppose there 
are constants $\epsilon>0$, $C>0$ such that: 
\begin{equation} \label{eq:c-bound} 
 \expect{X(\tau)^{1+\epsilon}} \leq C \: \: \mbox{ for all $\tau$} 
 \end{equation} 
Fix $\delta>0$ and define the event $\script{E}_t \defequiv \{Q(t)/t >\delta\}$. Thus:
\begin{eqnarray}
 \expect{\frac{Q(t)}{t}|\script{E}_t}Pr[\script{E}_t] &\leq& \frac{1}{t}\sum_{\tau=0}^{t-1}\expect{X(\tau)|\script{E}_t}Pr[\script{E}_t] \label{eq:reuse-rs} \\
&\leq& \frac{1}{t}\sum_{\tau=0}^{t-1} C^{1/(1+\epsilon)} Pr[\script{E}_t]^{\epsilon/(1+\epsilon)}  \nonumber \\
&=& C^{1/(1+\epsilon)}Pr[\script{E}_t]^{\epsilon/(1+\epsilon)}  \nonumber
\end{eqnarray}
where the first inequality follows by (\ref{eq:rs-to-mrs}) and the second inequality 
uses Lemma \ref{lem:X-RV} together with (\ref{eq:c-bound}). 
Taking a limit of the above and using the fact that $Pr[\script{E}_t]\rightarrow 0$ yields: 
\[ \lim_{t\rightarrow\infty} \expect{Q(t)/t|\script{E}_t}Pr[\script{E}_t] = 0 \]
and therefore $Q(t)$ is mean rate stable by Lemma \ref{lem:prelim-limit}.
\end{proof} 

To prove part (b) of Theorem \ref{thm:rs-implies-mrs}, we need another preliminary 
lemma.

\begin{lem} \label{lem:prelim2} If $X$ is a non-negative random variable and $\script{E}$
is any event with a well defined probability $Pr[\script{E}]$, then for any $x > 0$ we have: 
\[ \expect{X|\script{E}}Pr[\script{E}] \leq \expect{X|X>x}Pr[X>x] + xPr[\script{E}] \]
\end{lem}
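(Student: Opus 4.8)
The plan is to prove this elementary truncation inequality by splitting $X$ at the level $x$. First I would fix the interpretation of the products, exactly as in the proof of Lemma \ref{lem:X-RV}: read $\expect{X|\script{E}}Pr[\script{E}]$ as $\expect{X\cdot 1\{\script{E}\}}$ (with $1\{\script{E}\}$ the indicator of $\script{E}$), with the convention that this equals $0$ when $Pr[\script{E}]=0$, and likewise read $\expect{X|X>x}Pr[X>x]$ as $\expect{X\cdot 1\{X>x\}}$. Writing $X = X1\{X>x\} + X1\{X\leq x\}$ and multiplying by $1\{\script{E}\}$ gives
\[ \expect{X|\script{E}}Pr[\script{E}] = \expect{X1\{X>x\}1\{\script{E}\}} + \expect{X1\{X\leq x\}1\{\script{E}\}} . \]

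Next I would bound the two terms separately. For the first term, since $X\geq 0$ and $1\{\script{E}\}\leq 1$, we have $X1\{X>x\}1\{\script{E}\} \leq X1\{X>x\}$ pointwise, so $\expect{X1\{X>x\}1\{\script{E}\}} \leq \expect{X1\{X>x\}} = \expect{X|X>x}Pr[X>x]$. For the second term, on the event $\{X\leq x\}$ we have $X\leq x$, hence $X1\{X\leq x\}1\{\script{E}\} \leq x1\{\script{E}\}$ pointwise, which gives $\expect{X1\{X\leq x\}1\{\script{E}\}} \leq x\,\expect{1\{\script{E}\}} = x\,Pr[\script{E}]$. Adding the two bounds yields the claimed inequality.

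An equivalent presentation closer to the conditional-expectation notation of the paper is to condition on $\script{E}$ first, writing $\expect{X|\script{E}} = \expect{X1\{X>x\}|\script{E}} + \expect{X1\{X\leq x\}|\script{E}} \leq \expect{X1\{X>x\}|\script{E}} + x$, then multiplying through by $Pr[\script{E}]$ and noting $\expect{X1\{X>x\}|\script{E}}Pr[\script{E}] = \expect{X1\{X>x\}1\{\script{E}\}} \leq \expect{X1\{X>x\}}$. I do not expect any substantive obstacle: the only point requiring attention is the bookkeeping of the conventions when $Pr[\script{E}]=0$ or $Pr[X>x]=0$, in which cases the corresponding products are read as $0$ and both estimates above remain valid, so the inequality holds trivially.
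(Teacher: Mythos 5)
Your proof is correct and follows essentially the same route as the paper: both decompose $X = X\,1\{X>x\} + X\,1\{X\le x\}$, multiply by the indicator of $\script{E}$, and bound the two terms pointwise by $X\,1\{X>x\}$ and $x\,1\{\script{E}\}$ respectively. The only additions in your write-up (the explicit convention when $Pr[\script{E}]=0$ or $Pr[X>x]=0$, and the alternative conditional-expectation phrasing) are harmless bookkeeping, not a different argument.
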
 
\begin{proof} 
Fix a value $x>0$. 
Define indicator functions $1_{\script{E}}$ and $1_{\{X > x\}}$ as follows: 
\begin{eqnarray*}
 1_{\script{E}} &\defequiv&  \left\{ \begin{array}{ll}
                          1&\mbox{ if event $\script{E}$ is true} \\
                             0  & \mbox{ otherwise} 
                            \end{array}
                                 \right.  \\
  1_{\{X>x\}} &\defequiv& \left\{ \begin{array}{ll}
                          1 &\mbox{ if $X > x$} \\
                             0  & \mbox{ otherwise} 
                            \end{array}
                                 \right.  
  \end{eqnarray*}
  Define $1_{\{X\leq x\}} \defequiv 1 - 1_{\{X>x\}}$. 
  Then: 
  \begin{eqnarray*}
   X1_{\script{E}} &=& X1_{\script{E}}(1_{\{X> x\}} + 1_{\{X\leq x\}}) \\
   &=& X1_{\script{E}}1_{\{X> x\}} + X1_{\script{E}}1_{\{X\leq x\}} \\
   &\leq& X1_{\{X >  x\}} + x1_{\script{E}}
\end{eqnarray*}
Therefore: 
\[ \expect{X 1_{\script{E}}} \leq \expect{X 1_{\{X>x\}}} + xPr[\script{E}]  \]
Thus: 
\[ \expect{X|\script{E}}Pr[\script{E}] \leq \expect{X|X>x}Pr[X>x] + xPr[\script{E}] \]
\end{proof} 

\begin{proof} (Theorem \ref{thm:rs-implies-mrs}(b)) 
Fix $\delta>0$ and define the event $\script{E}_t \defequiv \{Q(t)/t > \delta\}$. 
From (\ref{eq:reuse-rs}) we have: 
\begin{eqnarray} 
 \expect{\frac{Q(t)}{t}|\script{E}_t}Pr[\script{E}_t] &\leq& \frac{1}{t}\sum_{\tau=0}^{t-1}\expect{X(\tau)|\script{E}_t}Pr[\script{E}_t] \label{eq:thm1b} 
\end{eqnarray}
where we recall that $X(\tau) \defequiv a(\tau) + b^-(\tau)$. 
Now fix any (arbitrarily large) $x>0$.  From Lemma \ref{lem:prelim2} we have: 
\begin{eqnarray*}
 && \hspace{-.3in} \expect{X(\tau)|\script{E}_t} Pr[\script{E}_t] \\
 &\leq& \expect{X(\tau)|X>x}Pr[X(\tau)>x] + xPr[\script{E}_t] \\
 &\leq& \expect{Y|Y>x}Pr[Y>x] + xPr[\script{E}_t]
\end{eqnarray*}
where the final equality has used the assumption about the 
random variable $Y$ in 
Theorem \ref{thm:rs-implies-mrs} part (b). 
Plugging this into (\ref{eq:thm1b})
yields: 
\begin{eqnarray*}
\expect{\frac{Q(t)}{t}|\script{E}_t}Pr[\script{E}_t] \leq \expect{Y|Y>x}Pr[Y>x] \\
+  \frac{x}{t}\sum_{\tau=0}^{t-1}Pr[\script{E}_t] 
\end{eqnarray*}
Because $Pr[\script{E}_t]\rightarrow 0$ as $t\rightarrow \infty$, its time average
also converges to $0$.  Thus, taking a $\limsup$ of both sides of the 
above inequality as $t\rightarrow\infty$ yields: 
\begin{eqnarray*}
\limsup_{t\rightarrow\infty} \expect{\frac{Q(t)}{t}|\script{E}_t} Pr[\script{E}_t] \leq \expect{Y|Y>x}Pr[Y>x] 
\end{eqnarray*}
The above holds for all $x>0$. The 
fact that $\expect{Y}<\infty$ ensures that the right hand side of the above inequality vanishes
as $x\rightarrow\infty$.  Taking a limit as $x\rightarrow\infty$ thus proves: 
\[ \limsup_{t\rightarrow\infty} \expect{\frac{Q(t)}{t}|\script{E}_t}Pr[\script{E}_t] = 0 \]
and hence: 
\[ \lim_{t\rightarrow\infty} \expect{\frac{Q(t)}{t}|\script{E}_t}Pr[\script{E}_t] = 0 \]
This together with Lemma \ref{lem:prelim-limit} proves that $Q(t)$ is mean rate stable.
\end{proof}

\section*{Appendix B --- Proof of Theorem \ref{thm:strong-stability}(c)} 

Here we prove part (c) of Theorem \ref{thm:strong-stability}. 
The proof is similar to our previous proof 
in \cite{neely-downlink-ton}.

\begin{proof} (Theorem \ref{thm:strong-stability} part (c)) 
Suppose there is a finite constant $C>0$ such that $\expect{b(t) - a(t)} \leq C$ for all $t$,
and that $Q(t)$ is \emph{not} mean rate stable. It follows that there is an $\epsilon>0$ such
that $\expect{Q(t_k)/t_k} \geq \epsilon$ for an infinite collection of times $t_k$. 
For any $t_k$ and any $t> t_k$ we have by (\ref{eq:io-b2}): 
\[ Q(t) \geq Q(t_k) - \sum_{\tau=t_k}^{t-1} [b(\tau) - a(\tau)] \]
Thus, for any $t \geq t_k$ we have: 
\[ \expect{Q(t)} \geq \epsilon t_k - (t-t_k)C \]
Now fix any (arbitrarily large) value $M>0$. Then for sufficiently large
$k$ we have $\epsilon t_k > M$, and 
$\epsilon t_k - (t-t_k)C \geq M$ whenever:
\begin{equation} \label{eq:t-cond}
 t_k \leq t \leq \frac{(C+\epsilon)t_k - M}{C} = (1+\epsilon/C)t_k - M/C 
\end{equation} 
Hence, $\expect{Q(t)} \geq M$ whenever (\ref{eq:t-cond}) holds. 
Define $\hat{t}_k$ as: 
\[ \hat{t}_k \defequiv  \lfloor (1+\epsilon/C)t_k - M/C \rfloor  \]
The number of slots in the interval $t_k \leq t \leq \hat{t}_k$ 
given by (\ref{eq:t-cond})  is at least: 
\[ \hat{t}_k - t_k + 1 \geq \epsilon t_k/C - M/C  \]
It follows that: 
\[ \frac{1}{\hat{t}_k+1}\sum_{\tau=0}^{\hat{t}_k} \expect{Q(\tau)} \geq M\frac{\epsilon t_k/C - M/C}{(1+\epsilon/C)t_k - M/C+1} \]
Taking a $\limsup$ as $k\rightarrow \infty$ and noting that $\lim_{k\rightarrow\infty} t_k = \infty$
yields: 
\[ \limsup_{k\rightarrow\infty} \frac{1}{\hat{t}_k+1}\sum_{\tau=0}^{\hat{t}_k} \expect{Q(\tau)} \geq \frac{M \epsilon/C}{1+\epsilon/C} \]
This holds for arbitrarily large $M$.  Hence, taking a limit as $M\rightarrow \infty$ yields: 
\[ \limsup_{k\rightarrow\infty} \frac{1}{\hat{t}_k+1}\sum_{\tau=0}^{\hat{t}_k} \expect{Q(\tau)} \geq \infty \]
and thus $Q(t)$ is not strongly stable.  It follows that strongly stable implies mean rate stable. 

A similar proof can be done for the case when $\expect{a(t) + b^{-}(t)} \leq C$ for all $t$.  This 
can be shown by observing that for any $t<t_k$: 
\[   Q(t) \geq Q(t_k) - \sum_{\tau=t_k}^{t-1} [a(\tau) + b^{-}(\tau)]   \]
and hence: 
\[ \expect{Q(t)} \geq \epsilon t_k - C(t-t_k) \]
\end{proof}

\section*{Appendix C --- Proof of Theorem \ref{thm:strong-stability}(b)} 

Here we prove Theorem \ref{thm:strong-stability}(b), which shows that 
strong stability implies rate stability if certain boundedness assumptions are satisfied. 
Suppose $Q(t)$ has dynamics given by (\ref{eq:q-dynamics}), and that there is a 
finite constant $C>0$ such that with probability $1$, we have:
\begin{equation} \label{eq:c-cond} 
b(t) - a(t) \leq C \: \: \forall t \in \{0, 1, 2, \ldots \} 
\end{equation} 
For simplicity, we assume
the condition (\ref{eq:c-cond}) 
holds deterministically (so that we can neglect writing ``with probability 1.'') 
Suppose $Q(t)$ is strongly stable. We want to show that $\lim_{t\rightarrow\infty}Q(t)/t = 0$
with probability $1$.

We prove the result through several preliminary lemmas, presented below. 

\begin{lem} \label{lem:square-root} If $Q(t)$ is strongly stable and if there is a finite 
constant $C>0$ such that (\ref{eq:c-cond}) holds for all $t$, 
then $\expect{Q(t)/t} \leq O(1/\sqrt{t})$. Specifically, 
there exists a finite constant $D>0$ and a 
positive timeslot $t_D$ such that
\begin{equation} \label{eq:srthm1} 
\expect{Q(t)/t} \leq D/\sqrt{t} \: \: \mbox{ for all $t \geq t_D$} 
\end{equation} 
  Hence, for all $t \geq t_D$ and all $\epsilon>0$  we have: 
\begin{equation} \label{eq:srthm2} 
 Pr[Q(t)/t \geq \epsilon/4] \leq  4D/(\epsilon\sqrt{t}) 
 \end{equation} 
\end{lem}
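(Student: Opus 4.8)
The plan is to play off two opposing facts. Strong stability says the running time average of $\expect{Q(\tau)}$ stays bounded, so $Q(\tau)$ cannot be large on average; meanwhile the hypothesis $b(t)-a(t)\le C$ forces $Q$ to drop by at most $C$ per slot, so if $Q(t)$ is large then $Q(\tau)$ must also be fairly large for every $\tau$ in a window of slots immediately \emph{after} $t$. Balancing ``how large $Q(t)$ is'' against ``how long the forced-large window must be'' yields the $O(\sqrt{t})$ bound.

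Concretely, first I would use strong stability to fix a finite constant $A>0$ (any number exceeding the finite $\limsup$ in (\ref{eq:strongly-stable})) and a time $t_0$ with $\sum_{\tau=0}^{s-1}\expect{Q(\tau)}\le As$ for all $s\ge t_0$. Next, applying the sample path property (\ref{eq:io-b2}) with $t_1=t$ and $t_2=\tau$ for any $\tau>t$, together with $b(s)-a(s)\le C$, gives $Q(\tau)\ge Q(t)-(\tau-t)C$, i.e. $Q(t)\le Q(\tau)+(\tau-t)C$. Averaging this over $\tau\in\{t,t+1,\dots,t+n-1\}$ for an as-yet-unspecified integer $n\ge 1$ gives $Q(t)\le \frac{1}{n}\sum_{\tau=t}^{t+n-1}Q(\tau)+\frac{C(n-1)}{2}$. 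Taking expectations and bounding $\sum_{\tau=t}^{t+n-1}\expect{Q(\tau)}\le\sum_{\tau=0}^{t+n-1}\expect{Q(\tau)}\le A(t+n)$ (valid once $t\ge t_0$) produces $\expect{Q(t)}\le A+\frac{At}{n}+\frac{C(n-1)}{2}$.

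The one substantive step is then to optimize $n$: the right-hand side is essentially $\frac{At}{n}+\frac{Cn}{2}$ plus lower-order terms, minimized near $n\approx\sqrt{2At/C}$, so taking $n=\lceil\sqrt{2At/C}\,\rceil$ yields $\expect{Q(t)}\le A+\frac{C}{2}+\sqrt{2ACt}$, hence $\expect{Q(t)/t}\le\sqrt{2AC/t}+(A+\tfrac{C}{2})/t$. For $t$ past a threshold (e.g. $t\ge(A+C/2)^2$) the second term is at most $1/\sqrt{t}$, which proves (\ref{eq:srthm1}) with, say, $D=\sqrt{2AC}+1$ and $t_D=\max\{t_0,\lceil(A+C/2)^2\rceil\}$. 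Finally, (\ref{eq:srthm2}) is immediate from Markov's inequality applied to the non-negative random variable $Q(t)/t$: for $t\ge t_D$ and $\epsilon>0$, $Pr[Q(t)/t\ge\epsilon/4]\le\frac{4}{\epsilon}\expect{Q(t)/t}\le\frac{4D}{\epsilon\sqrt{t}}$.

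I expect the only fuss to be the bookkeeping around the integer window size $n$ and pinning down the explicit constants $D$ and $t_D$; the conceptual heart — the ``$Q$ falls by at most $C$ per slot'' inequality from (\ref{eq:io-b2}) feeding a forward-window averaging argument against the time-average bound supplied by strong stability — is straightforward.
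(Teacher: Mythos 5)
Your proof is correct, and while it rests on exactly the same two facts the paper uses --- strong stability caps the running time average $\frac{1}{t}\sum_{\tau<t}\expect{Q(\tau)}$, and $b(t)-a(t)\le C$ (via (\ref{eq:io-b2})) forces $Q(\tau)\ge Q(t)-C(\tau-t)$ for $\tau\ge t$ --- the way you combine them is genuinely different in form from the paper's argument. The paper argues by contradiction: it supposes $\expect{Q(t_i)}>D\sqrt{t_i}$ at infinitely many $t_i$, lower-bounds $\expect{Q(\tau)}/\tau$ on a sub-window of $[t_i,2t_i)$ of length $\Theta(\sqrt{t_i})$ where the drift inequality keeps $Q$ at least $D\sqrt{t_i}/2$, and finds that the running average would exceed $D^2/(32C)$, contradicting strong stability once $D$ is large. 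You instead turn the same drift inequality around to upper-bound $Q(t)$ by a forward window average, $Q(t)\le\frac{1}{n}\sum_{\tau=t}^{t+n-1}Q(\tau)+\frac{C(n-1)}{2}$, plug in the strong-stability bound $\sum_{\tau<t+n}\expect{Q(\tau)}\le A(t+n)$, and optimize $n\approx\sqrt{2At/C}$ to get $\expect{Q(t)}\le A+\sqrt{2ACt}$ directly. The two approaches are close cousins --- both trade off window length against the linear decay rate $C$ --- but yours is constructive rather than by contradiction, makes the optimization over the window size explicit, and produces cleaner constants with less bookkeeping; the paper's version essentially fixes the window length at a suboptimal value (scaled by the contradicted $D$) and closes by contradiction. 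Your Markov-inequality step for (\ref{eq:srthm2}) matches the paper's. The only small thing to keep in mind is the requirement $n\ge 1$ so the window is nonempty, but your threshold $t\ge t_D$ already absorbs this once $t_D$ is chosen large enough.
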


\begin{proof} 
Because $Q(t)$ is strongly stable, there is a finite constant $B>0$ such that: 
\[ \limsup_{t\rightarrow\infty} \frac{1}{t} \sum_{\tau=0}^{t-1} \expect{Q(\tau)} < B < \infty \]
Then for large $t$ (all $t \geq t^*$ for some $t^*$),  we have: 
\[ \frac{1}{t}\sum_{\tau=0}^{t-1} \expect{Q(\tau)} \leq B \]

Suppose now that for any finite $D>0$, 
there exist arbitrarily large times $t_i$ such that $\expect{Q(t_i)/t_i} > D/\sqrt{t_i}$. 
We shall reach a contradiction.  For $t_i \geq t^*$ we have: 
\begin{eqnarray}
B \geq \frac{1}{2t_i}\sum_{\tau=0}^{2t_i-1} \expect{Q(\tau)} &\geq& \frac{1}{2}\sum_{\tau=t_i}^{2t_i-1} \expect{\frac{Q(\tau)}{t_i}} \nonumber \\
&\geq& \frac{1}{2}\sum_{\tau=t_i}^{2t_i-1} \expect{\frac{Q(\tau)}{\tau}}  \label{eq:foop}
\end{eqnarray}
Because (\ref{eq:c-cond}) holds, we 
know that  for all $\tau \geq t_i$:
\[ Q(\tau) \geq Q(t_i) - C(\tau - t_i) \]
Further,  $\expect{Q(t_i)/t_i} > D/\sqrt{t_i}$ and so $\expect{Q(t_i)} > D\sqrt{t_i}$.  
We thus have for all $\tau \in \{t_i, \ldots, 2t_i - 1\}$): 
\[ \frac{\expect{Q(\tau)}}{\tau} > \frac{D\sqrt{t_i} -C(\tau-t_i)}{\tau}  \]
Now assume that $t_i$ is large, so that $2t_i - 1 \geq  t_i + \lfloor D\sqrt{t_i}/(2C) \rfloor \geq t_i + 
D\sqrt{t_i}/(4C)$. 
Note that if: 
 \[ \tau \in\{t_i, \ldots, t_i + \lfloor D\sqrt{t_i}/(2C)\rfloor\} \]
 then $\tau - t_i \leq D\sqrt{t_i}/(2C)$ and we know: 
\begin{eqnarray*}
 \frac{\expect{Q(\tau)}}{\tau} &>& \frac{D\sqrt{t_i} - C(\tau-t_i)}{\tau} \\
 &\geq& \frac{D\sqrt{t_i} - C(\tau-t_i)}{2t_i} \\
 &\geq& D/(4\sqrt{t_i}) 
\end{eqnarray*}
 It follows that:
\[ \sum_{\tau=t_i}^{2t_i-1} \expect{Q(\tau)/\tau} \geq \left(\frac{D}{4\sqrt{t_i}}\right) \frac{D\sqrt{t_i}}{4C} 
= \frac{D^2}{16C} \]
Therefore, for large $t_i$, from (\ref{eq:foop})  we have:  
\[    B \geq  \frac{D^2}{32C} \]
The above inequality must hold for all $D>0$.  This clearly does not hold for $D> \sqrt{32 BC}$, yielding
a contradiction and hence proving the result (\ref{eq:srthm1}). 

To prove (\ref{eq:srthm2}), we use the fact that (\ref{eq:srthm1}) holds to get for any time
$t > t_D$: 
\begin{eqnarray*}
D/\sqrt{t} \geq \expect{Q(t)/t} \geq  (\epsilon/4)Pr[Q(t)/t \geq \epsilon/4]
\end{eqnarray*}
Dividing the above by $\epsilon/4$ yields (\ref{eq:srthm2}). 
\end{proof} 

Now again suppose $a(t)$ and $b(t)$ satisfies (\ref{eq:c-cond}) for all $t$. 
Fix $\epsilon>0$ and define a constant $\alpha >0$ as follows: 
\[ \alpha \defequiv \min\left[\frac{\epsilon}{2C}, \frac{1}{2} \right] \]
Fix an integer time $t_0$ such that $t_0 \geq 1/\alpha$, and define the following 
sequence for $i \in \{0, 1, 2, \ldots\}$:
\[ t_{i+1} = t_i + \lfloor \alpha t_i \rfloor \]
Note that for all $i$ we have $t_{i+1} > t_i$ (because $\alpha t_i \geq 1$). 
Let the set of times in the interval $\{t_i, \ldots, t_i + \lfloor \alpha t_i \rfloor  - 1\}$ denote the $i$th frame.

\begin{lem}  If $Q(t)/t \geq \epsilon$ for any time $t\geq t_0$ 
in the $i$th frame (for some $i \in \{0, 1, 2, \ldots\}$), then $Q(t_{i+1})/t_{i+1} \geq \epsilon/4$. 
\end{lem}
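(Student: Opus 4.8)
The plan is to compare the queue at the ``bad'' time $t$ inside the $i$th frame with the queue at the frame endpoint $t_{i+1}$ using the sample path property, and then convert the resulting lower bound on $Q(t_{i+1})$ into a lower bound on $Q(t_{i+1})/t_{i+1}$ by exploiting that $t_i$ and $t_{i+1}$ are within a constant factor of each other.

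First I would apply the sample path inequality (\ref{eq:io-b2}) with $t_1 = t$ and $t_2 = t_{i+1}$; this is legitimate since the $i$th frame is $\{t_i,\ldots,t_{i+1}-1\}$, so $t < t_{i+1}$. This gives $Q(t_{i+1}) \geq Q(t) - \sum_{\tau=t}^{t_{i+1}-1}[b(\tau)-a(\tau)]$, and applying the boundedness hypothesis (\ref{eq:c-cond}) termwise yields $Q(t_{i+1}) \geq Q(t) - C(t_{i+1}-t)$. Next I would bound the frame length: since $t \geq t_i$, we have $t_{i+1}-t \leq t_{i+1}-t_i = \lfloor \alpha t_i\rfloor \leq \alpha t_i \leq \alpha t$, so $Q(t_{i+1}) \geq Q(t) - C\alpha t$. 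Using the hypothesis $Q(t) \geq \epsilon t$ and the fact that $\alpha \leq \epsilon/(2C)$ (hence $C\alpha \leq \epsilon/2$), this becomes $Q(t_{i+1}) \geq (\epsilon - C\alpha)t \geq (\epsilon/2)t$.

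Then I would pass to a rate bound. Because $\alpha \leq 1/2$, we have $t_{i+1} = t_i + \lfloor \alpha t_i\rfloor \leq (1+\alpha)t_i \leq (3/2)t_i$, so $t \geq t_i \geq (2/3)t_{i+1}$. Combining, $Q(t_{i+1}) \geq (\epsilon/2)t \geq (\epsilon/2)(2/3)t_{i+1} = (\epsilon/3)t_{i+1} \geq (\epsilon/4)t_{i+1}$, which is exactly the claim. The gap between $\epsilon/3$ and the claimed $\epsilon/4$ is deliberate slack that will be convenient in the surrounding argument.

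There is no real obstacle here: the whole lemma is a short chain of inequalities. The only points needing a little care are (i) keeping the sign in (\ref{eq:io-b2}) straight so the term $-C(t_{i+1}-t)$ has the correct sign; (ii) using that $t$ strictly precedes $t_{i+1}$, which is guaranteed by the definition of the $i$th frame; and (iii) handling the floor in $t_{i+1}-t_i = \lfloor \alpha t_i\rfloor$, which enters only through the one-sided estimate $\lfloor \alpha t_i\rfloor \leq \alpha t_i$. The standing assumption $t_0 \geq 1/\alpha$ is used only to make the frames nonempty, not in the inequality chain itself (if the frame were empty the hypothesis would be vacuous), so it plays no essential role in this particular step.
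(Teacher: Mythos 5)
Your proof is correct and follows essentially the same chain as the paper's: apply the sample-path inequality together with $b(\tau)-a(\tau)\leq C$ to get $Q(t_{i+1})\geq Q(t)-C(t_{i+1}-t)$, bound $t_{i+1}-t$ by $\alpha t_i$ (hence by $\alpha t$), invoke $\alpha\leq\epsilon/(2C)$, and then use $\alpha\leq 1/2$ to control $t_i/t_{i+1}$. Your arithmetic is marginally tighter (yielding $\epsilon/3$ before relaxing to $\epsilon/4$, via $t_i/t_{i+1}\geq 1/(1+\alpha)$ rather than the paper's $t_i/t_{i+1}\geq 1-\alpha$), but this is the same argument.
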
 

\begin{proof} Fix $\epsilon>0$, and 
suppose $Q(t)/t \geq \epsilon$, where $t\geq t_0$ and $t$ is in the $i$th frame for some $i \in \{0, 1, 2, \ldots\}$. 
Then: 
\begin{equation} \label{eq:ti} 
 t_i \leq t <  t_{i+1} \leq t_i +\alpha t_i 
 \end{equation} 
Hence: 
\begin{equation} \label{eq:lemgeom} 
  t_{i+1} -t \leq \alpha t_i
  \end{equation} 
However: 
\[ Q(t_{i+1}) \geq Q(t) -C (t_{i+1} - t) \]
Therefore (because $Q(t)/t \geq \epsilon$) we have: 
\begin{eqnarray}
 Q(t_{i+1}) &\geq& \epsilon t - C (t_{i+1} - t)  \nonumber \\
 &\geq&  \epsilon t_i  - C(t_{i+1} - t) \label{eq:ti1} \\
 &\geq& \epsilon t_i - C \alpha t_i \label{eq:ti2} \\
 &\geq& t_i\epsilon/2 \label{eq:ti3} 
 \end{eqnarray} 
 where (\ref{eq:ti1}) follows because $t \geq t_i$, (\ref{eq:ti2}) follows from (\ref{eq:lemgeom}), 
 and (\ref{eq:ti3}) follows because $\alpha \leq \epsilon/(2C)$ (by definition of $\alpha$). 
 Thus: 
 \[  \frac{Q(t_{i+1})}{t_{i+1}} \geq \frac{\epsilon}{2}\left(\frac{t_i}{t_{i+1}}\right) \]
 However, from (\ref{eq:ti})  we have: 
 \[ t_{i+1} - t_i \leq \alpha t_i \]
 and so: 
 \[ 1 - (t_i/t_{i+1}) \leq \alpha (t_i/t_{i+1})  \leq \alpha \]
 Thus: 
 \[ \frac{t_i}{t_{i+1}} \geq  1 - \alpha  \geq  1/2 \]
 where the last inequality follows because $\alpha \leq 1/2$ (by definition of $\alpha$). 
 Using this in (\ref{eq:ti3}) yields: 
 \[ \frac{Q(t_{i+1})}{t_{i+1}} \geq \left(\frac{t_i}{t_{i+1}}\right)\epsilon/2  \geq \epsilon/4 \]
This proves the result. 
\end{proof} 

\begin{lem} For each frame $i \in \{0, 1, 2, \ldots\}$ we have: 
\begin{eqnarray*}
Pr[\mbox{$Q(t)/t \geq \epsilon$  for some $t$ in the $i$th frame}] \\
\leq Pr[Q(t_{i+1})/t_{i+1} \geq \epsilon/4] 
\end{eqnarray*}
\end{lem}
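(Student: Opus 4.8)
The plan is to obtain this lemma immediately from the preceding one by a pointwise (sample-path) inclusion of events. The key observation is that the preceding lemma is purely deterministic: under the standing assumption (\ref{eq:c-cond}), on \emph{any} sample path, if $Q(t)/t \geq \epsilon$ for some time $t \geq t_0$ that lies in the $i$th frame, then on that same sample path $Q(t_{i+1})/t_{i+1} \geq \epsilon/4$. Its proof uses only the deterministic bound $Q(t_{i+1}) \geq Q(t) - C(t_{i+1}-t)$ coming from (\ref{eq:io-b2}) together with (\ref{eq:c-cond}), so it is valid outcome by outcome.

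First I would note that the hypothesis ``$t \geq t_0$'' in the preceding lemma is automatic in the present setting. Since the frame endpoints satisfy $t_0 < t_1 < t_2 < \cdots$ (because $\alpha t_i \geq 1$ for all $i$, as already observed right before that lemma), the entire $i$th frame $\{t_i, \ldots, t_i + \lfloor \alpha t_i \rfloor - 1\} = \{t_i, \ldots, t_{i+1}-1\}$ is contained in $\{t_0, t_0+1, \ldots\}$. Hence every time $t$ appearing in the event on the left-hand side of the claimed inequality satisfies $t \geq t_0$, so the preceding lemma applies to it verbatim.

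It then follows that the event $\{Q(t)/t \geq \epsilon \mbox{ for some } t \mbox{ in the } i\mbox{th frame}\}$ is a subset of the event $\{Q(t_{i+1})/t_{i+1} \geq \epsilon/4\}$; the latter involves the fixed deterministic index $t_{i+1}$ and is therefore measurable. Monotonicity of probability under this set inclusion yields the stated inequality. I do not anticipate any real obstacle here: the only points requiring care are confirming that all frame-$i$ times exceed $t_0$ (so that the preceding lemma is genuinely applicable) and noting that the preceding lemma's conclusion holds path by path — both of which are straightforward.
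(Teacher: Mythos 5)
Your proposal is correct and matches the paper's approach exactly: the paper's proof is the one-liner ``This follows immediately from the previous Lemma,'' and you have simply spelled out why — the previous lemma is a deterministic, sample-path statement, so the event on the left is contained in the event on the right, and monotonicity of probability gives the inequality. Your observation that every time in the $i$th frame automatically satisfies $t \geq t_0$ (so the preceding lemma applies) is a correct and useful detail.
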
 
\begin{proof} 
This follows immediately from the previous Lemma. 
\end{proof} 

The remainder of the proof is similar to the standard 
proof of the strong law of large numbers
(see, for example, \cite{billingsley}). 
We have for any $t_0$ that starts the frames: 
\begin{eqnarray}
 && \hspace{-.3in} Pr\left[\limsup_{t\rightarrow\infty} Q(t)/t \geq \epsilon\right] \nonumber \\
 &\leq& Pr\left[\sup_{t\geq t_0} Q(t)/t \geq \epsilon\right] \nonumber \\
 &\leq& \sum_{i=0}^{\infty} Pr[\mbox{$Q(\tau)/\tau \geq \epsilon$ for some $\tau$ in the $i$th frame}]\nonumber \\
 &\leq& \sum_{i=0}^{\infty} Pr[Q(t_{i+1})/t_{i+1} \geq \epsilon/4] \label{eq:final-summ} 
\end{eqnarray}

However, $t_i$ is an exponentially growing sequence (note that $t_{i+1} \geq (1+\alpha)t_i - 1$), and so we are sampling $Q(t)/t$ at exponentially
increasing times. However, if $Q(t)$ is strongly stable, 
from Lemma \ref{lem:square-root} we know that $Pr[Q(t)/t \geq \epsilon/4] \leq O(1/\sqrt{t})$. 
Hence, the final
sum in (\ref{eq:final-summ}) is summable and goes to zero as $t_0 \rightarrow \infty$. 
This proves that if $Q(t)$ is strongly stable, then for any $\epsilon>0$ we know: 
\[ Pr\left[\limsup_{t\rightarrow\infty} Q(t)/t \geq \epsilon\right] = 0\]
Because this holds for all $\epsilon>0$, it must be the case that $Q(t)/t \rightarrow 0$ with probability 1 (and so 
the queue is rate stable). 

This proof considers the case when $b(t) - a(t) \leq C$ for all $t$.  The other case when $a(t) + b^-(t) \leq C$
for all $t$ is proven similarly and is omitted for brevity.

\section*{Appendix D --- Proof of Theorem \ref{thm:ss-implies-mrs}}

Here we prove Theorem \ref{thm:ss-implies-mrs}.
Suppose there is a finite constant $C$ such that $a(t) + b^-(t) \leq C$ with probability $1$
for all $t$.  For simplicity, we assume that $Q(0) = 0$, and that $a(t) + b^-(t) \leq C$ deterministically
(so that we do not need to repeat the phrase ``with probability 1''). 
Suppose that $Q(t)$ is \emph{not} mean rate stable.  We show that it is not
steady state stable.

Because $Q(t)$ is not mean rate stable, there must be an $\epsilon>0$ and an infinite
collection of increasing times $t_k$ such 
that $\expect{Q(t_k)/t_k} \geq \epsilon$ for 
all $k\in\{1,2, \ldots\}$ and $\lim_{k\rightarrow\infty} t_k = \infty$. 
Now fix an (arbitrarily large) value $M$.   We have for any time $t \leq t_k$: 
\[ Q(t_k) \leq Q(t) + \sum_{\tau=t}^{t_k-1} [a(\tau) + b^-(\tau)]  \leq Q(t) + C(t_k-t) \]
Thus: 
\begin{equation} 
\expect{Q(t)} \geq \expect{Q(t_k)} - C(t_k - t) \geq \epsilon t_k - C(t_k-t) \label{eq:tm5} 
\end{equation} 
On the other hand, for $t \leq t_k$ we have: 
\begin{eqnarray*}
\expect{Q(t)} &\leq& MPr[Q(t)\leq M] \\
&& + \expect{Q(t)|Q(t)>M}Pr[Q(t)>M] \\
&\leq& M + CtPr[Q(t)>M]  \\
&\leq& M  + Ct_kPr[Q(t)>M] 
\end{eqnarray*}
where we have used the fact that $Q(t) \leq Ct \leq Ct_k$ (because $Q(0) = 0$, the queue increases
by at most $C$ on each slot, and $t \leq t_k$). 
Thus: 
\[ \expect{Q(t)} \leq M + Ct_kPr[Q(t)>M] \]
Combining this with (\ref{eq:tm5}) yields: 
\[ M + Ct_kPr[Q(t)>M] \geq \epsilon t_k - C(t_k-t) \]
Therefore, for all $t \leq t_k$ we have: 
\begin{equation} \label{eq:t5a}  
Pr[Q(t)>M] \geq \frac{\epsilon t_k - C(t_k-t) - M}{Ct_k} 
\end{equation}
Now suppose that: 
\begin{equation} \label{eq:t-cond2} 
0 \leq (t_k-t) \leq (\epsilon/2C)t_k 
\end{equation} 
It follows from (\ref{eq:t-cond2})  that:
\[ C(t_k-t) \leq (\epsilon/2) t_k \]
Using this in (\ref{eq:t5a}) gives: 
\[ Pr[Q(t)>M] \geq \frac{(\epsilon/2)t_k - M}{Ct_k} \]
The above holds for all $t$ that satisfy (\ref{eq:t-cond2}). 
Now assume that $k$ is large enough to ensure that $(\epsilon/2)t_k - M \geq (\epsilon/4)t_k$
(this is true for sufficiently large $k$ because $\lim_{k\rightarrow\infty} t_k = \infty$.
Thus, for sufficiently large $k$, and if (\ref{eq:t-cond2}) holds, the above bound becomes: 
\begin{equation} \label{eq:t5b} 
Pr[Q(t)>M] \geq \frac{\epsilon}{4C} 
\end{equation} 
From (\ref{eq:t-cond2}), we see that the number of slots $t\leq t_k$ for which 
(\ref{eq:t5b}) holds is at least $[\epsilon/(2C)]t_k$. Therefore: 
\[ \frac{1}{t_k+1} \sum_{\tau=0}^{t_k} Pr[Q(\tau)>M] \geq \frac{1}{t_k+1}\left(\frac{\epsilon}{4C}\right)\left(\frac{t_k\epsilon}{2C}\right) \]
It follows that: 
\[ \limsup_{k\rightarrow\infty}\frac{1}{t_k+1}\sum_{\tau=0}^{t_k} Pr[Q(\tau)>M] \geq \frac{\epsilon^2}{8C^2} \]
Thefore, the function $g(M)$ defined by (\ref{eq:gm}) satisfies for all $M>0$: 
\[ g(M) \geq \frac{\epsilon^2}{8C^2} \]
It follows that: 
\[ \lim_{M\rightarrow\infty} g(M) \geq \frac{\epsilon^2}{8C^2} > 0 \]
and hence $Q(t)$ is not steady state stable.

\section*{Appendix E --- Additional Sample Path Results} 

We begin by reviewing 
some general results concerning expectations of limits of non-negative random processes.   
These can be viewed as probabilistic
interpretations of Fatou's Lemma and the Lebesgue Dominated Convergence Theorem from measure theory (where
an expectation can be viewed as an integral over an appropriate probability measure). 
We state these without proof (see, for example, \cite{billingsley}). 
Both lemmas below are stated for 
a general non-negative stochastic process $X(t)$ 
defined over $t \geq 0$ (where $t$ can either be an  integer index
or a value in the  real number line). 

\begin{lem} \label{lem:fatou} (Fatou's Lemma)  For any non-negative stochastic process $X(t)$ for $t\geq 0$, we have: 
\[ \liminf_{t\rightarrow\infty} \expect{X(t)} \geq \expect{\liminf_{t\rightarrow\infty} X(t)} \]
where both both sides of the inequality can be potentially infinite. 
\end{lem}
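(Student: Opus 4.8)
The plan is to reduce this statement to the classical measure-theoretic Fatou's Lemma, which itself follows from the Monotone Convergence Theorem (both standard; see \cite{billingsley}). The argument is short enough that the ``plan'' is essentially the whole proof, so I will lay out the steps in the order I would carry them out.

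First I would introduce, for each integer $n \geq 0$, the random variable $Z_n \defequiv \inf_{t \geq n} X(t)$, where the infimum is over the relevant index set (integers $t \geq n$, or reals $t \geq n$ under the implicit measurability assumptions on the process). Each $Z_n$ is a non-negative random variable, the sequence $\{Z_n\}$ is non-decreasing in $n$, and by definition of $\liminf$ we have $Z_n \uparrow \liminf_{t\rightarrow\infty} X(t)$ pointwise as $n\rightarrow\infty$.

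Next, for every $n$ and every $t \geq n$ we have $X(t) \geq Z_n$ pointwise, and hence $\expect{X(t)} \geq \expect{Z_n}$ by monotonicity of expectation. Taking the infimum over $t \geq n$ gives $\inf_{t\geq n}\expect{X(t)} \geq \expect{Z_n}$, and letting $n\rightarrow\infty$ (the left side converges increasingly to $\liminf_{t\rightarrow\infty}\expect{X(t)}$) yields
\[ \liminf_{t\rightarrow\infty}\expect{X(t)} \geq \lim_{n\rightarrow\infty}\expect{Z_n}. \]
Since $0 \leq Z_n \uparrow \liminf_{t\rightarrow\infty}X(t)$, the Monotone Convergence Theorem gives $\lim_{n\rightarrow\infty}\expect{Z_n} = \expect{\liminf_{t\rightarrow\infty}X(t)}$ (both sides possibly $+\infty$), which combined with the previous display completes the proof.

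The one step I would flag is measurability of $Z_n = \inf_{t\geq n} X(t)$ when the index $t$ ranges over an uncountable set: for an integer-indexed process this is automatic, and for a general process one restricts to a countable dense set of times or invokes the usual joint-measurability/separability conditions under which $\liminf_{t\rightarrow\infty}X(t)$ is itself a random variable. Once measurability is granted, the remainder relies only on monotonicity of expectation and the Monotone Convergence Theorem and is entirely routine.
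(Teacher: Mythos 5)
Your proof is correct, but note that the paper does not prove this lemma at all: immediately before stating it, the paper says ``We state these without proof (see, for example, \cite{billingsley}).'' So there is no in-paper argument to compare against. What you have written is the standard textbook derivation of Fatou's Lemma from the Monotone Convergence Theorem (define $Z_n = \inf_{t\geq n} X(t)$, observe $Z_n \uparrow \liminf_t X(t)$, bound $\expect{X(t)} \geq \expect{Z_n}$ for $t\geq n$, and apply MCT), which is precisely the argument the citation to Billingsley is invoking. Your remark on measurability when $t$ ranges over an uncountable index set is a sensible caveat; the paper sidesteps it by simply treating $\liminf_{t\to\infty} X(t)$ as a well-defined random variable. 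In short: correct, routine, and consistent with the source the paper defers to.
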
 

\begin{lem} \label{lem:lebesgue} (Lebesgue Dominated Convergence Theorem) Suppose that $X(t)$ is a non-negative
stochastic process for $t \geq 0$, and that there is a non-negative random variable $Y$, defined on the same probability space
as the process $X(t)$,  such that $X(t) \leq Y$ (with probability $1$) 
for all $t$. Further assume that $\expect{Y} < \infty$.  Then: 

(a) $\limsup_{t\rightarrow\infty} \expect{X(t)} \leq \expect{\limsup_{t\rightarrow\infty} X(t)} \leq \expect{Y}$ 

(b) In addition to the assumption that $X(t) \leq Y$ with probability 1, if 
$X(t)$ converges to a non-negative random variable $X$ with probability 1, then the limit
of $\expect{X(t)}$ is well defined, and: 
\[ \lim_{t\rightarrow\infty} \expect{X(t)} = \expect{X} \] 
\end{lem}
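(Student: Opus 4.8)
The plan is to derive both parts of the lemma from Fatou's Lemma (Lemma \ref{lem:fatou}), using the finiteness of $\expect{Y}$ in an essential way. First I would dispose of the easy inequality in part (a): since $X(t) \leq Y$ with probability $1$ for all $t$, we have $\limsup_{t\rightarrow\infty} X(t) \leq Y$ with probability $1$, and taking expectations gives $\expect{\limsup_{t\rightarrow\infty} X(t)} \leq \expect{Y} < \infty$. This also records, for later use, that the random variable $\limsup_{t\rightarrow\infty}X(t)$ has finite expectation.

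For the first inequality of part (a) --- a ``reverse Fatou'' statement --- I would apply Lemma \ref{lem:fatou} to the process $Z(t) \defequiv Y - X(t)$, which is non-negative with probability $1$ by hypothesis. Fatou gives $\liminf_{t\rightarrow\infty}\expect{Y - X(t)} \geq \expect{\liminf_{t\rightarrow\infty}(Y-X(t))}$. Because $\expect{Y}<\infty$ and $0 \leq \expect{X(t)} \leq \expect{Y}$, the left side equals $\expect{Y} - \limsup_{t\rightarrow\infty}\expect{X(t)}$. Because $Y$ is a fixed random variable, $\liminf_{t\rightarrow\infty}(Y - X(t)) = Y - \limsup_{t\rightarrow\infty}X(t)$ with probability $1$, and since this $\limsup$ has finite expectation the right side equals $\expect{Y} - \expect{\limsup_{t\rightarrow\infty}X(t)}$. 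Cancelling the finite quantity $\expect{Y}$ from both sides and reversing the inequality yields $\limsup_{t\rightarrow\infty}\expect{X(t)} \leq \expect{\limsup_{t\rightarrow\infty}X(t)}$, which together with the easy bound completes part (a).

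For part (b), the almost-sure convergence $X(t)\rightarrow X$ forces $\liminf_{t\rightarrow\infty}X(t) = \limsup_{t\rightarrow\infty}X(t) = X$ with probability $1$. Then Lemma \ref{lem:fatou} gives $\liminf_{t\rightarrow\infty}\expect{X(t)} \geq \expect{\liminf_{t\rightarrow\infty}X(t)} = \expect{X}$, while part (a) gives $\limsup_{t\rightarrow\infty}\expect{X(t)} \leq \expect{\limsup_{t\rightarrow\infty}X(t)} = \expect{X}$. Sandwiching $\expect{X}$ between the $\liminf$ and $\limsup$ of $\expect{X(t)}$ shows the limit exists and equals $\expect{X}$.

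I expect the only delicate point to be the bookkeeping around $\expect{Y}<\infty$: this finiteness is exactly what licenses the identity $\expect{Y-X(t)} = \expect{Y}-\expect{X(t)}$, the conversion of $\liminf$ of a sum into a sum of $\liminf/\limsup$ after taking expectations, and the cancellation of $\expect{Y}$ from both sides of an inequality --- all of which would be meaningless ($\infty-\infty$) without it. A secondary technical point, which I would dispatch with a one-line remark, is the measurability of the $\limsup$ and $\liminf$ processes and the applicability of Fatou's Lemma when $t$ ranges over the real line rather than the integers; the latter is already subsumed by the general statement of Lemma \ref{lem:fatou}.
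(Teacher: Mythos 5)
Your proof is correct. Note, however, that the paper does not actually supply a proof of this lemma: both Fatou's Lemma and the Dominated Convergence Theorem are stated without proof, with a citation to Billingsley, so there is no textual argument to compare against. Your derivation is the standard one — part (a) via a ``reverse Fatou'' step applied to the non-negative process $Y - X(t)$, part (b) by sandwiching $\expect{X}$ between the direct Fatou bound on $\liminf_{t\rightarrow\infty}\expect{X(t)}$ and the part (a) bound on $\limsup_{t\rightarrow\infty}\expect{X(t)}$. Your bookkeeping of the hypothesis $\expect{Y}<\infty$ is exactly where care is required (it licenses $\expect{Y - X(t)} = \expect{Y} - \expect{X(t)}$, the split $\expect{Y - \limsup_{t\rightarrow\infty} X(t)} = \expect{Y} - \expect{\limsup_{t\rightarrow\infty} X(t)}$, and the cancellation of $\expect{Y}$), and you handle it correctly, having first established $\expect{\limsup_{t\rightarrow\infty}X(t)} \leq \expect{Y} < \infty$. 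One cosmetic point: the paper's Fatou statement is phrased for a non-negative process, whereas $Y - X(t)$ is only non-negative with probability $1$; this is immaterial since redefining $X(t)$ on a null set changes no expectation, but it would be worth a clause.
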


\subsection{Applications to Queue Sample Path Analysis} 

\begin{lem} Let $Q(t)$ be a general non-negative random process defined over $t \in \{0, 1, 2, \ldots\}$. 
Suppose that the following limit is well defined as a (possibly infinite) random variable $Q_{av}$ with probability 1: 
\begin{equation} 
 \lim_{t\rightarrow\infty} \frac{1}{t}\sum_{\tau=0}^{t-1} Q(\tau) = Q_{av} \: \: \mbox{ with probability 1}   \label{eq:sample-path-strong}
\end{equation} 
Then: 
\begin{equation*} 
 \expect{Q_{av}} \leq \liminf_{t\rightarrow\infty} \frac{1}{t}\sum_{\tau=0}^{t-1} \expect{Q(\tau)}
 \end{equation*} 
Therefore, if $Q(t)$ is strongly stable, it must be that $Q_{av} < \infty$ with probability 1. 
\end{lem}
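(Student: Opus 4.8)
The plan is to apply Fatou's Lemma (Lemma \ref{lem:fatou}) directly to the averaged process. Define, for each integer $t>0$, the non-negative random variable
\[ X(t) \defequiv \frac{1}{t}\sum_{\tau=0}^{t-1} Q(\tau) . \]
Since each $Q(\tau)$ is non-negative, $\{X(t)\}$ is a non-negative stochastic process indexed by $t$, so Lemma \ref{lem:fatou} applies and gives $\liminf_{t\rightarrow\infty}\expect{X(t)} \geq \expect{\liminf_{t\rightarrow\infty} X(t)}$, where both sides are permitted to be infinite.

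Next I would evaluate the two sides. By hypothesis (\ref{eq:sample-path-strong}), $X(t) \rightarrow Q_{av}$ with probability $1$, so $\liminf_{t\rightarrow\infty} X(t) = Q_{av}$ almost surely, and hence $\expect{\liminf_{t\rightarrow\infty} X(t)} = \expect{Q_{av}}$ (interpreted as an element of $[0,\infty]$). For the left-hand side, since the sum defining $X(t)$ has finitely many terms and each $Q(\tau)\geq 0$, linearity of expectation for non-negative random variables gives $\expect{X(t)} = \frac{1}{t}\sum_{\tau=0}^{t-1}\expect{Q(\tau)}$ for every $t$ (an identity valid in $[0,\infty]$ even if some individual expectations are infinite). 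Substituting both expressions into the Fatou inequality yields
\[ \expect{Q_{av}} \leq \liminf_{t\rightarrow\infty}\frac{1}{t}\sum_{\tau=0}^{t-1}\expect{Q(\tau)} , \]
which is the claimed bound.

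For the concluding assertion, suppose $Q(t)$ is strongly stable. Then by definition (\ref{eq:strongly-stable}) we have $\limsup_{t\rightarrow\infty}\frac{1}{t}\sum_{\tau=0}^{t-1}\expect{Q(\tau)} < \infty$, so in particular the $\liminf$ of that same sequence is finite. Combining this with the displayed bound gives $\expect{Q_{av}} < \infty$, and since $Q_{av}$ is a non-negative random variable, a finite expectation forces $Q_{av} < \infty$ with probability $1$.

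There is essentially no hard step here: the result is a repackaging of Fatou's Lemma applied to the running time-average. The only point requiring a word of care is keeping the bookkeeping consistent for the \emph{possibly infinite} random variable $Q_{av}$ and the possibly infinite expectations $\expect{Q(\tau)}$, i.e.\ reading every expectation and limit as an element of $[0,\infty]$; Lemma \ref{lem:fatou} is already stated in that generality, so no additional argument is needed.
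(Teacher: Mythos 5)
Your proof is correct and takes exactly the same route as the paper: both define $X(t) = \frac{1}{t}\sum_{\tau=0}^{t-1}Q(\tau)$ and invoke Fatou's Lemma (Lemma \ref{lem:fatou}) on this running average. You simply spell out the bookkeeping (linearity of expectation, identifying $\liminf X(t)$ with $Q_{av}$, and the final finiteness implication) that the paper's one-line proof leaves implicit.
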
 
\begin{proof} 
Define $X(t) = \frac{1}{t}\sum_{\tau=0}^{t-1}Q(\tau)$, and note that $\liminf_{t\rightarrow\infty} X(t) = Q_{av}$ with probability 1. 
The result then follows as an immediate consequence of Lemma \ref{lem:fatou}. 
\end{proof} 

Using standard Markov chain theory and renewal theory, it can be shown that the limit in (\ref{eq:sample-path-strong}) 
exists as a (possibly infinite) random variable $Q_{av}$ with probability 1 whenever $Q(t)$ evolves according to a 
Markov chain such that, with probability 1, there is at least one state that is visited infinitely often
with finite mean recurrence times (not necessarily the
same state on each sample path realization).  This holds even if the Markov chain is not irreducible, not aperiodic, and/or
has an uncountably infinite state space.
The converse statement
does not hold:  Note that if 
the limit (\ref{eq:sample-path-strong}) holds with $Q_{av} < \infty$ with probability 1, this does \emph{not} always imply
that $Q(t)$ is strongly stable. The same counter-example from Subsection \ref{subsection:counterexamples-rate-mean} can be
used to illustrate this. 

\begin{lem} Let $Q(t)$ be a general non-negative random process defined over $t \in \{0, 1, 2, \ldots\}$. 
Suppose that for all $M>0$, the following limit exists as a random variable $h(M)$ with probability $1$: 
\begin{equation} \label{eq:sample-path-steady-state} 
 \lim_{t\rightarrow\infty} \frac{1}{t}\sum_{\tau=0}^{t-1} 1\{Q(\tau)> M\} = h(M) \: \: \mbox{ with probability 1}
 \end{equation} 
where $1\{Q(\tau)>M\}$ is an indicator function that is $1$ whenever $Q(\tau)>M$, and zero otherwise. 
Then the following limit for $g(M)$ is well defined: 
\[ \lim_{t\rightarrow\infty} \frac{1}{t}\sum_{\tau=0}^{t-1} Pr[Q(\tau) > M] = g(M) \]
Furthermore: 
\[ g(M) = \expect{h(M)} \]
It follows that if $Q(t)$ is steady state stable (so that $g(M) \rightarrow 0$ as $M \rightarrow \infty$), then: 
\[ \lim_{t\rightarrow\infty} h(M) = 0 \: \: \mbox{ with probability 1} \]
\end{lem}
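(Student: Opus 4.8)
The plan is to read both parts of the claim as direct applications of the Lebesgue Dominated Convergence Theorem (Lemma~\ref{lem:lebesgue}), exploiting the fact that every empirical fraction $\frac{1}{t}\sum_{\tau=0}^{t-1}1\{Q(\tau)>M\}$ lies in $[0,1]$ and hence is dominated by the integrable constant random variable $Y\equiv 1$, together with the obvious monotonicity of $h(M)$ in $M$.

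First I would prove that $g(M)=\expect{h(M)}$. Fix $M>0$ and set $X(t)\defequiv\frac{1}{t}\sum_{\tau=0}^{t-1}1\{Q(\tau)>M\}$ for $t>0$. Then $0\leq X(t)\leq 1$ for all $t$, and $X(t)\to h(M)$ with probability $1$ by hypothesis, so Lemma~\ref{lem:lebesgue}(b) with dominating variable $Y\equiv 1$ shows that $\lim_{t\to\infty}\expect{X(t)}$ exists and equals $\expect{h(M)}$. Since linearity of expectation gives $\expect{X(t)}=\frac{1}{t}\sum_{\tau=0}^{t-1}Pr[Q(\tau)>M]$, the limit defining $g(M)$ is well defined and equals $\expect{h(M)}$; in particular this limit agrees with the $\limsup$ appearing in the definition of steady state stability.

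Next I would pass to the limit $M\to\infty$. Work on the probability-$1$ event $\Omega_0$ on which $h(n)$ is simultaneously well defined for every integer $n\geq1$ (a countable intersection of probability-$1$ events). On $\Omega_0$, for integers $n_1<n_2$ the pointwise inequality $1\{Q(\tau)>n_2\}\leq 1\{Q(\tau)>n_1\}$ propagates to the partial sums and then to the limits, so $h(n_2)\leq h(n_1)$; hence $n\mapsto h(n)$ is non-increasing and bounded in $[0,1]$, and $h_\infty\defequiv\lim_{n\to\infty}h(n)$ exists in $[0,1]$. Applying Lemma~\ref{lem:lebesgue}(b) once more --- this time to the sequence $h(n)$, dominated by $Y\equiv 1$ and converging to $h_\infty$ with probability $1$ --- gives $\expect{h_\infty}=\lim_{n\to\infty}\expect{h(n)}=\lim_{n\to\infty}g(n)$. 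If $Q(t)$ is steady state stable then $g(M)\to 0$ as $M\to\infty$, so $\expect{h_\infty}=0$, and since $h_\infty\geq0$ this forces $h_\infty=0$ with probability $1$. For an arbitrary real $M$, monotonicity gives $0\leq h(M)\leq h(\lfloor M\rfloor)$ on the intersection of $\Omega_0$ with the further probability-$1$ event on which $h(M)$ itself is defined, and $h(\lfloor M\rfloor)\to 0$, so $\lim_{M\to\infty}h(M)=0$ with probability $1$.

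The only point requiring care is the measure-theoretic bookkeeping in the last paragraph: the hypothesis supplies a separate probability-$1$ event for each value of $M$, so one must first restrict to a countable cofinal family (the positive integers) to obtain a single full-measure event carrying both the monotonicity and the limit, and only afterwards recover the limit along all real $M$ via the sandwich $0\leq h(M)\leq h(\lfloor M\rfloor)$. Everything else is a mechanical invocation of dominated convergence, trivialized by the uniform bound of $1$ on time-averages of indicators.
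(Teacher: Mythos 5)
Your proposal is correct and follows exactly the route the paper takes: the paper's proof is the one-line invocation "apply Lemma~\ref{lem:lebesgue} with $X(t) = \frac{1}{t}\sum_{\tau=0}^{t-1}1\{Q(\tau)>M\}$ and $Y=1$," which is precisely your first paragraph. Your second and third paragraphs supply the measure-theoretic bookkeeping for the $M\to\infty$ step (monotonicity of $h$ along integers, a second dominated-convergence pass, and the sandwich to real $M$) that the paper's terse proof leaves implicit; these details are correct and welcome but do not constitute a different method.
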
 
\begin{proof} 
The result can be shown  by application of Lemma \ref{lem:lebesgue}, using $X(t) = \frac{1}{t}\sum_{\tau=0}^{t-1} 1\{Q(\tau)>M\}$
and $Y=1$.  
\end{proof} 

By basic renewal theory and Markov chain theory, it can be shown that the limit in (\ref{eq:sample-path-steady-state}) 
holds whenever $Q(t)$ evolves according to a Markov chain (possibly non-irreducible, non-aperiodic) and such that either 
the event $\{Q(t)>M\}$ or its complement $\{Q(t) \leq M\}$ can be written as the union of a finite number of states. 

\section*{Appendix F --- Proof of Lemma \ref{lem:drift2}} 

\begin{proof} (Lemma \ref{lem:drift2})
By definition of a $C$-approximate decision, at every slot 
 $\tau \in \{t, \ldots, t+T-1\}$ we have: 
 \begin{eqnarray*}
 V\expect{f(\hat{\bv{x}}(\alpha(\tau), \omega(\tau)))|\bv{\Theta}(t)} \\
 + \sum_{l=1}^L\expect{Z_l(\tau)g_l(\hat{\bv{x}}(\alpha(\tau), \omega(\tau))) |\bv{\Theta}(t)}\\
 + \sum_{k=1}^K\expect{Q_k(\tau)a_k(\tau)|\bv{\Theta}(t)} \\
  + \sum_{k=1}^K\expect{Q_k(\tau)[\hat{y}_k(\alpha(\tau), \omega(\tau)) - \hat{b}_k(\alpha(\tau), \omega(\tau))]|\bv{\Theta}(t)} \\
  \leq  C + 
 V\expect{f(\hat{\bv{x}}(\alpha^*(\tau), \omega(\tau)))|\bv{\Theta}(t)} \\
  + \sum_{l=1}^L\expect{Z_l(\tau)g_l(\hat{\bv{x}}(\alpha^*(\tau), \omega(\tau))) |\bv{\Theta}(t)}\\
   + \sum_{k=1}^K\expect{Q_k(\tau)a_k(\tau)|\bv{\Theta}(t)} \\
 +  \sum_{k=1}^K\expect{Q_k(\tau)[\hat{y}_k(\alpha^*(\tau), \omega(\tau)) - \hat{b}_k(\alpha^*(\tau), \omega(\tau))]|\bv{\Theta}(t)} 
 \end{eqnarray*}
 where $\alpha^*(\tau)$ is any other (possibly randomized) decision in $\script{A}_{\omega(\tau)}$. 

 However, we have for all $\tau \in \{t, \ldots, t+T-1\}$:  
 \begin{eqnarray*}
 |Z_l(\tau) - Z_l(t)| &\leq& \sum_{v=t}^{\tau-1} |g_l(\bv{x}(v))|  \\
 |Q_k(\tau) - Q_k(t)| &\leq& \sum_{v=t}^{\tau-1} [y_k(v) + a_k(v) + b_k(v)] 
 \end{eqnarray*}
 Thus: 
 \begin{eqnarray*}
 V\expect{f(\hat{\bv{x}}(\alpha(\tau), \omega(\tau)))|\bv{\Theta}(t)} \\
 + \sum_{l=1}^L\expect{Z_l(t)g_l(\hat{\bv{x}}(\alpha(\tau), \omega(\tau))) |\bv{\Theta}(t)}\\
    + \sum_{k=1}^K\expect{Q_k(t)a_k(\tau)|\bv{\Theta}(t)} \\
  + \sum_{k=1}^K\expect{Q_k(t)[\hat{y}_k(\alpha(\tau), \omega(\tau)) - \hat{b}_k(\alpha(\tau), \omega(\tau))]|\bv{\Theta}(t)} \\
  \leq  C + (\tau-t)2\expect{\hat{D} |\bv{\Theta}(t)}   \\
 + V\expect{f(\hat{\bv{x}}(\alpha^*(\tau), \omega(\tau)))|\bv{\Theta}(t)} \\
  + \sum_{l=1}^L\expect{Z_l(t)g_l(\hat{\bv{x}}(\alpha^*(\tau), \omega(\tau))) |\bv{\Theta}(t)}\\
     + \sum_{k=1}^K\expect{Q_k(t)a_k(\tau)|\bv{\Theta}(t)} \\
 +  \sum_{k=1}^K\expect{Q_k(t)[\hat{y}_k(\alpha^*(\tau), \omega(\tau)) - \hat{b}_k(\alpha^*(\tau), \omega(\tau))]|\bv{\Theta}(t)} 
 \end{eqnarray*}
 where $2\hat{D}$ is a random variable, with $\expect{\hat{D}} \leq D$, where $D$ is related
 (via Cauchy-Schwartz) 
 to the worst case second moments of 
 $g_l(\bv{x}(t))$, $a_k(t)$, $b_k(t)$, $y_k(t)$.  A more detailed description of $D$ is given at the 
 end of this subsection. 
 Summing over $\tau \in \{t, \ldots, t+T-1\}$ yields:  
  \begin{eqnarray*}
 V\expect{\sum_{\tau=0}^{t+T-1}f(\hat{\bv{x}}(\alpha(\tau), \omega(\tau)))|\bv{\Theta}(t)} \\
 + \sum_{l=1}^L\sum_{\tau=0}^{t+T-1}\expect{Z_l(t)g_l(\hat{\bv{x}}(\alpha(\tau), \omega(\tau))) |\bv{\Theta}(t)}\\
  + \sum_{k=1}^K\sum_{\tau=0}^{t+T-1}\mathbb{E}\left\{Q_k(t)[\hat{y}_k(\alpha(\tau), \omega(\tau)) 
  - \right. \\
  \left.  \hat{b}_k(\alpha(\tau), \omega(\tau))]|\bv{\Theta}(t) \right\} \\
  \leq  CT + T(T-1)\expect{\hat{D}|\bv{\Theta}(t)} \\
+ V\sum_{\tau=0}^{t+T-1}\expect{f(\hat{\bv{x}}(\alpha^*(\tau), \omega(\tau)))|\bv{\Theta}(t)} \\
  + \sum_{l=1}^L\sum_{\tau=0}^{t+T-1}\expect{Z_l(t)g_l(\hat{\bv{x}}(\alpha^*(\tau), \omega(\tau))) |\bv{\Theta}(t)}\\
 +  \sum_{k=1}^K\sum_{\tau=0}^{t+T-1}\mathbb{E}\left\{Q_k(t)[\hat{y}_k(\alpha^*(\tau), \omega(\tau)) - \right. \\\left.\hat{b}_k(\alpha^*(\tau), \omega(\tau))]|\bv{\Theta}(t)\right\} 
 \end{eqnarray*}
 
 The above inequality is an upper bound for  the right hand side of (\ref{eq:drift}), which proves the result
 of (\ref{eq:drift2}).  
 \end{proof} 
 
 For more details on $\hat{D}$ and $D$, we note that $2\hat{D}$ can be defined to be $0$ if 
 $\tau = t$, and if $\tau > t$ it is defined: 
 \begin{eqnarray*}
2\hat{D} \defequiv  \sum_{k=1}^K[\hat{y}_k(\alpha^*(\tau), \omega(\tau)) +  \hat{b}_k(\alpha^*(\tau), \omega(\tau))]\times \\
\left[ \frac{1}{\tau -t}\sum_{v=t}^{\tau-1}[y_k(v)+a_k(v)+b_k(v)]\right] \\
+  \sum_{k=1}^K[\hat{y}_k(\alpha(\tau), \omega(\tau)) + \hat{b}_k(\alpha(\tau), \omega(\tau))]\times \\
\left[ \frac{1}{\tau-t}\sum_{v=t}^{\tau-1}[y_k(v)+a_k(v)+b_k(v)]\right] \\
+ \sum_{l=1}^L|g_l(\hat{\bv{x}}(\alpha^*(\tau), \omega(\tau)))|\frac{1}{\tau-t}\sum_{v=t}^{\tau-1}|g_l(\bv{x}(v))|  \\
+ \sum_{l=1}^L|g_l(\hat{\bv{x}}(\alpha(\tau), \omega(\tau)))|\frac{1}{\tau-t}\sum_{v=t}^{\tau-1}|g_l(\bv{x}(v))|  
\end{eqnarray*}
By the Cauchy-Schwartz inequality and Jensen's inequality, it follows that $2\expect{\hat{D}} \leq 2D$, 
where $D$ is a finite constant that satisfies: 
\begin{eqnarray*}
D \geq \sum_{k=1}^K \expect{(\hat{y}_k(\alpha', \omega) + a_k(0)  + \hat{b}_k(\alpha',\omega))^2} \\
+ \sum_{l=1}^L\expect{g_l(\hat{\bv{x}}(\alpha'', \omega))^2} 
\end{eqnarray*}
where the first term represents the worst case second moment of $\hat{y}_k(\cdot)+ a_k(0)
+ \hat{b}_k(\cdot)$ over any decision $\alpha'$ (where the expectation is with respect to the stationary
distribution $\pi(\omega)$ for $\omega$), and the second term is the worst case second moment
of $g_l(\hat{\bv{x}}(\cdot))$. 

 \section*{Appendix G --- Proof of Lemma \ref{lem:driftyy}} 
 
 \begin{proof} (Lemma \ref{lem:driftyy}) 
 From (\ref{eq:multi-q-dynamics}), it can be shown that (see \cite{now}): 
 \begin{eqnarray*}
 Q_k(t+T) \leq \max\left[Q_k(t) - \sum_{\tau=t}^{t+T-1} b_k(\tau), 0\right] \\
 + \sum_{\tau=t}^{t+T-1}[a_k(\tau)  + y_k(\tau)] 
 \end{eqnarray*}
 Using the fact that for $Q\geq0$, $\mu\geq 0$, $a\geq 0$: 
 \[ \frac{1}{2}(\max[Q-\mu,0] + a)^2 \leq \frac{Q^2 + \mu^2  + a^2}{2} + Q(a - \mu)  \]
 yields: 
 \begin{eqnarray*}
 \frac{Q_k(t+T)^2}{2} \leq \frac{Q_k(t)^2}{2} + \frac{1}{2}\left(\sum_{\tau=t}^{t+T-1} b_k(\tau)\right)^2  \\
 + \frac{1}{2}\left(  \sum_{\tau=t}^{t+T-1}[a_k(\tau)  + y_k(\tau)] \right)^2 \\
 + Q_k(t)\left( \sum_{\tau=t}^{t+T-1}[a_k(\tau) + y_k(\tau) - b_k(\tau)]   \right)
 \end{eqnarray*}
 Similarly, from (\ref{eq:z-dynamics}) it can be shown that: 
 \[ Z_l(t+T) \leq \max\left[Z_l(t) + 
 \sum_{\tau=t}^{t+T-1}g_l(\bv{x}(\tau)) , \sum_{\tau=t}^{t+T-1}|g_l(\bv{x}(\tau)|\right] \]
 and so: 
  \begin{eqnarray*}
    \frac{Z_l(t+T)^2}{2} \leq \frac{Z_l(t)^2}{2} + \left(\sum_{\tau=t}^{t+T-1}|g_l(\bv{x}(\tau))|\right)^2 \\
    + Z_l(t)\sum_{\tau=t}^{t+T-1}g_l(\bv{x}(\tau))   
    \end{eqnarray*}
 Combining these, summing, and taking conditional expectations yields: 
 \begin{eqnarray*}
 \Delta_T(\bv{\Theta}(t)) \leq T^2\expect{\hat{B}|\bv{\Theta}(t)}   \\
 + \sum_{k=1}^KQ_k(t)\sum_{\tau=t}^{t+T-1}\expect{a_k(\tau) + y_k(\tau) - b_k(\tau)|\bv{\Theta}(t)}\\
 + \sum_{l=1}^LZ_l(t)\sum_{\tau=t}^{t+T-1} \expect{g_l(\bv{x}(\tau))|\bv{\Theta}(t)} 
 \end{eqnarray*}
 where $\hat{B}$ is a random variable that satisfies: 
 \begin{eqnarray*}
 \expect{\hat{B}|\bv{\Theta}(t)}  = \frac{1}{2}\sum_{k=1}^K\expect{\left(\frac{1}{T}\sum_{\tau=t}^{t+T-1}b_k(\tau)\right)^2|\bv{\Theta}(t)} \\
+   \frac{1}{2}\sum_{k=1}^K\expect{\left(\frac{1}{T}\sum_{\tau=t}^{t+T-1}[a_k(\tau)+y_k(\tau)]\right)^2|\bv{\Theta}(t)} \\
+ \sum_{l=1}^L\expect{\left(\frac{1}{T}\sum_{\tau=t}^{t+T-1}|g_l(\bv{x}(\tau))|\right)^2|\bv{\Theta}(t)} 
 \end{eqnarray*} 
 By  Jensen's inequality, 
 we have that $\expect{\hat{B}} \leq B$, where $B$ is a 
 constant that satisfies for all $t$: 
 \begin{eqnarray*}
 B \geq \frac{1}{2}\sum_{k=1}^K\expect{b_k(t)^2} + \frac{1}{2}\sum_{k=1}^K\expect{(a_k(t)+y_k(t))^2} \\
 + \sum_{l=1}^L\expect{g_l(\bv{x}(t))^2} 
 \end{eqnarray*}
 Such a finite constant exists by the second moment boundedness assumptions. 
 The result of (\ref{eq:drift}) follows by adding the following term to both sides: 
 \[ V\sum_{\tau=t}^{t+T-1} \expect{f(\bv{x}(\tau))|\bv{\Theta}(t)} \]
 \end{proof}

\bibliographystyle{unsrt}
\bibliography{../../latex-mit/bibliography/refs}
\end{document}